\theoremstyle{plain}
\newtheorem{theorem}{Theorem}[section]
\newtheorem{lemma}[theorem]{Lemma}
\theoremstyle{definition}
\theoremstyle{remark}
\newtheorem{observation}{Observation}
\newcommand{\cec}{\textsc{ECC}}
\newcommand{\minecc}{\textsc{MinECC}}
\newcommand{\maxecc}{\textsc{MaxECC}}
\DeclarePairedDelimiter\ceil{\lceil}{\rceil}
\providecommand{\mat}[1]{\boldsymbol{\mathrm{#1}}}%
\renewcommand{\vec}[1]{\boldsymbol{\mathrm{#1}}}
\DeclareMathOperator*{\maximize}{maximize}
\DeclareMathOperator{\argmin}{argmin}
\providecommand{\mA}{\ensuremath{\mat{A}}}
\providecommand{\mC}{\ensuremath{\mat{C}}}
\providecommand{\mY}{\ensuremath{\mat{Y}}}
\providecommand{\vb}{\ensuremath{\vec{b}}}
\providecommand{\vc}{\ensuremath{\vec{c}}}
\providecommand{\vx}{\ensuremath{\vec{x}}}
\providecommand{\vy}{\ensuremath{\vec{y}}}
\newcommand{\pr}[1]{\mathbb{P}\left[ #1 \right]}
\newcommand{\prc}[2]{\mathbb{P}\left[ #1 \mid #2 \right]}
\newcommand{\eim}{e \in \mathcal{M}_Y}
\icmltitlerunning{Optimal LP Rounding and Linear-Time Approximation Algorithms for Clustering Edge-Colored Hypergraphs}
\begin{document}

\twocolumn[
\icmltitle{Optimal LP Rounding and Linear-Time Approximation Algorithms for Clustering Edge-Colored Hypergraphs}



\icmlsetsymbol{equal}{*}

\begin{icmlauthorlist}
	\icmlauthor{Nate Veldt}{yyy}
\end{icmlauthorlist}

\icmlaffiliation{yyy}{Department of Computer Science and Engineering, Texas A\&M University, College Station, Texas, USA}

\icmlcorrespondingauthor{Nate Veldt}{nveldt@tamu.edu}

\icmlkeywords{Hypergraphs, clustering, linear programming}

\vskip 0.3in
]



\printAffiliationsAndNotice{} 

\begin{abstract}
We study the approximability of an existing framework for clustering edge-colored hypergraphs, which is closely related to chromatic correlation clustering and is motivated by machine learning and data mining applications where the goal is to cluster a set of objects based on multiway interactions of different \emph{categories} or \emph{types}. We present improved approximation guarantees based on linear programming, and show they are tight by proving a matching integrality gap. Our results also include new approximation hardness results, a combinatorial 2-approximation whose runtime is linear in the hypergraph size, and several new connections to well-studied objectives such as vertex cover and hypergraph multiway cut.
\end{abstract}

\section{Introduction}
Partitioning a graph into well-connected clusters is a fundamental algorithmic problem in machine learning. A recent focus in the machine learning community has been to develop algorithms for clustering problems defined over data structures that generalize graphs and capture rich information and metadata beyond just pairwise relationships. One direction has been to develop algorithms for clustering and learning over hypergraphs~\cite{panli2017inhomogeneous,li2018submodular,fountoulakis2021local,Hein2013}, which can model multiway relationships between data objects rather than just pairwise relationships. Another recent focus has been to develop techniques for clustering edge-colored graphs (and hypergraphs), where edge colors represent the \emph{type} or \emph{category} of pairwise interaction modeled by the edge~\cite{amburg2020clustering,Bonchi2015ccc,Anava:2015:ITP:2736277.2741629,klodt2021color,xiu2022chromatic,amburg2022diverse}. Edge color labels are relevant for many different applications. In co-authorship (hyper)graphs, an edge color may represent the type of publication venue or discipline~\cite{amburg2020clustering,Bonchi2012ccc}, which can be used to better ascertain the academic field an author belongs to. Edges in brain networks may have color labels to indicate differences in co-activation patterns between brain regions~\cite{Crossley11583}. In temporal graph analysis, edges within the same time period can be associated with the same color~\cite{amburg2020clustering}, which provides information about how interaction patterns evolve over time. In addition to these settings, algorithms for edge-colored graphs and hypergraphs have been applied to cluster food ingredients co-appearing in different recipes (where edge colors indicate cuisine types)~\cite{amburg2020clustering,klodt2021color,xiu2022chromatic}, genes in biological networks (where edge colors indicate  gene interaction types)~\cite{Bonchi2012ccc}, and users of social media platforms like Facebook and Twitter (where edge colors indicate types of social circles)~\cite{klodt2021color,xiu2022chromatic}.

\textbf{The Edge-Colored Clustering problem.} 
\begin{figure}[t]
	\centering
	\includegraphics[width = .75\linewidth]{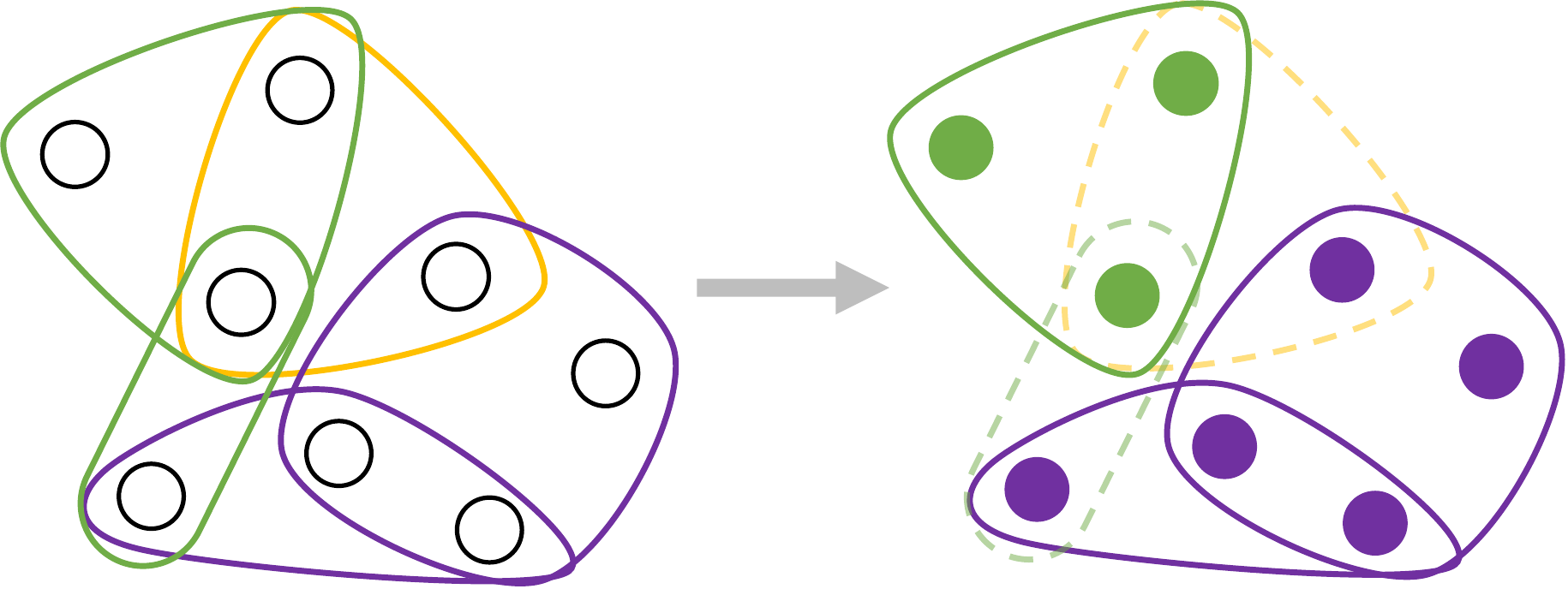}
	\vspace{-5pt}
	\caption{
		The best node coloring of this edge-colored hypergraph leaves three edges satisfied and two unsatisfied (dashed lines).}
	\label{fig:colorec}
	\vspace{-18pt}
\end{figure}
Edge-Colored Clustering (ECC) is a combinatorial optimization framework for clustering graphs and hypergraphs with edge colors.
This problem is also known by other related names such as \emph{colored clustering} or \emph{categorical edge clustering}. 
Let $H = (V,E)$ be a hypergraph where each hyperedge $e \in E$ is associated with one of $k$ colors for some $k \in \mathbb{N}$. The goal of ECC is to assign colors to \emph{nodes} in such a way that hyperedges tend to contain nodes that all have the same color as the hyperedge (see Figure~\ref{fig:colorec}). This can also be described as a clustering problem where one forms a single cluster of nodes for each color, and the goal is to do so in such a way that the cluster of nodes with color $c$ tends to contain hyperedges with color $c$. Note that the nodes in cluster $c$ do not necessarily need to be connected in the hypergraph in this formulation.
To defined the objective more precisely, if all of the nodes in a hyperedge $e \in E$ are given the same color as $e$, then the hyperedge is \emph{satisfied}, otherwise it is \emph{unsatisfied} and we say the node color assignment has made a \emph{mistake} at this hyperedge. The goal is then to set node colors in a way that minimizes the number (or weight) of unsatisfied edges (the \minecc{} objective), or to maximize the number (or weight) of satisfied edges (the \maxecc{} objective). These are equivalent at optimality but different from the perspective of approximations. ECC is NP-hard~\cite{amburg2020clustering} but permits nontrivial approximation algorithms, whose approximation factors may depend on the number of colors $k$ and the rank $r$ of the hypergraph (the maximum hyperedge size).

\textbf{Background and previous results.} \citet{angel2016clustering} were the first to study the problem over graphs (the $r = 2$ case), focusing on \maxecc{}. They showed the objective is polynomial time solvable for $k = 2$ colors but NP-hard when $k \geq 3$, and provided a $\frac{1}{e^2}$-approximation algorithm by rounding a linear programming (LP) relaxation. A sequence of follow-up papers~\cite{ageev2014improved,ageev20200,alhamdan2019approximability} improved the best approximation factor to $4225/11664 \approx 0.3622$~\cite{ageev20200}, and showed it is NP-hard to approximate above a factor $241/249 \approx 0.972$~\cite{alhamdan2019approximability}. All of these results apply to \maxecc{} and $r = 2$. \citet{cai2018alternating} also focused on graphs but considered the problem from the perspective of parameterized complexity, showing that \minecc{} and \maxecc{} are fixed-parameter tractable (FPT) in the solution size.

\citet{amburg2020clustering} initiated the study of \cec{} in hypergraphs, focusing on \minecc{}. They gave a $\min \left\{2 - \frac1k, 2 - \frac1{r+1}\right\}$ -approximation algorithm based on rounding an LP relaxation, and provided combinatorial algorithms with approximation factors scaling linearly in $r$. Finally, they showed that the problem can be reduced in an approximation-preserving way to node-weighted multiway cut (\textsc{Node-MC})~\cite{garg2004multiway}. This leads to a $2(1-1/k)$ approximation by rounding the \textsc{Node-MC} LP relaxation. \citet{amburg2022diverse} later extended this framework for the task of diverse group discovery and team formation. Recently,~\citet{kellerhals2023parameterized} gave improved FPT algorithms and parameterized hardness results. 

\textbf{Other related work.} ECC is closely related to chromatic correlation clustering~\cite{Bonchi2015ccc,Anava:2015:ITP:2736277.2741629,klodt2021color,xiu2022chromatic}, an edge-colored generalization of correlation clustering~\cite{BansalBlumChawla2004}. The reduction to \textsc{Node-MC} also situates \minecc{} within a broad class of multiway partition problems~\cite{ene2013local,garg2004multiway,Dahlhaus94thecomplexity,calinescu2000,chekuri2011approximation,chekuri2011submodular,chekuri2016simple}, such as hypergraph multiway cut (\textsc{Hyper-MC}). Appendix~\ref{app:related} provides a deeper discussion of related work.

\textbf{Our contributions.}
There are many natural questions on the approximability of ECC that are unresolved in previous work, especially relating to the less studied \minecc{} objective.~\citet{amburg2020clustering} provide an indirect $2(1-1/k)$-approximation for \minecc{} by rounding the \textsc{Node-MC} LP relaxation in a reduced graph. When $k \leq 2(r+1)$, this is better than the approximation factor they obtain by rounding the canonical \minecc{} LP relaxation. An open question is to understand the exact relationship between these LP relaxations. In general, can we improve on the best LP rounding algorithms? For combinatorial algorithms, the best existing approximation factors scale linearly in $r$~\cite{amburg2020clustering}. Can we design a combinatorial algorithm whose approximation factor is constant with respect to $r$ and $k$? Finally, although the problem is known to be NP-hard and \maxecc{} is known to be APX-hard~\cite{alhamdan2019approximability}, can we say anything more about approximation hardness for \minecc{}? 

We answer all of these open questions, and in the process prove new connections to other well-studied combinatorial objectives. In terms of linear programming algorithms:
\vspace{-8pt}
\begin{itemize}[leftmargin=5pt, itemsep = -0em]
	\item We prove that the \minecc{} LP relaxation is always at least as tight as the \textsc{Node-MC} LP relaxation, and in some cases is strictly tighter (Theorem~\ref{thm:lps}).
	\item We improve the best approximation factors for rounding the \minecc{} LP relaxation from $\min \left\{2 - \frac1k, 2 - \frac1{r+1}\right\}$ to $\min \left\{2 - \frac2k, 2 - \frac{2}{r+1}\right\}$ (Theorems~\ref{thm:kthm}, \ref{thm:rthm}, and~\ref{thm:r2}).
	\item We prove a matching integrality gap, showing that our LP rounding schemes are in fact optimal (Lemma~\ref{lem:intgap}).
\end{itemize}
\vspace{-8pt}
 For the graph objective ($r = 2$), our approximation factor is $\frac43$, improving on the previous best $\frac53$-approximation~\cite{amburg2020clustering}.
Our approximation result for $r = 2$ is the most challenging and in-depth result in our paper, and relies on a new technique for using auxiliary linear programs to bound the worst-case approximation factor obtained when rounding the \minecc{} LP. This proof requires a careful case analysis; we also formally prove why alternative strategies for trying to round the LP relaxation (which appear potentially easier at first glance) will in fact fail to provide a $\frac43$-approximation (Lemma~\ref{lem:badcases}). 
Aside from our linear programming results, our contributions include the following:
\vspace{-8pt}
\begin{itemize}[leftmargin=5pt, itemsep = -0em]
	\item We prove that \textsc{Vertex Cover} is reducible to \minecc{} in an approximation preserving way (Theorem~\ref{thm:vc2cec}), implying APX-hardness. The details of our reduction allow us to further prove 
	UGC-hardness of approximation below $2 - (2 + o_r(1))\frac{ \ln \ln r}{\ln r}$ (for arbitrarily large $k$). 
	The reduction also implies hardness results for hypergraph \maxecc{}.
	\item We prove that hypergraph \minecc{} is reducible to \textsc{Vertex Cover} in an approximation preserving way (Theorem~\ref{thm:cec2vc}). Explicitly forming the reduced \textsc{Vertex Cover} instance lead to combinatorial 2-approximation algorithms that run in time $O(\sum_{v \in V} d_v^2 + |E|^2)$.
	\item We develop more careful 2-approximation algorithms that run in $O(\sum_{e \in E} |e|)$, i.e., linear in the hypergraph size.
\end{itemize}
Our results improve significantly on the previous best approximation guarantees for \minecc{} and are also tight or near-tight with respect to different types of lower bounds. Our LP rounding scheme is optimal in that in matches the integrality gap we show. All of our algorithms for hypergraph \minecc{} also have asymptotically optimal approximation factors assuming the unique games conjecture, since approximating \textsc{Vertex Cover} by a constant smaller than 2 is UGC-hard~\cite{khot2008vertex}. Our combinatorial algorithms also have asymptotically optimal runtimes, since it takes $\Omega(\sum_{e \in E} |e|)$ time simply to read the hypergraph input. In addition to our theoretical results, we confirm in numerical experiments that our algorithms are very practical.
Our empirical contributions include a new large benchmark dataset for edge-colored hypergraph clustering and a very fast method that uses additional heuristics to improve the practical performance of our combinatorial algorithms.

 \section{\minecc{} and LP Framework} 
\label{sec:lpframework}
An instance of \minecc{} is given by an edge-colored hypergraph $H = (V,E, C, \ell)$ with node set $V$ and edge set $E$; $w_e \geq 0$ represents the nonnegative weight for $e \in E$. We use the term \emph{edge} even in the hypergraph setting (rather than \emph{hyperedge}), to use uniform terminology between the graph and hypergraph setting.  $C =  [k] = \{1,2, \hdots k\} $ is a set of edge colors and $r$ denotes the maximum hyperedge size. The function $\ell \colon  E \rightarrow C$ maps each edge to a color in $C$, and $E_c \subseteq E$ denotes the set of edges with color $c \in C$.

Let $Y$ be a map from nodes to colors where $Y[v] \in C$ is the color of node $v$. For $e \in E$, if there is any node $v \in e$ such that $Y[v] \neq \ell(e)$, the map $Y$ has made a mistake at edge $e$, and this incurs a penalty of $w_e$. This is equivalent to partitioning nodes into $k$ clusters, with each cluster corresponding to one color, in a way that minimizes the weight of edges that are not completely contained in the cluster with a matching color. Given $Y$, let $\mathcal{M}_Y \subseteq E$ denote the set of edges where $Y$ makes a mistake. The objective is 
\begin{equation}
	\label{eq:cecobj}
	(\minecc) \;\;\;
{\textstyle	\min_{Y} \;  \sum_{e \in E} w_e \mathbbm{1}_{\mathcal{M}_Y}(e),}
\end{equation}
where $\mathbbm{1}_{\mathcal{M}_Y}$ is the indicator function for edge mistakes and the minimization is over all valid node colorings $Y$. The canonical LP relaxation for this objective is
\begin{equation}
	\label{eq:ceclp}
	\begin{array}{lll}
		\min & {\sum_{e\in E}} w_e x_e & \\ 
		\text{s.t.} & \sum_{i = 1}^k x_v^i = k -1 & \forall v \in V \\
		& x_e \geq x_v^c & \text{if $e \in E_c$ where $c \in C$} \\
		& 0 \leq x_v^i \leq 1 &\forall v \in V \text{ and }i \in C\\
		&  0 \leq x_e \leq 1& \forall e \in E.
	\end{array}
\end{equation}
The variable $x_u^i$ can be interpreted as the distance between node $u$ and color $i$. Every node color map $Y$ can be translated into a binary feasible solution for this LP by setting $x_v^i = 0$ if $Y[v] = i$ and $x_v^i = 1$ otherwise, and by setting $x_e = 1$ when $e \in \mathcal{M}_Y$ and $x_e = 0$ otherwise. The constraints $x_e \geq x_v^c$ for $e \in E_c$ and the nonnegativity of $w_e$ together imply that $x_e = \max_{v \in e} x_v^c$ at optimality.

\subsection{Generic LP Rounding Algorithm}
Algorithm~\ref{alg:gen} is our generic algorithm for rounding the \minecc{} LP relaxation, which takes an interval $I \subseteq [0,1]$ as input. It generates a random threshold $\rho \in I$, and identifies the set of nodes $v$ satisfying $x_v^i < \rho$ for each cluster $i$. If $x_v^i < \rho$, we say that color $i$ ``wants'' node $v$ and that $i$ is a candidate color for node $i$. A node may have more than one candidate color, so we generate a random permutation of $\{1,2, \hdots k\}$ that defines the priority of each color. In Algorithm~\ref{alg:gen}, a color has higher priority if it comes later in the permutation $\pi$. We then define $Y[v]$ to be the color that has the highest priority among all nodes that want $v$. Nodes that are not wanted by any color are given an arbitrary color.  

The rounding schemes of~\citet{amburg2020clustering} amount to running Algorithm~\ref{alg:gen} with a fixed threshold (i.e., $I$ is a single point). Here we use a random threshold from an interval $I$, similar to rounding strategies for convex relaxations of other multiway cut objectives~\cite{calinescu2000,ene2013local,chekuri2016simple}. Our goal is to bound the probability that $e \in \mathcal{M}_Y$ for an arbitrary $e \in E$. 
\begin{observation}
	\label{obs:approx}
	Let $Y$ be the output of Algorithm~\ref{alg:gen} for some interval $I \subseteq [0,1]$. If $\pr{e \in \mathcal{M}_Y} \leq p x_e$ for every $e \in E$, the output node coloring $Y$ is a $p$-approximate solution for \minecc{}, since the expected cost of $Y$ is
\begin{equation*}
	\mathbb{E}\Big[ \sum_{e \in E} w_e \mathbbm{1}_{\mathcal{M}_Y}(e) \Big] = \sum_{e \in E} w_e \cdot \pr{\eim} \leq p \sum_{e \in E} w_ex_e.
\end{equation*}
\end{observation}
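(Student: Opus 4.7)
The plan is to combine two elementary ingredients: linearity of expectation and the fact that the LP~\eqref{eq:ceclp} is a relaxation of \minecc{}. The proof is essentially a one-line chain of (in)equalities, but it is worth spelling out where each step comes from.

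First I would write out the expected cost of the random coloring $Y$ using linearity of expectation. Since each $w_e$ is a deterministic nonnegative weight and $\mathbbm{1}_{\mathcal{M}_Y}(e)$ is a $\{0,1\}$-valued random variable with mean $\pr{\eim}$, the expected total weight of mistakes equals $\sum_{e \in E} w_e \cdot \pr{\eim}$. This is the first equality displayed in the statement.

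Second, I would apply the hypothesis $\pr{\eim} \leq p\, x_e$ termwise. Because $w_e \geq 0$, multiplying through and summing preserves the inequality, giving $\sum_e w_e \pr{\eim} \leq p \sum_e w_e x_e$, which is the second relation in the statement. This step uses only nonnegativity of $w_e$ and the pointwise bound that Algorithm~\ref{alg:gen} is assumed to satisfy.

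Finally, to justify calling $Y$ a $p$-approximate solution, I would observe that the variables $\{x_v^i, x_e\}$ used by the algorithm come from an optimal feasible solution of the LP~\eqref{eq:ceclp}. As noted just after~\eqref{eq:ceclp}, any valid node coloring $Y^\star$ for \minecc{} can be translated into a feasible binary solution of~\eqref{eq:ceclp} with the same objective value, so $\sum_e w_e x_e$ lower-bounds the optimum of \minecc{}. Combining this with the previous bound gives $\mathbb{E}[\sum_e w_e \mathbbm{1}_{\mathcal{M}_Y}(e)] \leq p \cdot \text{OPT}_{\minecc}$, which is the definition of a randomized $p$-approximation. There is no real obstacle here; the observation is a direct consequence of linearity of expectation plus LP relaxation, and the genuine work of the paper lies in proving the pointwise bound $\pr{\eim} \leq p\, x_e$ for well-chosen intervals $I$, carried out in the subsequent theorems.
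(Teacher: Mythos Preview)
Your proposal is correct and matches the paper's approach exactly: the paper treats the observation as essentially self-proving, with the displayed chain of (in)equalities constituting the entire argument via linearity of expectation, nonnegativity of $w_e$, and the LP relaxation lower bound. You have simply spelled out each step more explicitly than the paper does.
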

\vspace{-10pt}
Given this observation, in order to prove approximation guarantees for Algorithm~\ref{alg:gen}, we will simply focus on bounding $\pr{e \in \mathcal{M}_Y}$ under different conditions on $I$, $r$, and $k$.
\begin{algorithm}[tb]
	\caption{\textsf{GenColorRound}($H,I$)}
	\label{alg:gen}
	\begin{algorithmic}[5]
		\STATE{\bfseries Input:} $H = (V,E,C, \ell)$, interval $I \subseteq [0,1]$
		\STATE {\bfseries Output:} Node coloring map $Y \colon V \rightarrow C$
		\STATE Solve the LP-relaxation~\eqref{eq:ceclp}
		\STATE $\rho \leftarrow$ uniform random value in $I$
		\STATE $\pi \leftarrow$ uniform random permutation of $\{1,2, \hdots, k\}$
		\STATE For $i \in \{1, 2, \hdots k\}$ define $S_i = \{v \in V : x_v^i < \rho \}$
		\FOR{$i = 1$ to $k$}
		\FOR{$v \in S_{\pi(i)}$}
		\STATE $Y[v] = \pi(i)$
		\ENDFOR
		\ENDFOR
		\STATE If $v \notin \bigcup_i S_i$, set $Y[v]$ to an arbitrary color
	\end{algorithmic}
\end{algorithm}
\begin{figure}[t]
	\centering
	\includegraphics[width = \linewidth]{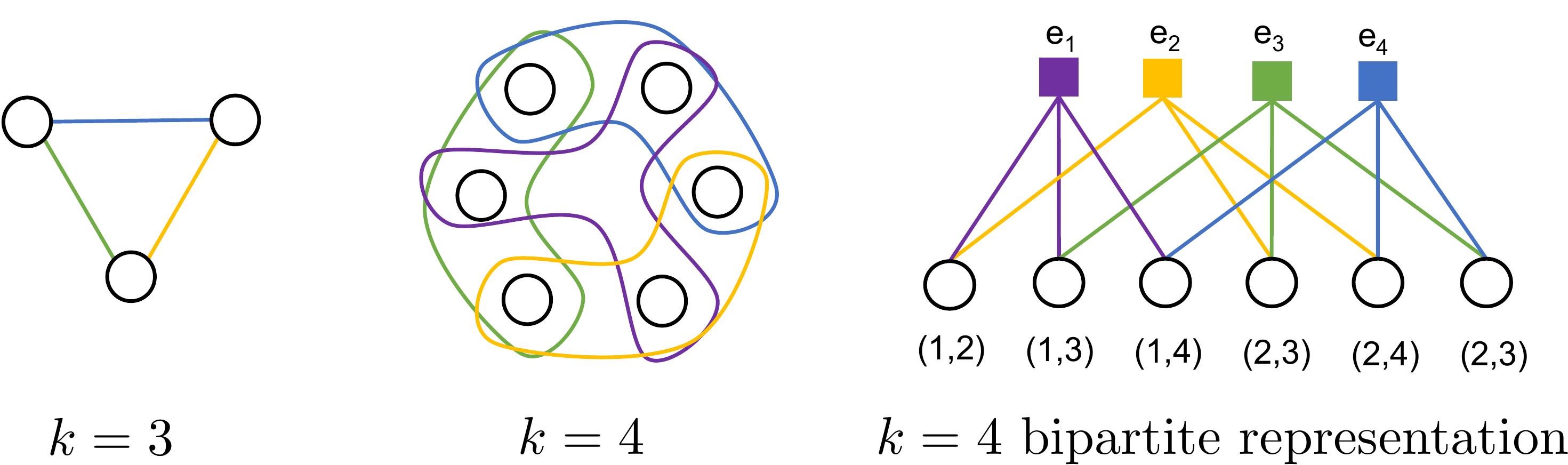}
	\vspace{-15pt}
	\caption{For integers $k \geq 3$ there is an edge-colored hypergraph for which the \minecc{} LP integrality gap is $2(1 - 1/k) = 2(1-1/(r+1))$. The instance involves $k$ edges of different colors, and for each pair of distinct edges there is one node in the intersection. 	
	This can be visualized using a bipartite representation: there is a hyperedge-node (squares) for each color, and each standard node (circles) is defined by pairs of hyperedge-nodes.
	}
	\label{fig:gap}
	\vspace{-15pt}
\end{figure}
\vspace{-10pt}

\subsection{Linear Program Integrality Gap}
Before proving new approximation guarantees, we establish a new integrality gap result for the canonical \minecc{} LP. This is related to integrality gap results for convex relaxations of \textsc{Node-MC}~\cite{garg2004multiway} and \textsc{Hyper-MC}~\cite{ene2013local}. For these problems, an integrality gap was given only in terms of the number of clusters $k$. For \minecc{}, we establish a gap in terms of $k$ and $r$ simultaneously by considering a hypergraph where $k = r+1$. See Figure~\ref{fig:gap} for an illustration and the appendix for a proof.
\begin{lemma}
	\label{lem:intgap}
	For every integer $k \geq 3$, there exists an instance $H = (V,E, C, \ell)$ of \minecc{} with $k = r+1$ whose optimal solution makes $k - 1 = r$ mistakes, and for which the LP relaxation has a value of $\frac{k}{2}= \frac{r+1}{2}$. Thus, the integrality gap is $2\big(1-\frac1k\big) = 2\big(1- \frac{1}{r+1}\big)$. 
\end{lemma}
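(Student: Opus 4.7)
The plan is to describe the extremal instance carefully, show its integer optimum is exactly $k-1$ by an intersection argument, and then exhibit a simple symmetric fractional LP solution of value $k/2$. The construction is as follows: take $k$ edges $e_1, \ldots, e_k$ with $\ell(e_c) = c$, and for each unordered pair $\{i,j\}$ with $i \neq j$ introduce a single node $v_{ij}$ that belongs to exactly the two edges $e_i$ and $e_j$. Then each edge has size $r = k-1$, so $k = r+1$ as claimed, and the total number of nodes is $\binom{k}{2}$. All edges are taken to have unit weight.

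For the integer lower bound I would argue that at most one edge can be satisfied. If $e_i$ and $e_j$ are both satisfied for $i \neq j$, then the shared node $v_{ij}$ would need to have both color $i$ and color $j$, which is impossible. Hence any valid coloring $Y$ satisfies at most one edge and therefore incurs at least $k-1$ mistakes; the bound is achieved (for instance) by coloring every node with color $1$, which satisfies $e_1$ and no other edge.

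For the LP upper bound I would construct a symmetric fractional solution where, for each node $v_{ij}$, I set $x_{v_{ij}}^i = x_{v_{ij}}^j = \tfrac{1}{2}$ and $x_{v_{ij}}^c = 1$ for every $c \notin \{i,j\}$. The sum constraint is then $2 \cdot \tfrac{1}{2} + (k-2) \cdot 1 = k-1$, as required. For each edge $e_c$, every node $v \in e_c$ has $x_v^c = \tfrac{1}{2}$, so the tight choice $x_{e_c} = \max_{v \in e_c} x_v^c = \tfrac{1}{2}$ is feasible. The LP objective evaluates to $\sum_{c=1}^k x_{e_c} = k/2$, proving the integrality gap is at least $(k-1)/(k/2) = 2(1 - 1/k) = 2(1 - 1/(r+1))$.

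Essentially no step is a genuine obstacle; the only thing requiring a little care is verifying that the edge inequalities $x_e \geq x_v^c$ for $e \in E_c$ are satisfied simultaneously with the equality $\sum_i x_v^i = k-1$ for every node. The observation that makes this trivial is that each node lies in only two edges, so only two of its $k$ color variables need to be ``small'' and the rest can be set to $1$ to absorb the sum constraint. Optionally one can add the matching upper-bound remark that this gap is $2(1 - 1/k)$, exactly the approximation ratio achieved by the rounding analysis in Theorems~\ref{thm:kthm} and~\ref{thm:rthm}, so the construction is tight on both sides.
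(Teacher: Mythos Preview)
Your proposal is correct and follows essentially the same construction and argument as the paper: the same $\binom{k}{2}$-node instance with one edge per color, the same pairwise-intersection argument showing at most one edge can be satisfied, and the same symmetric fractional solution $x_{v_{ij}}^i = x_{v_{ij}}^j = \tfrac12$. If anything, your write-up is slightly more careful in explicitly verifying the sum constraint and exhibiting a coloring achieving $k-1$ mistakes.
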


\subsection{Comparison with the \textsc{Node-MC} LP Relaxation}
\citet{amburg2020clustering} showed that \minecc{} reduces to \textsc{Node-MC} in an approximation-preserving way, implying a $2(1-1/k)$ approximation based on rounding the \textsc{Node-MC} LP relaxation. When $k \leq 2(r+1)$, this is better than the approximation guarantee the authors obtain for directly rounding the canonical \minecc{} LP. Despite this, we show that the canonical LP is tighter than the \textsc{Node-MC} LP. 
\begin{theorem}
	\label{thm:lps}
	The value of the \minecc{} LP relaxation is always at least as large as (and can be strictly larger than) the lower bound obtained by reducing to \textsc{Node-MC} and using the \textsc{Node-MC} LP relaxation.
\end{theorem}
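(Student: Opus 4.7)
The plan is to prove Theorem~\ref{thm:lps} in two parts. For the non-strict inequality $\mathrm{OPT}_{\text{MinECC LP}} \ge \mathrm{OPT}_{\text{Node-MC LP}}$, I would construct a cost-preserving map from feasible solutions of the \minecc{} LP into feasible solutions of the \textsc{Node-MC} LP on the reduced instance. For the strict inequality part, I would exhibit a small hypergraph on which the two LP optima differ. Since both LPs lower-bound the same integer optimum, such a map immediately gives $\mathrm{OPT}_{\text{Node-MC LP}} \le \mathrm{OPT}_{\text{MinECC LP}}$.

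First, I would recall the reduction of \citet{amburg2020clustering}: introduce a terminal $t_c$ of infinite weight for each color $c \in C$, and for each hyperedge $e$ with $\ell(e)=c$ introduce an auxiliary node $u_e$ of weight $w_e$ joined by unit edges to $t_c$ and to every $v \in e$; original vertices $v \in V$ also have infinite weight. Given a \minecc{} LP solution $(x_v^i, x_e)$, the natural candidate map keeps the vector $x_v^i$ on every original vertex, pins each terminal by $x_{t_c}^c = 0$ and $x_{t_c}^i = 1$ for $i \neq c$, and assigns the auxiliary node $u_e$ a fractional ``removal'' of $x_e$, with its remaining simplex coordinates completed so that $\sum_i x_{u_e}^i = k-1$ in a way compatible with the \textsc{Node-MC} LP edge constraints.

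The verification would then break into two steps. First, the \minecc{} LP constraint $x_e \ge x_v^{\ell(e)}$ for every $v \in e$, together with the terminal pinning, is exactly what is needed to satisfy the \textsc{Node-MC} LP edge constraints along the unit edges $(t_c, u_e)$ and $(u_e, v)$ incident to $u_e$. Second, the \textsc{Node-MC} LP objective evaluates to $\sum_e w_e x_e$, since only the finite-weight auxiliary nodes contribute and each $u_e$ contributes $w_e$ times $x_e$. Combined, these show $\mathrm{OPT}_{\text{Node-MC LP}} \le \sum_e w_e x_e$ for every feasible $(x_v^i, x_e)$, proving the non-strict inequality.

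For strict inequality, I would turn to a small concrete example---for instance a triangle with three singleton-color edges, or a pruned version of the Figure~\ref{fig:gap} instance---and compute both LPs directly to expose a gap. The main obstacle will be pinning down the precise \textsc{Node-MC} LP formulation (Garg--Vazirani--Yannakakis path cuts, simplex embedding, or the edge-split reduction to edge multiway cut) used in the reduction, and then carrying out the auxiliary-node coordinate bookkeeping; the central structural fact making the \minecc{} LP tighter is that the constraint $x_e \ge x_v^{\ell(e)}$ couples an edge's required color to each of its endpoints' color coordinates simultaneously, which is a strictly stronger requirement than the unit-edge Lipschitz constraints on the reduced graph can impose.
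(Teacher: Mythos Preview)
Your plan matches the paper's proof almost exactly: map a feasible \minecc{} LP solution to a feasible \textsc{Node-MC} LP solution of equal cost by pinning terminals, keeping $x_v^i$ on original nodes, and setting the deletion variable of each auxiliary node $u_e$ to $x_e$; then exhibit a small instance for strict inequality.

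Two concrete points to fix. First, when you verify the edge constraint at $(u_e,v)$ for a color $i \neq \ell(e)$, the inequality $x_e \ge x_v^{\ell(e)}$ alone is not enough; you also need the simplex constraint $\sum_j x_v^j = k-1$, which yields $x_v^i + x_v^{\ell(e)} \ge 1$ and hence $1 \le x_v^i + x_e$. That is how the non-$\ell(e)$ coordinates of $v$ enter the argument, and your sketch currently attributes the whole verification to $x_e \ge x_v^{\ell(e)}$.

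Second, your first candidate for the strict-inequality example---a triangle with three distinct edge colors---is precisely the $k=3$ case of the Figure~\ref{fig:gap} integrality-gap instance, and there the \textsc{Node-MC} LP value is also $3/2$: every terminal-to-terminal path in the reduced graph passes through exactly two auxiliary nodes, giving the constraints $d_{u_{e_i}} + d_{u_{e_j}} \ge 1$ for all $i\neq j$, whose LP optimum is $3/2$. So no strict gap appears on the triangle. The paper instead takes a star with center $v_0$ and three leaves, each edge a distinct color: the \minecc{} LP equals the integer optimum $2$ there (the center forces $\sum_e x_e \ge 2$ via $x_{e_i} \ge x_{v_0}^i$ and $\sum_i x_{v_0}^i = 2$), while the \textsc{Node-MC} LP drops to $3/2$ by putting $d_{u_{e_i}} = 1/2$.
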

Appendix~\ref{app:multiway} provides a more formal statement of this result and a proof, along with additional details on the relationship between \minecc{} and multiway cut objectives. In particular, we detail the approximation-preserving reductions from \minecc{} to \textsc{Node-MC} and \textsc{Hyper-MC}, and show that the best approximation results obtained via these reductions are in general not as strong as the approximation guarantees we develop using the \minecc{} LP relaxation.

\section{Optimal LP Rounding for Hypergraphs}
\label{sec:hypergraphs}
This section covers our optimal rounding scheme for the \minecc{} LP relaxation when $r > 2$, i.e., the hypergraph case. We provide our $\min \left\{2 - \frac2k, 2 - \frac{2}{r+1}\right\}$-approximation by combining two different approximation guarantees, obtained by running Algorithm 1 using two different intervals $I$. Most proofs are deferred to the appendix.

\textbf{Proof overview.}
Throughout the section we consider an arbitrary fixed edge $e$ with color $c = \ell(e)$ and LP variable $x_e = \max_{v \in e} x_v^c$. Recall that we say color $i$ ``wants'' node $v$ when $x_v^i < \rho$ for the randomly chosen threshold $\rho \in I$. 
In order to guarantee there is no mistake at $e$, color $c$ must want every node in $e$, which is true if and only if $\rho > x_e$. 
 Even if $\rho > x_e$, there is a chance of making a mistake at $e$ if there are any other colors that want nodes from $e$. Our goal is to bound the probability of making a mistake at $e$ in terms of $x_e$. If $\pr{e \in \mathcal{M}_Y} \leq p x_e$ for some $p$, then Observation~\ref{obs:approx} guarantees Algorithm~\ref{alg:gen} is a $p$-approximation. 

\textbf{The color threshold value.} To aid in our results, we define the \emph{first color threshold} of $e$ to be 
$z_1^{(e)} = \min_{v \in e, i \neq c} x_v^i$.
This is the smallest value such that $\rho > z_1^{(e)}$ implies there exists a color other than $c$ that wants a node in $e$. 
\begin{observation}
	\label{xw}
	$1 - z_1^{(e)} \leq x_e$.
\end{observation}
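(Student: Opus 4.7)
The plan is to exploit the simplex-like LP constraint $\sum_{i=1}^k x_v^i = k-1$ together with the box constraints $x_v^i \le 1$. Specifically, if one coordinate $x_v^{i^*}$ for some non-edge color $i^*$ is very small, then to make the entries sum to $k-1$ and still respect the upper bound of $1$ on the remaining coordinates, the coordinate $x_v^c$ associated with the edge's own color $c = \ell(e)$ must itself be large. This is exactly the inequality we want, after remembering that $x_e$ dominates $x_v^c$ for every $v \in e$.

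Concretely, first I would unpack the definition of $z_1^{(e)}$ to extract a witness: pick $v^* \in e$ and $i^* \neq c$ such that $x_{v^*}^{i^*} = z_1^{(e)}$. Then I would apply the equality constraint at node $v^*$,
\begin{equation*}
x_{v^*}^c \;=\; (k-1) \;-\; x_{v^*}^{i^*} \;-\; \sum_{\substack{i \neq c \\ i \neq i^*}} x_{v^*}^i,
\end{equation*}
and use the upper bounds $x_{v^*}^i \le 1$ on each of the $k-2$ remaining terms to conclude
\begin{equation*}
x_{v^*}^c \;\ge\; (k-1) \;-\; z_1^{(e)} \;-\; (k-2) \;=\; 1 - z_1^{(e)}.
\end{equation*}

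Finally, I would invoke the LP's edge constraint: since $e \in E_c$, the inequality $x_e \ge x_v^c$ holds for every $v \in e$, and in particular $x_e \ge x_{v^*}^c \ge 1 - z_1^{(e)}$, which is the claim. There is no real obstacle here; the only step worth pausing on is confirming that the LP's normalization $\sum_i x_v^i = k-1$ combined with the individual upper bounds $x_v^i \le 1$ is what forces $x_{v^*}^c$ to be large when one other coordinate is small. The result would fail, for instance, if the LP only imposed an inequality $\sum_i x_v^i \le k-1$, so it is worth noting that the equality form of the constraint is essential.
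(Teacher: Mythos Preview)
Your proof is correct and takes essentially the same approach as the paper. The paper picks the same witness $v,j$ with $z_1^{(e)}=x_v^j$, rewrites the simplex constraint as $\sum_i (1-x_v^i)=1$, and uses nonnegativity of the terms to get $x_v^j + x_v^c \geq 1$, then invokes $x_e \geq x_v^c$; your version using the upper bounds $x_v^i \leq 1$ on the other $k-2$ coordinates is algebraically equivalent.
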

To prove the observation, let $j \neq c$ and $v \in e$ be chosen so that $z_1^{(e)} = x_v^j$. The first constraint in the \minecc{} LP implies that $\sum_{i = 1}^k (1-x_v^i) = 1 \implies 2 - x_v^j - x_v^c \leq 1$, so $1 \leq x_v^j + x_v^c \leq z_1^{(e)} + x_e$. We use this observation in the proofs of Theorems~\ref{thm:kthm} and~\ref{thm:rthm}. 

To simplify notation, we can write $z_1$ instead of $z_1^{(e)}$, while still noting that this value is specific to edge $e$. We keep the 1 in the subscript of $z_1$ since this is the \emph{first} color threshold. We will define other color thresholds later for our rounding schemes for the graph case ($r = 2$), but this is not necessary for the results in this section.

\begin{theorem}
	\label{thm:kthm}
	Algorithm~\ref{alg:gen} with $I = (\frac12, \frac34)$ is a $2\left(1 - \frac1k\right)$-approximation for \minecc{}.
\end{theorem}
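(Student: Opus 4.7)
By Observation~\ref{obs:approx}, it suffices to show that $\pr{e \in \mathcal{M}_Y} \leq 2\left(1-\tfrac{1}{k}\right)x_e$ for every edge $e$ with color $c = \ell(e)$. I would decompose the mistake event into two parts: (i) $\rho \leq x_e$, in which case some node $v \in e$ has $x_v^c \geq \rho$ and is therefore not wanted by color $c$, making $e$ a guaranteed mistake; and (ii) $\rho > x_e$ but some color $j \neq c$ that wants a node in $e$ receives higher priority than $c$ in the permutation $\pi$. Since $\rho$ is uniform on $(\tfrac12,\tfrac34)$, the probability of (i) is $\max(0, 4x_e - 2)$. For (ii), a competing color must exist at all, which requires $\rho > z_1$, and conditioned on $\rho > \max(x_e, z_1)$ the set $J(e) \subseteq \{1,\ldots,k\}\setminus\{c\}$ of competing colors is nonempty of size at most $k-1$; by symmetry of $\pi$, the probability that at least one color in $J(e)$ beats $c$ equals $|J(e)|/(|J(e)|+1) \leq 1 - \tfrac{1}{k}$.

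Combining (i) and (ii) and using Observation~\ref{xw} to get $z_1 \geq 1 - x_e$, I would then perform a case split on $x_e$. For $x_e \leq \tfrac14$ we get $z_1 \geq \tfrac34$, so $\rho \leq z_1$ always and neither event can occur. For $\tfrac14 < x_e \leq \tfrac12$ event (i) is impossible, and $\pr{\rho > z_1} \leq \pr{\rho > 1 - x_e} = 4x_e - 1$, yielding the bound $(1-\tfrac1k)(4x_e - 1) \leq 2(1-\tfrac1k)x_e$ since $4x_e - 1 \leq 2 x_e$ when $x_e \leq \tfrac12$. For $\tfrac12 < x_e \leq \tfrac34$ I would use $\pr{\rho > \max(x_e,z_1)} \leq \pr{\rho > x_e} = 3 - 4x_e$, giving
\begin{equation*}
\pr{e \in \mathcal{M}_Y} \leq (4x_e - 2) + \left(1-\tfrac{1}{k}\right)(3 - 4x_e) = \frac{4x_e + k - 3}{k},
\end{equation*}
and the desired inequality $\tfrac{4x_e + k - 3}{k} \leq \tfrac{2(k-1)x_e}{k}$ simplifies to $2(k-3) x_e \geq k - 3$, which holds for every $k \geq 3$ because $x_e > \tfrac12$. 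Finally for $x_e > \tfrac34$ the probability of mistake is trivially at most $1$, and $2(1-\tfrac1k)x_e > 2(1-\tfrac1k)\cdot\tfrac34 \geq 1$ since $k/(2(k-1)) \leq \tfrac34$ whenever $k \geq 3$.

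The main obstacle is choosing the interval $I = (\tfrac12,\tfrac34)$ so that the two competing cost contributions -- the unavoidable loss when $\rho \leq x_e$ and the competition loss when higher-priority colors steal nodes -- balance exactly at the target factor $2(1-1/k)$, which will happen in the regime $\tfrac12 < x_e \leq \tfrac34$ where both terms contribute. The lower endpoint $\tfrac12$ is forced by Observation~\ref{xw}: it makes $z_1 \geq 1 - x_e$ large enough to suppress competition whenever $x_e$ is small, and the upper endpoint $\tfrac34$ prevents the deterministic-miss probability from overwhelming the budget. Once the interval is fixed, the remainder is the routine but careful case analysis above.
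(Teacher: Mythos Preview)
Your argument mirrors the paper's almost exactly: the same case split on $x_e$, the same use of Observation~\ref{xw} to lower-bound $z_1$, and the same $(k-1)/k$ bound from the random permutation. The only gap is that you never handle $k=2$. Both of your last two cases explicitly invoke $k\geq 3$, and indeed the per-edge inequality $\pr{\eim}\leq x_e$ that you would need for $k=2$ can fail over arbitrary feasible LP points: for a single-node edge $e=\{v\}$ with $x_v^c=0.6$ and $x_v^{c'}=0.4$ one computes $\pr{\eim}=0.4\cdot 1+0.6\cdot\tfrac12=0.7>0.6=x_e$. The paper resolves this by observing that for $k=2$ the LP constraint matrix is totally unimodular, so every basic optimal solution is integral and Algorithm~\ref{alg:gen} recovers it exactly; add that sentence at the start and your proof is complete.
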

\vspace{-18pt}
\begin{proof}
	If $k = 2$, the constraint matrix for the LP relaxation is totally unimodular, so every basic feasible LP solution will have binary variables and our rounding procedure will find the optimal solution. Assume for the rest of the proof that $k \geq 3$. By Observation~\ref{obs:approx}, it suffices to show that $\pr{\eim} \leq 2 \left(1 - \frac{1}{k}\right) x_e$.
	%
	If $x_e \geq \frac34$, then $\pr{\eim} \leq 1 \leq \frac43x_e \leq 2\left(1 - \frac1k\right)x_e$ for $k \geq 3$. We break up the remainder of the proof into two cases. 
	%
	
	\textbf{Case 1:  $x_e < 1/2$.} 
	For $\rho \in (\frac12, \frac34)$, color $c = \ell(e)$ will want all nodes in $e$ because $x_e < \frac12$.
	Observation~\ref{xw}
	implies that $z_1 > \frac12$. If $z_1 \geq \frac34$, then no other color $i \neq c$ will want a node from $e$, meaning that $\pr{\eim} = 0$. If $\frac12 < z_1 \leq \frac34$, it is possible to make a mistake at $e$ only if $\rho > z_1$, and even then the probability of making a mistake is at most $\frac{k-1}{k}$, since there is a $\frac{1}{k}$ probability that color $c$ is given the highest priority by the random permutation $\pi$ in Algorithm~\ref{alg:gen}. Combining this with Observation~\ref{xw} shows
	\begin{align*}
		&\pr{\eim} =  \pr{\rho > z_1}\prc{\eim}{\rho > z_1} \\
		&\leq \frac{\frac34 - z_1}{{\frac34 - \frac12}} \cdot \frac{k-1}{k} =  4\left(\frac{k-1}{k}\right) \left( 1 - z_1 - \frac14\right) \\
		&\leq 4\left(1 - \frac1k\right)  \left(x_e - \frac{x_e}{2} \right) = 2\left(1 - \frac1k\right) x_e.
	\end{align*}
	\textbf{Case 2:  $x_e \in \left[\frac12, \frac34\right)$.} 
	We apply a similar set of steps to see 	\begin{align*}
		\pr{\eim} &= \pr{\rho \leq x_e} \prc{\eim}{\rho \leq x_e}  \\ & \;\;\; + \pr{\rho > x_e} \prc{\eim}{\rho > x_e} \\
		&= \frac{x_e - \frac12}{{\frac34 - \frac12}}\cdot 1 + \frac{\frac34 - x_e}{{\frac34 - \frac12}} \cdot \frac{k-1}{k}  \\
		&= 4 \left( \frac{x_e}{k} + \frac{k-3}{4k} \right) \leq 4 \left( \frac{1}{k} + \frac{k - 3}{2k} \right) x_e \\
		&= 2 \left(1 - \frac1k\right) x_e.
	\end{align*}
\end{proof}
\vspace{-10pt}
Theorem~\ref{thm:rthm} provides an approximation guarantee in terms of $r$, using a different interval. The main difference from Theorem~\ref{thm:kthm} is that we use the fact that $\rho < \frac23$ to bound the number of different colors that want a node in $e$, in terms of $r$. This allows us to bound the probability of making a mistake at $e$ in terms of $r$ instead of in terms of $k$.
\begin{theorem}
	\label{thm:rthm}
	Algorithm~\ref{alg:gen} with $I = (\frac12, \frac23)$ is a $2\big(1 - \frac{1}{r+1}\big)$-approximation for \minecc{} when $r > 2$.
\end{theorem}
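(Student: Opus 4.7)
The plan is to follow the same template as the proof of Theorem~\ref{thm:kthm}: invoke Observation~\ref{obs:approx} and show $\pr{e \in \mathcal{M}_Y} \leq \frac{2r}{r+1}x_e$ for an arbitrary edge $e$ of color $c$ with LP value $x_e$. The new ingredient that replaces the $k$-dependence with an $r$-dependence is the shorter interval $I = (1/2, 2/3)$. Since the random $\rho$ always satisfies $\rho < 2/3$, the LP constraint $\sum_{i=1}^{k}(1-x_v^i) = 1$ together with nonnegativity of $1-x_v^i$ forces $1 - x_v^i > 1/3$ for every color $i$ that wants $v$, so at most two colors can want any single node. When $\rho > x_e$, color $c$ wants every node in $e$, so at most one additional color wants each node of $e$, and the set $T \subseteq C$ of colors wanting some node in $e$ satisfies $|T| \leq r+1$. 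Because $\pi$ is uniformly random, the conditional probability that some color in $T \setminus \{c\}$ ends up with highest priority is $(|T|-1)/|T| \leq r/(r+1)$.

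Once this key bound is established, the rest is a case split on $x_e$. For $x_e \geq 2/3$ I will use the trivial bound $\pr{e \in \mathcal{M}_Y} \leq 1$, which is dominated by $\tfrac{2r}{r+1}x_e \geq \tfrac{4r}{3(r+1)}$ precisely when $r \geq 3$. For $x_e < 1/2$, every $\rho \in I$ satisfies $\rho > 1/2 > x_e$, so $c$ wants every node in $e$; Observation~\ref{xw} gives $z_1 \geq 1 - x_e > 1/2$, so either $z_1 \geq 2/3$ (yielding $\pr{e \in \mathcal{M}_Y} = 0$) or
\[
\pr{e \in \mathcal{M}_Y} \;\leq\; \frac{2/3 - z_1}{1/6}\cdot \frac{r}{r+1} \;\leq\; (6x_e - 2)\cdot \frac{r}{r+1} \;\leq\; \frac{2r}{r+1}\, x_e,
\]
where the last inequality uses $x_e < 1/2$. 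For $x_e \in [1/2, 2/3)$ I condition on $\rho \leq x_e$ (with trivial bound $1$) versus $\rho > x_e$ (with conditional bound $r/(r+1)$), which gives
\[
\pr{e \in \mathcal{M}_Y} \;\leq\; 6\!\left(x_e - \tfrac{1}{2}\right) + 6\!\left(\tfrac{2}{3} - x_e\right)\cdot \frac{r}{r+1};
\]
a short algebraic rearrangement shows this is at most $\tfrac{2r}{r+1}x_e$ exactly when $(r-3)(2x_e - 1) \geq 0$, which holds under the joint hypotheses $r \geq 3$ and $x_e \geq 1/2$.

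The main obstacle is the clean derivation of $|T| \leq r+1$, since it is the one genuinely new step and the only place the interval length $1/6$ and upper endpoint $2/3$ are used simultaneously; I would state it as a short lemma before entering the case analysis so that it is reusable across the three cases. Everything else closely mirrors Theorem~\ref{thm:kthm}, with one subtlety worth flagging: the hypothesis $r > 2$ is genuinely required, as both the $x_e \geq 2/3$ bound and the Case 2 algebra collapse at $r = 2$ (the approximation factor $4/3$ is not achievable by Algorithm~\ref{alg:gen} with this interval in the graph setting, which is precisely why the paper handles $r = 2$ separately in Theorem~\ref{thm:r2}).
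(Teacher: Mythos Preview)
Your proposal is correct and essentially identical to the paper's own proof: the same three-way case split on $x_e$, the same use of Observation~\ref{xw} to bound $z_1$, and the same key observation that $\rho < 2/3$ forces at most two colors to want any single node (hence $|T|\leq r+1$ when $\rho > x_e$). The only cosmetic differences are that you isolate the $|T|\leq r+1$ bound as a preliminary lemma and finish Case~2 by factoring to $(r-3)(2x_e-1)\geq 0$, whereas the paper embeds the bound inline and carries out the same algebra in expanded form.
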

Combining Theorems~\ref{thm:kthm} and~\ref{thm:rthm} provides an optimal rounding strategy, matching the integrality gap from Lemma~\ref{lem:intgap}.
\section{Optimal LP Rounding for Graphs}
\label{sec:graphs}
If $r = 2$, Theorem~\ref{thm:rthm} can be slightly adjusted to prove that when $I = (\frac12, \frac23)$, Algorithm~\ref{alg:gen} is a $\frac{3}{2}$-approximation for \minecc{}. 
Although this improves on the previous~$\frac53$-approximation~\cite{amburg2020clustering}, it does not match the integrality gap of $\frac43$ established by Lemma~\ref{lem:intgap}. This section shows how to use a larger interval $I = (\frac12, \frac78)$ to obtain a $\frac43$-approximation guarantee for the graph version ($r = 2$). The overall proof strategy mirrors our results for hypergraphs, but it requires several new technical results and a substantially more in-depth analysis. 

\subsection{Proof Setup, Overview, and Challenges}
We consider a fixed arbitrary edge $e = (u,v)$ with color $c = \ell(e)$ and LP variable $x_e = \max\{x_u^c, x_v^c\}$.
We will show that for every {feasible} solution to the LP relaxation,
\begin{equation}
	\label{eq:main43}
	\pr{e \in \mathcal{M}_Y} \leq \frac43 x_e.
\end{equation}
By Observation~\ref{obs:approx}, this guarantees the $\frac43$-approximation. Inequality~\eqref{eq:main43} is easy to show for extreme values of $x_e$.
\begin{lemma}
	\label{lem:easyx}
	If $x_e \notin (\frac18,\frac34)$, for Algorithm~\ref{alg:gen} with $I = (\frac12, \frac78)$, then $\pr{e \in \mathcal{M}_Y} \leq \frac43 x_e$.
\end{lemma}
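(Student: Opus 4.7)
The plan is to split the two boundary regions and handle each one directly from the setup already in place. The claim is that Algorithm~\ref{alg:gen} with $I = (\tfrac12,\tfrac78)$ satisfies $\pr{e \in \mathcal{M}_Y} \leq \tfrac43 x_e$ whenever $x_e \leq \tfrac18$ or $x_e \geq \tfrac34$. I expect this to be genuinely easy; the section's framing already signals that the real difficulty lives in the interior window $(\tfrac18,\tfrac34)$, which is precisely what subsequent lemmas will have to handle.

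The large-$x_e$ branch is a one-liner: when $x_e \geq \tfrac34$, the trivial bound $\pr{e \in \mathcal{M}_Y} \leq 1 \leq \tfrac43 x_e$ already gives what we need. No property of the algorithm beyond $\pr{\cdot}\leq 1$ is used here, and there is nothing subtle to worry about.

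For the small-$x_e$ branch, the key is to combine Observation~\ref{xw} with the choice of $I$. Suppose $x_e \leq \tfrac18$. Then by Observation~\ref{xw}, $z_1 \geq 1 - x_e \geq \tfrac78$. Since the threshold $\rho$ is drawn from $I = (\tfrac12, \tfrac78)$, we have $\rho < \tfrac78 \leq z_1$ with probability $1$. By the definition of $z_1^{(e)} = \min_{v\in e,\, i \neq c} x_v^i$, this means that for every node $v \in e$ and every color $i \neq c$, $x_v^i \geq z_1 \geq \rho$, so no color other than $c = \ell(e)$ wants any node of $e$. Meanwhile $x_v^c \leq x_e \leq \tfrac18 < \tfrac12 < \rho$ for both $v \in e$, so color $c$ wants both endpoints. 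Since $c$ is the only color that wants either endpoint, the color permutation $\pi$ cannot interfere, and Algorithm~\ref{alg:gen} assigns both endpoints the color $c$. Hence $e \notin \mathcal{M}_Y$ with probability $1$, and $\pr{e \in \mathcal{M}_Y} = 0 \leq \tfrac43 x_e$.

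Putting the two cases together establishes the lemma. The only pitfall worth double-checking is the edge case $x_e = \tfrac18$: here $z_1 \geq \tfrac78$ gives $\rho < \tfrac78 \leq z_1$ because the interval $I$ is open on the right, so the strict inequality $x_v^i < \rho$ that encodes ``color $i$ wants $v$'' fails for every $i \neq c$. No other step requires any care.
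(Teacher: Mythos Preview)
Your proof is correct and follows essentially the same approach as the paper's. The only cosmetic difference is that you invoke Observation~\ref{xw} (which packages the constraint $\sum_j x_v^j = k-1$) to get $z_1 \geq 1 - x_e \geq \tfrac78$, whereas the paper cites the LP constraint directly; the underlying argument is identical.
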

\begin{proof}
	If $x_e \geq \frac34$, then $\pr{e \in \mathcal{M}_Y} \leq 1 \leq \frac43 x_e$. If $x_e \leq \frac18$, then for every $\rho \in (\frac12, \frac78)$, color $c = \ell(e)$ wants both $u$ and $v$, and the constraint $\sum_{j = 1}^k x_u^j = k-1$ can be used to show that no other color wants either $u$ or $v$, so $\pr{e \in \mathcal{M}_Y} = 0$.
\end{proof}
The following subsections cover the case $x_e \in (\frac18, \frac34)$, which is far more challenging. The difficulty in proving inequality~\eqref{eq:main43} for this case is that the probability of making a mistake depends heavily on the relationship among LP variables $x_e$ and $\{x_u^i, x_v^i \colon i \in [k]\}$, and more specifically the relative ordering of these variables. Since $k$ can be arbitrarily large, there can be many LP variables to consider, many possible orderings of these variables, and many different colors that want nodes $u$ and $v$. The best strategy for bounding $\pr{e \in \mathcal{M}_Y}$ depends on the configuration of LP variables, leading to an in-depth case analysis.

To provide a systematic proof for all cases, we first introduce a new set of variables $\{z_i\}_{i \in [k-1]}$ that can be viewed as a convenient rearrangement of the node-color distance variables $\{x_u^i, x_v^i \colon i \in [k]\}$. These generalize the \emph{first color threshold} used in Section~\ref{sec:hypergraphs}, and indicate the points at which there is a change in the number of distinct colors that ``want'' a node in $e$. We prove a key lemma on the relationship between $x_e$ and the $\{z_i\}_{i \in [k-1]}$ variables (Lemma~\ref{lem:xw}), and then show how to express $\pr{e \in \mathcal{M}_Y}$ in terms of these variables for different possible feasible LP solutions. We finally prove $\pr{e \in \mathcal{M}_Y} \leq \frac43x_e$ under different possible conditions by solving small auxiliary linear programs.

\subsection{The Color Threshold Lemma}
\label{sec:colorthresholds}
For $i \in \{0, 1, 2, \hdots, k-1\}$, the $i$th \textbf{color threshold} $z_i$ is the smallest nonnegative value such that for every $\rho > z_i$, there are at least $i$ distinct colors not equal to $c = \ell(e)$ that want a node in $e$.
This definition does not require the $i$ colors to all want the \emph{same} node in $e$. More precisely, for each of the $i$ colors, there exists a node $v \in e$ that is wanted by that color. By definition, the color threshold values are monotonic:
\begin{equation}
	\label{eq:wmonotone}
	0 = z_0 \leq z_1 \leq z_2 \leq \cdots \leq z_{k-1} \leq 1. 
\end{equation}
These values make it easier to express the probability of making a mistake at $e$ if we know $\rho > x_e$ and we know the value of $\rho$ relative to the color threshold values. Formally,
\begin{align}
	\label{prconditionedonw}
{\textstyle	\prc{e  \in \mathcal{M}_Y}{\rho > x_e \text{ and } z_i < \rho \leq z_{i+1}} = \frac{i}{i+1}.}
\end{align}
The following lemma is a generalization of Observation~\ref{obs:approx}, though its proof is more involved.
\begin{lemma}
	\label{lem:xw}
	For every integer $t \leq \frac{k}{2}$, 
		\begin{equation}
			\label{eq:keyinequality}
			t \leq x_e + z_t + z_{t+1} + \cdots + z_{2t-1}.
		\end{equation}
\end{lemma}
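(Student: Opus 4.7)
The plan is to re-interpret each color threshold $z_i$ as an order statistic of a natural sequence and then invoke a single LP equality at one carefully chosen endpoint of $e=(u,v)$. For each color $j\neq c$, set $m_j=\min(x_u^j,x_v^j)$; color $j$ wants a node of $e$ exactly when $\rho>m_j$, so the definition in Section~\ref{sec:colorthresholds} identifies $z_i$ with the $i$th smallest value of $\{m_j:j\neq c\}$. The hypothesis $t\leq k/2$ rewrites as $2t-1\leq k-1$, which guarantees that $z_t,z_{t+1},\dots,z_{2t-1}$ all exist.

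Next I would consider the $2t-1$ (distinct) colors $j_1,\dots,j_{2t-1}$ with $m_{j_i}=z_i$ and classify each by which endpoint realizes its minimum: put $j_i\in J_u$ if $x_u^{j_i}=m_{j_i}$, and $j_i\in J_v$ otherwise (ties broken arbitrarily). Since $|J_u|+|J_v|\geq 2t-1>2(t-1)$, at least one of $|J_u|,|J_v|$ must be $\geq t$; by the symmetric roles of $u$ and $v$ in the argument I may assume $|J_u|\geq t$ and fix any $T\subseteq J_u$ with $|T|=t$.

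To close the argument I would invoke the LP equality $\sum_{i=1}^k(1-x_u^i)=1$: restricting the sum to the $(t+1)$-element set $\{c\}\cup T$ and using that each $(1-x_u^i)\geq 0$ gives $x_u^c+\sum_{j\in T}x_u^j\geq t$. For every $j\in T\subseteq J_u$, one has $x_u^j=m_j=z_{r(j)}$ for a distinct rank $r(j)\in\{1,\dots,2t-1\}$, so $\sum_{j\in T}x_u^j$ is the sum of $t$ distinct entries of $\{z_1,\dots,z_{2t-1}\}$ and therefore at most the sum of the $t$ largest of them, namely $z_t+z_{t+1}+\cdots+z_{2t-1}$. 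Combined with $x_u^c\leq x_e$ this yields $t\leq x_e+z_t+\cdots+z_{2t-1}$, which is exactly~\eqref{eq:keyinequality}.

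The main obstacle to overcome is the pigeonhole step. A more symmetric approach that adds the LP inequality at $u$ and at $v$ simultaneously picks up $x_u^c+x_v^c\leq 2x_e$ and produces only $2x_e+(z_1+\cdots+z_t)\geq t$, which is too weak in both the coefficient of $x_e$ and the indices appearing on the right. The point of pulling $2t-1$ colors from the sorted list (rather than $t$) is precisely so that pigeonhole forces at least $t$ of them to be witnessed by the \emph{same} endpoint; this simultaneously removes the extra factor of $x_e$ and pushes the summation to the larger indices $z_t,\dots,z_{2t-1}$ instead of $z_1,\dots,z_t$.
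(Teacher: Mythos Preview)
Your proof is correct and follows essentially the same approach as the paper: pigeonhole on $2t-1$ colors to concentrate $t$ of them at a single endpoint, then apply the LP equality $\sum_i(1-x_u^i)=1$ at that endpoint together with $x_u^c\leq x_e$. Your order-statistic identification $z_i=\text{$i$th smallest }m_j$ is a mild streamlining: the paper instead picks the $h=t$ colors with smallest $x_u$-value at the chosen endpoint and proves $x_u^{c_j}\leq z_{j+t-1}$ by a short contradiction argument, whereas your choice of $T\subseteq J_u$ gives $x_u^j=z_{r(j)}$ as an equality and lets you bound the sum directly by the $t$ largest of $z_1,\dots,z_{2t-1}$.
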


\subsection{Auxiliary LPs for Bounding Probabilities }
If we know where the color threshold values $\{z_i\}_{i \in [k-1]}$ are located relative to $x_e$ and the endpoints of $I = (\frac12, \frac78)$, we can derive an expression for $\pr{e \in \mathcal{M}_Y}$ in terms of these values when applying Algorithm~\ref{alg:gen}. The goal would then be to apply a sequence of carefully chosen inequalities to prove that this expression is bounded above by $\frac43x_e$. This subsection shows how to accomplish this task when we already know the relative ordering of these variables. In the next subsection, we apply this strategy to guarantee that this holds for \emph{all} possible orderings. 
The following lemmas both rely on Lemma~\ref{lem:xw}.
\begin{lemma}
	\label{lem:lp12}
	If $x_e \in (\frac18, \frac12)$ and $z_{p-1} \leq \frac12\leq z_p \leq z_{q} \leq \frac78 \leq z_{q+1}$ for integers $p \leq q$, then $p = 1$, $q \leq 6$, and $\pr{e \in \mathcal{M}_Y} \leq \frac{8}{3}A_{q} x_e$ where $A_{q}$ is the optimal solution to
	\begin{equation}
		\label{lpA}
		\begin{array}{lll}
			\textnormal{max} & {\displaystyle \frac{q}{q+1}\frac78\chi - \sum_{j = 1}^q \frac{1}{j(j+1)} \omega_j }\\
			\textnormal{s.t.} 
			&\textbf{(A\{i\}) } \omega_{i} - \omega_{i+1} \leq 0 \textnormal{ for $i = 1, \hdots, 5$ } \\
			&\textbf{(A6) } \chi - \omega_1 \leq 1 \\
			&\textbf{(A7) } 2\chi - \omega_2 - \omega_3 \leq 1 \\
			&\textbf{(A8) }3 \chi - 3\omega_5\leq 1 \\
			&\textbf{(A9) } -\chi \leq -2\;\;  \textbf{ (A10) } \omega_q - \frac78 \chi \leq 0. 
		\end{array}
	\end{equation}	
\end{lemma}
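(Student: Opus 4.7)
First, Observation~\ref{xw} applied to $e$ gives $1 - z_1 \leq x_e < \tfrac12$, so $z_1 > \tfrac12$, and therefore the only integer $p$ satisfying $z_{p-1} \leq \tfrac12 \leq z_p$ is $p = 1$. To bound $q$ I would argue by contradiction: if $q \geq 7$ then $z_4 \leq z_5 \leq z_6 \leq z_7 \leq \tfrac78$ by \eqref{eq:wmonotone} together with the hypothesis $z_q \leq \tfrac78$, so Lemma~\ref{lem:xw} with $t = 4$ yields $4 \leq x_e + z_4 + z_5 + z_6 + z_7 \leq x_e + \tfrac{7}{2}$, contradicting $x_e < \tfrac12$; hence $q \leq 6$.

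Next, since $x_e < \tfrac12 < \rho$ for every $\rho \in I = (\tfrac12, \tfrac78)$, equation~\eqref{prconditionedonw} gives mistake probability $j/(j+1)$ whenever $\rho \in (z_j, z_{j+1}]$, and the thresholds $\tfrac12 \leq z_1 \leq \cdots \leq z_q \leq \tfrac78 \leq z_{q+1}$ partition $I$ into the subintervals on which the mistake probability is piecewise constant, with the final piece being $(z_q, \tfrac78)$. Integrating against the uniform density $\tfrac{8}{3}$ on $I$ and telescoping via summation by parts yields the closed form
\[
\pr{\eim} \;=\; \frac{8}{3}\left[\frac{q}{q+1}\cdot\tfrac78 \;-\; \sum_{j=1}^{q}\frac{z_j}{j(j+1)}\right].
\]

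Finally, I would make the substitution $\chi = 1/x_e$ and $\omega_j = z_j/x_e$ for $j \in \{1, \ldots, 6\}$ (padding by any monotone extension if some $z_j$ is not supplied by the instance). Dividing the displayed identity by $x_e$ shows that $\pr{\eim}/x_e$ equals $\tfrac{8}{3}$ times the LP objective in~\eqref{lpA} evaluated at $(\chi, \omega_1, \ldots, \omega_6)$, so the bound $\pr{\eim} \leq \tfrac{8}{3} A_q x_e$ follows once I verify LP-feasibility. Constraints (A1)--(A5) restate monotonicity~\eqref{eq:wmonotone}; (A6) rewrites Observation~\ref{xw}; (A7) is Lemma~\ref{lem:xw} with $t = 2$; (A8) combines Lemma~\ref{lem:xw} with $t = 3$ and the monotonicity $z_3, z_4 \leq z_5$; (A9) is the hypothesis $x_e \leq \tfrac12$; and (A10) is $z_q \leq \tfrac78$.

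The main obstacle is the telescoping identity and the recognition that the auxiliary LP~\eqref{lpA} is engineered to encode precisely the linear inequalities among $(x_e, z_1, \ldots, z_q)$ that Observation~\ref{xw} and Lemma~\ref{lem:xw} (applied at $t \in \{2,3\}$) supply. Once this correspondence is set up, the conclusions $p = 1$, $q \leq 6$, and the LP upper bound all drop out of the same family of inequalities, with no additional case analysis required beyond the contradiction used to cap $q$.
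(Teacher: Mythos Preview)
Your proposal is correct and follows essentially the same approach as the paper's proof: both first establish $p=1$ via Observation~\ref{xw} and $q\leq 6$ via Lemma~\ref{lem:xw} with $t=4$, then derive the telescoped closed form for $\pr{\eim}$, and finally substitute $\chi=1/x_e$, $\omega_j=z_j/x_e$ and verify that each LP constraint encodes either monotonicity, an instance of Lemma~\ref{lem:xw} (at $t\in\{1,2,3\}$, the last weakened via $z_3,z_4\le z_5$), or a boundary assumption. The only cosmetic difference is that you phrase the bound on $q$ as a contradiction and add an explicit remark about padding $\omega_j$ for $j>q$, which the paper leaves implicit.
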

We derive an analogous result for the case $x_e \in [\frac12, \frac34)$.
\begin{lemma}
	\label{lem:lp34}
	If $x_e \in [\frac12, \frac34)$ and $z_{p-1} \leq x_e \leq z_p \leq z_{q} \leq \frac78 \leq z_{q+1}$ for integers $p \leq q$, then $p \leq 5$, $q \leq 10$, and $\pr{e \in \mathcal{M}_Y} \leq \frac83 B_{p,q} x_e$ where $B_{p,q}$ is the solution to
		\begin{equation}
		\label{lpB}
		\begin{array}{lll}
			\textnormal{max}  & {\displaystyle \frac1p + \left(\frac{q}{q+1}\frac78 -\frac12\right) \chi - \sum_{j = p}^q \frac{1}{j(j+1)} \omega_j   }\\
			\textnormal{s.t.} 
			& \textbf{(B\{i\}) } \omega_{i} - \omega_{i+1} \leq 0  \textnormal{ for $i = 1, \hdots, 9$ } \\
			&\textbf{(B10) } \chi - \omega_1 \leq 1 \\
			&\textbf{(B11) } 2\chi - \omega_2 - \omega_3 \leq 1 \\
			&\textbf{(B12) }3 \chi - \omega_3 - \omega_4 - \omega_5\leq 1 \\
			& \textbf{(B13) }4 \chi - 4\omega_7 \leq 1 \;\; \textbf{ (B14) } \omega_{p-1} \leq 1 \\
			& \textbf{(B15) } -\omega_p \leq -1 \;\;\;\;\;  \textbf{ (B16) } \omega_q - \frac78 \chi\leq 0.
		\end{array}
	\end{equation}	
\end{lemma}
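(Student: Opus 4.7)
The approach is to express $\pr{e \in \mathcal{M}_Y}$ explicitly as an affine function of $x_e$ and the color thresholds $z_p, \ldots, z_q$, and then recognize the resulting bound on $\pr{e \in \mathcal{M}_Y}/x_e$ as exactly $\frac{8}{3}$ times the objective of LP~\eqref{lpB} evaluated at a feasible point. Since the LP's optimum is $B_{p,q}$, the desired inequality $\pr{e \in \mathcal{M}_Y} \leq \frac{8}{3}B_{p,q}x_e$ follows immediately.

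I would first establish the bounds $p \leq 5$ and $q \leq 10$ directly from Lemma~\ref{lem:xw}. If $p \geq 6$, then monotonicity~\eqref{eq:wmonotone} gives $z_3, z_4, z_5 \leq z_{p-1} \leq x_e < \frac34$, and Lemma~\ref{lem:xw} at $t = 3$ yields $3 \leq x_e + z_3 + z_4 + z_5 < 4 \cdot \frac34 = 3$, a contradiction. If $q \geq 11$, then $z_6, \ldots, z_{11} \leq z_q \leq \frac78$, and Lemma~\ref{lem:xw} at $t = 6$ gives $6 \leq x_e + z_6 + \cdots + z_{11} < \frac34 + 6 \cdot \frac78 = 6$, another contradiction.

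Next, I would compute $\pr{e \in \mathcal{M}_Y}$ as a weighted average over $\rho \in I = (\frac12, \frac78)$ with uniform density $\frac{8}{3}$. The hypothesis partitions this interval into $(\frac12, x_e]$ (mistake with probability $1$ because $\rho \leq x_e$), $(x_e, z_p]$ (probability $\frac{p-1}{p}$ by~\eqref{prconditionedonw}), successive subintervals $(z_j, z_{j+1}]$ for $p \leq j < q$ (probability $\frac{j}{j+1}$), and $(z_q, \frac78]$ (probability $\frac{q}{q+1}$). Telescoping the resulting weighted sum should yield
\begin{equation*}
\pr{e \in \mathcal{M}_Y} = \tfrac{8}{3}\!\left[\tfrac{x_e}{p} + \tfrac{7q}{8(q+1)} - \tfrac12 - \sum_{j=p}^q \tfrac{z_j}{j(j+1)}\right]\!,
\end{equation*}
with the coefficient of each $z_j$ collapsing uniformly via the identity $\frac{j-1}{j} - \frac{j}{j+1} = -\frac{1}{j(j+1)}$, including at the boundary $j = p$ where the ``short'' subinterval $(x_e, z_p]$ contributes.

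Finally, substituting $\chi = 1/x_e$ and $\omega_j = z_j/x_e$ turns $\pr{e \in \mathcal{M}_Y}/x_e$ into $\frac{8}{3}$ times the LP~\eqref{lpB} objective, so it only remains to verify feasibility of this substitution: constraints (B1)--(B9) restate monotonicity~\eqref{eq:wmonotone}; (B10)--(B12) are Lemma~\ref{lem:xw} at $t = 1,2,3$ divided through by $x_e$; (B13) follows from Lemma~\ref{lem:xw} at $t = 4$ combined with monotonicity $z_4 + z_5 + z_6 + z_7 \leq 4 z_7$; and (B14)--(B16) rescale the hypothesized orderings $z_{p-1} \leq x_e \leq z_p$ and $z_q \leq \frac78$. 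I expect the main obstacle to be the careful bookkeeping in the telescoping step, specifically ensuring the coefficient of $z_p$ matches the general formula $-\frac{1}{p(p+1)}$ despite the boundary subinterval having width $z_p - x_e$ rather than $z_p - z_{p-1}$; the remaining constraint verifications are direct rescalings of results already available from Section~\ref{sec:colorthresholds}.
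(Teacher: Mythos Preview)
Your proposal is correct and follows essentially the same approach as the paper's proof: both establish $p\le 5$ and $q\le 10$ from Lemma~\ref{lem:xw} (you argue by contradiction, the paper shows $z_5>\tfrac34$ and $z_{11}>\tfrac78$ directly, which is equivalent), both partition $(\tfrac12,\tfrac78)$ at $x_e,z_p,\ldots,z_q$ and telescope to the same closed form for $\pr{M}$, and both verify feasibility of $(\chi,\omega_j)=(1/x_e,\,z_j/x_e)$ for (B1)--(B16) via monotonicity, Lemma~\ref{lem:xw} at $t=1,2,3,4$, and the hypothesized orderings. The only cosmetic difference is that the paper records the final bound as $\pr{M}/x_e \le \tfrac83(\cdots)$ rather than an equality, but your telescoping computation and coefficient check for $z_p$ are exactly what the paper does.
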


In order to prove that $\pr{e \in \mathcal{M}_Y} \leq \frac43 x_e$, it remains to show that for all valid choices of $p$ and $q$, the optimal solutions LP~\eqref{lpA} and LP~\eqref{lpB} are bounded above by $\frac12$. The result will then follow by Lemmas~\ref{lem:lp12} and~\ref{lem:lp34}. The difficulty is that there are many valid choices of $p$ and $q$ to check, each of which requires a different linear program. When $x_e \in (\frac18,\frac12)$, Lemma~\ref{lem:lp12} indicates that $p =1$ and $q \in \{1,2,3,4,5,6\}$, so we must consider six cases. When $x_e \in [\frac12,\frac34)$, Lemma~\ref{lem:lp34} shows that any pair $(p,q)$ satisfying $p \leq 5$, $q \leq 10$, and $p \leq q$ is a valid case, and there are 40 such pairs. If we are content with simply computing numerical solutions for each case, we can quickly confirm that the optimal solution for each of the 46 linear programs is bounded above by $\frac12$.
In the next subsection, we will show how to obtain a complete \emph{analytical} proof by using LP duality theory to extract a set of inequalities for each case that will prove an upper bound on the solution to each auxiliary LP. Note that there are many additional valid constraints that we could add to LPs~\eqref{lpA} and~\eqref{lpB}. We have omitted some constraints and have simplified others in order to obtain the smallest and simplest constraint set that suffices to prove $\pr{\eim} \leq \frac43x_e$. Adding more constraints makes the analysis more cumbersome without improving the bound.


%

\subsection{Using LP Duality for Case Analysis}
\label{sec:caseanalysis}
For each valid choice of $q$, we can use LP duality to bound the optimal solution of LP~\eqref{lpA} above by $\frac12$. The same basic steps also work for LP~\eqref{lpB}. The dual of LP~\eqref{lpA} is another LP with a variable for each constraint $\textbf{A\{i\}}$. By LP duality theory, every feasible solution to the dual provides an upper bound on the primal LP objective, so it suffices to find a set of dual variables with a dual LP value of $\frac12$ or less.


\textbf{Proof for $q = 1$.} We illustrate this for LP~\eqref{lpA} when $q = 1$. The primal LP objective for this case is $\frac{7}{16} \chi - \frac12 \omega_1$, and we must bound this above by $\frac12$. Computing an optimal solution for the dual LP, we find that the constraint $\textbf{(A6)}$ corresponds to a dual variable of $\frac12$,  constraint $\textbf{(A9)}$ has a dual variable of $\frac{1}{16}$, and all other dual variables are zero. Multiplying constraints by the dual variables and summing produces a sequence of inequalities that proves the desired result:
\begin{align*}
	& \text{$\textbf{(A6)}$ $\chi - \omega_1 \leq 1$ and $\textbf{(A9)}$ $-\chi \leq -2$}:\\
	\implies &\frac12  (\chi  - \omega_1) + \frac{1}{16}(-\chi ) \leq    \frac12 ( 1 ) + \frac{1}{16} (-2)\\
 \implies &\frac{7}{16} \chi - \frac12 \omega_1 \leq \frac38 \leq \frac12 \text{ for all valid $\chi, \omega_1$.}
\end{align*}
Table~\ref{tab:lp12} in Appendix~\ref{app:dual} provides dual variables for LP~\eqref{lpA} for all other choices of $q$, and outlines a similar strategy for bounding LP~\eqref{lpB}. We conclude the following result.
\begin{lemma}
	\label{lem:hardx}
	For any integer $q \in \{1,2, \hdots, 6\}$, the optimal solution to linear program~\eqref{lpA} is at most $\frac12$, and for any pair of integers $(p,q)$ satisfying $1 \leq p \leq 5$ and $p \leq q \leq 10$, the optimal solution to linear program~\eqref{lpB} is at most $\frac12$. Hence, for every $x_e \in (\frac18, \frac34)$, $\pr{\eim} \leq \frac43 x_e$ where $Y$ is the node coloring returned by Algorithm~\ref{alg:gen} when $I = (\frac12, \frac78)$.
\end{lemma}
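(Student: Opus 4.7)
The plan is to dispose of the LP bounds by exhibiting explicit dual certificates, then assemble everything through Lemmas~\ref{lem:easyx},~\ref{lem:lp12}, and~\ref{lem:lp34}. Specifically, for each valid pair $(p,q)$ I would write down the dual of LP~\eqref{lpA} or LP~\eqref{lpB}, guess a feasible dual solution of value at most $\frac12$, and invoke weak duality; since the duals are finite LPs on a handful of nonnegative variables with explicit right-hand sides, the dual value is a constant independent of $\chi$ and the $\omega_j$. The $q=1$ case for LP~\eqref{lpA} is already worked in the text and serves as a template: dual variables $\tfrac12$ on constraint \textbf{(A6)} and $\tfrac{1}{16}$ on \textbf{(A9)}, all others zero, yielding dual value $\tfrac38 \le \tfrac12$.

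For the remaining $q\in\{2,\ldots,6\}$ cases of LP~\eqref{lpA}, I would proceed by choosing dual variables so that each primal variable's dual constraint is tight and so that the dual variables on the monotonicity constraints \textbf{(A\{i\})} propagate weight from $\omega_q$ back toward $\omega_1,\ldots,\omega_5$, matching the negative objective coefficients $-\tfrac{1}{j(j+1)}$. The ``aggregate'' constraints \textbf{(A6)}--\textbf{(A8)} bound how much $\chi$ the dual must absorb relative to sums of $\omega_j$'s, so their right-hand-side constants $1$ together with the $-2$ from \textbf{(A9)} determine the final dual value. For LP~\eqref{lpB} the structure is analogous: constraints \textbf{(B10)}--\textbf{(B13)} play the role of \textbf{(A6)}--\textbf{(A8)}, while \textbf{(B14)}--\textbf{(B15)} absorb the $\tfrac1p$ term in the objective, so the cases naturally split by the value of $p$.

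Once every auxiliary LP has been bounded above by $\tfrac12$, the probability bound is immediate. When $x_e \in (\tfrac18,\tfrac12)$, Lemma~\ref{lem:lp12} gives $\pr{\eim} \le \tfrac{8}{3}A_q x_e \le \tfrac{8}{3}\cdot\tfrac12\cdot x_e = \tfrac43 x_e$, and when $x_e \in [\tfrac12,\tfrac34)$ the same conclusion follows from Lemma~\ref{lem:lp34} via $B_{p,q} \le \tfrac12$. Combining with Lemma~\ref{lem:easyx}, which already handled $x_e \le \tfrac18$ and $x_e \ge \tfrac34$, yields $\pr{\eim} \le \tfrac43 x_e$ for all $x_e \in [0,1]$.

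The main obstacle is neither conceptual nor algebraic but combinatorial: $6$ cases for LP~\eqref{lpA} and $40$ cases for LP~\eqref{lpB}, each requiring a separately constructed dual certificate. My plan is to present the $q=1$ case in full as a worked example and then tabulate the dual-variable assignments for the remaining $45$ cases, deferring the table to Appendix~\ref{app:dual} (Table~\ref{tab:lp12}); verification of each entry then reduces to checking nonnegativity and computing a weighted sum of right-hand sides. A secondary subtlety worth flagging is that the case split must be exhaustive: every feasible configuration of color thresholds relative to $x_e$, $\tfrac12$, and $\tfrac78$ must land in one of the enumerated $(p,q)$ ranges, which is guaranteed by the monotonicity~\eqref{eq:wmonotone} of the $z_i$'s together with the bounds on $p$ and $q$ extracted in Lemmas~\ref{lem:lp12} and~\ref{lem:lp34} from the color-threshold Lemma~\ref{lem:xw}.
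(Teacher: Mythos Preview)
Your proposal is correct and follows essentially the same approach as the paper: exhibit explicit dual certificates for each of the 46 auxiliary LPs (with the $q=1$ case worked in full and the rest tabulated in Appendix~\ref{app:dual}), invoke weak duality to bound each LP value by $\tfrac12$, and then apply Lemmas~\ref{lem:lp12} and~\ref{lem:lp34} to conclude $\pr{\eim}\le\tfrac43 x_e$ on $(\tfrac18,\tfrac34)$. One small note: the final sentence of your plan overshoots the statement of Lemma~\ref{lem:hardx}, which only asserts the bound on $(\tfrac18,\tfrac34)$; the extension to all $x_e$ via Lemma~\ref{lem:easyx} is the content of Theorem~\ref{thm:r2}, not of this lemma.
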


Our main approximation result for $r = 2$ is now just a corollary of Lemmas~\ref{lem:easyx} and~\ref{lem:hardx} and Observation~\ref{obs:approx}.
\begin{theorem}
	\label{thm:r2}
	Algorithm~\ref{alg:gen} with $I = (\frac12, \frac78)$ is a $\frac43$-approximation algorithm for \textsc{Color-EC}.
\end{theorem}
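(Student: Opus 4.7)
The plan is to derive Theorem~\ref{thm:r2} directly as a corollary of the two case-splitting lemmas combined with Observation~\ref{obs:approx}, since all the heavy lifting has already been done in Section~\ref{sec:graphs}. The core reduction is that by Observation~\ref{obs:approx}, it suffices to establish the per-edge bound $\pr{e \in \mathcal{M}_Y} \leq \tfrac{4}{3} x_e$ for an arbitrary edge $e$, because then the expected cost of the rounded solution is at most $\tfrac43 \sum_{e\in E} w_e x_e$, which is $\tfrac43$ times the LP optimum and hence $\tfrac43$ times the integer optimum.

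The plan for establishing the per-edge bound is to partition the range of $x_e \in [0,1]$ into two regions based on where the difficulty of the argument lies. First I would invoke Lemma~\ref{lem:easyx} to handle the two extreme sub-intervals $x_e \in [0,\tfrac18]$ and $x_e \in [\tfrac34,1]$: when $x_e \geq \tfrac34$ the bound is trivial since the probability of mistake is at most $1 \leq \tfrac43 x_e$, and when $x_e \leq \tfrac18$ the LP constraint $\sum_j x_v^j = k-1$ forces every color other than $c=\ell(e)$ to have $x_v^i, x_u^i \geq \tfrac78$ on the two endpoints, so no competing color can want $u$ or $v$ for any $\rho \in (\tfrac12,\tfrac78)$, making the mistake probability zero. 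Second I would invoke Lemma~\ref{lem:hardx} to cover the remaining interval $x_e \in (\tfrac18,\tfrac34)$, which is exactly where all of the auxiliary LP machinery of Lemmas~\ref{lem:lp12} and~\ref{lem:lp34} has been deployed and where the factor of $\tfrac83$ in those lemmas combines with the bound of $\tfrac12$ on $A_q$ and $B_{p,q}$ to yield precisely $\pr{e \in \mathcal{M}_Y} \leq \tfrac43 x_e$.

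Putting the two cases together covers every feasible value of $x_e$, so the per-edge inequality $\pr{e \in \mathcal{M}_Y} \leq \tfrac43 x_e$ holds for every $e \in E$. Applying Observation~\ref{obs:approx} with $p = \tfrac43$ then immediately yields the claimed $\tfrac43$-approximation, completing the proof. Since all of the case analysis has been absorbed into the cited lemmas, there is essentially no obstacle left at this stage; the only thing to be careful about is checking that the two intervals $[0,\tfrac18] \cup [\tfrac34,1]$ covered by Lemma~\ref{lem:easyx} and $(\tfrac18,\tfrac34)$ covered by Lemma~\ref{lem:hardx} together exhaust $[0,1]$ with no gap at the boundary points, which they do because the endpoints $\tfrac18$ and $\tfrac34$ are handled by Lemma~\ref{lem:easyx} as written. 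The main conceptual obstacle has already been resolved earlier in the section: it was the design of the intervals $I = (\tfrac12,\tfrac78)$ and the auxiliary LPs, together with the dual-variable case analysis of Section~\ref{sec:caseanalysis}, which ensured that the bound is sharp enough to match the $\tfrac43$ integrality gap.
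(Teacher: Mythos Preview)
Your proposal is correct and mirrors the paper's own treatment exactly: the paper states that Theorem~\ref{thm:r2} is ``just a corollary of Lemmas~\ref{lem:easyx} and~\ref{lem:hardx} and Observation~\ref{obs:approx},'' which is precisely the assembly you describe. The extra detail you give (recapping why the $x_e\le\tfrac18$ and $x_e\ge\tfrac34$ cases are easy, and noting that $\tfrac83\cdot\tfrac12=\tfrac43$) is sound and merely unpacks what those lemmas already establish.
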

The appendix details several  challenges that would need to be overcome in order to avoid a lengthy case analysis when trying to prove a $\frac43$-approximation.
Among other challenges, we prove the following result, indicating that if we apply Algorithm~\ref{alg:gen} with a different interval $I$, it would either become more challenging or impossible (depending on the interval $I$) to prove inequality~\eqref{eq:main43}.
\begin{lemma}
	\label{lem:badcases}
	Let $Y$ denote the node color map returned by running Algorithm~\ref{alg:gen} with $I = (a,b) \subseteq (0,1)$. If $a > \frac12$, there exists a feasible LP solution such that $\pr{\eim} > \frac{4}{3} x_e$. If $a \leq \frac12$ and $b < \frac78$, there exists a feasible LP solution such that $\pr{\eim} > \frac43 x_e$. 
\end{lemma}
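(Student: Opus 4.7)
The plan is to prove each part by exhibiting an explicit feasible LP solution on a small hypergraph for which the probability of a mistake at some edge $e$ strictly exceeds $\frac{4}{3} x_e$.

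For Part 1 ($a > \frac{1}{2}$) I would take $k = 3$ colors and a single edge $e = (u,v)$ of color $1$. Writing $\delta = a - \frac{1}{2} > 0$, set
\begin{equation*}
x_u^1 = x_v^1 = \tfrac{1}{2} - \tfrac{\delta}{2}, \quad x_u^2 = x_v^3 = \tfrac{1}{2} + \tfrac{\delta}{2}, \quad x_u^3 = x_v^2 = 1,
\end{equation*}
and $x_e = \tfrac{1}{2} - \tfrac{\delta}{2}$. Each node's variables sum to $2 = k-1$, so the LP constraints hold. Since every $\rho \in (a,b)$ exceeds $\tfrac{1}{2} + \tfrac{\delta}{2}$, color $2$ wants $u$ and color $3$ wants $v$ throughout the interval, while color $1$ wants both endpoints; three distinct colors compete for $e$, giving $\pr{\eim} = \tfrac{2}{3}$. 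The ratio is $\tfrac{2/3}{1/2 - \delta/2} = \tfrac{4}{3(1-\delta)} > \tfrac{4}{3}$.

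Part 2 ($a \leq \tfrac{1}{2}$ and $b < \tfrac{7}{8}$) is more delicate: when $a = \tfrac{1}{2}$, one cannot shrink $x_e$ strictly below $\tfrac{1}{2}$ and still activate an auxiliary color inside $I$. Instead I would take $k = 5$ and a single edge $e = (u,v)$ of color $1$, with
\begin{equation*}
x_u^1 = x_u^2 = x_u^3 = \tfrac{2}{3},\ x_u^4 = x_u^5 = 1, \quad x_v^1 = x_v^4 = x_v^5 = \tfrac{2}{3},\ x_v^2 = x_v^3 = 1,
\end{equation*}
and $x_e = \tfrac{2}{3}$; each row sums to $4 = k-1$. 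For $\rho \leq \tfrac{2}{3}$ color $1$ is absent from the colors wanting $u$, so the mistake probability is $1$; for $\rho > \tfrac{2}{3}$ the colors wanting $u$ and $v$ are $\{1,2,3\}$ and $\{1,4,5\}$, so five distinct colors compete and the conditional mistake probability is $\tfrac{4}{5}$. Integrating over $I$ and rearranging $\pr{\eim} > \tfrac{4}{3} x_e = \tfrac{8}{9}$ reduces to the clean algebraic condition $5a + 4b < 6$. The hypotheses $a \leq \tfrac{1}{2}$ and $b < \tfrac{7}{8}$ give $5a + 4b < \tfrac{5}{2} + \tfrac{7}{2} = 6$, with the strict inequality guaranteed by $b < \tfrac{7}{8}$. (If $b \leq \tfrac{2}{3}$, the whole interval lies below $x_e$, so $\pr{\eim} = 1$ and the ratio is already $\tfrac{3}{2}$.)

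The main obstacle I expect is finding the right instance for Part 2. The triangle instance used for Lemma~\ref{lem:intgap} realizes ratio exactly $\tfrac{4}{3}$ for every $b > \tfrac{1}{2}$, and no simple perturbation of it pushes the ratio strictly above when $a = \tfrac{1}{2}$. The key is to deliberately tighten the color-threshold inequality of Lemma~\ref{lem:xw} at $t = 2$ with $x_e = \tfrac{2}{3}$, forcing four distinct ``other'' colors to simultaneously want nodes in $e$ as soon as $\rho > \tfrac{2}{3}$; this boosts the conditional mistake probability from $\tfrac{2}{3}$ up to $\tfrac{4}{5}$, which is just enough to beat $\tfrac{4}{3}$ once the upper endpoint $b$ falls short of $\tfrac{7}{8}$.
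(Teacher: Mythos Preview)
Your proposal is correct and essentially identical to the paper's proof: both parts use the same constructions (three colors with $x_e$ slightly below $\tfrac12$ for Part~1; five colors with $x_e = \tfrac23$ for Part~2) and the same analysis, and your algebraic reduction to $5a + 4b < 6$ is exactly the inequality the paper's computation verifies.
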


\section{Vertex Cover Equivalence and Algorithms}
\label{sec:vc}
\citet{cai2018alternating} showed that when $r = 2$, \minecc{} can be reduced in an approximation-preserving way to \textsc{Vertex Cover}. We not only extend this result to $r > 2$, but also prove an approximation-preserving reduction in the other direction that only applies if we consider the hypergraph ($r > 2$) setting. This immediately leads to new combinatorial approximation algorithms and refined hardness results. Explicitly forming the reduced \textsc{Vertex Cover} instance and applying existing \textsc{Vertex Cover} algorithms leads to runtimes that scale superlinearly in terms of $\sum_{e \in E} |e|$ (the size of the hypergraph). As a key algorithmic contribution, we design careful implicit implementations of existing \textsc{Vertex Cover} algorithms to provide 2-approximations for \minecc{} that have a runtime of $O(\sum_{e \in E}|e|)$.

\subsection{Vertex Cover Equivalence}
The equivalence between \emph{hypergraph} \minecc{} and \textsc{Vertex Cover} is summarized in two theorems (see Figure~\ref{fig:vcreductions}). 
\begin{figure}[t]
	\centering
	\includegraphics[width = \linewidth]{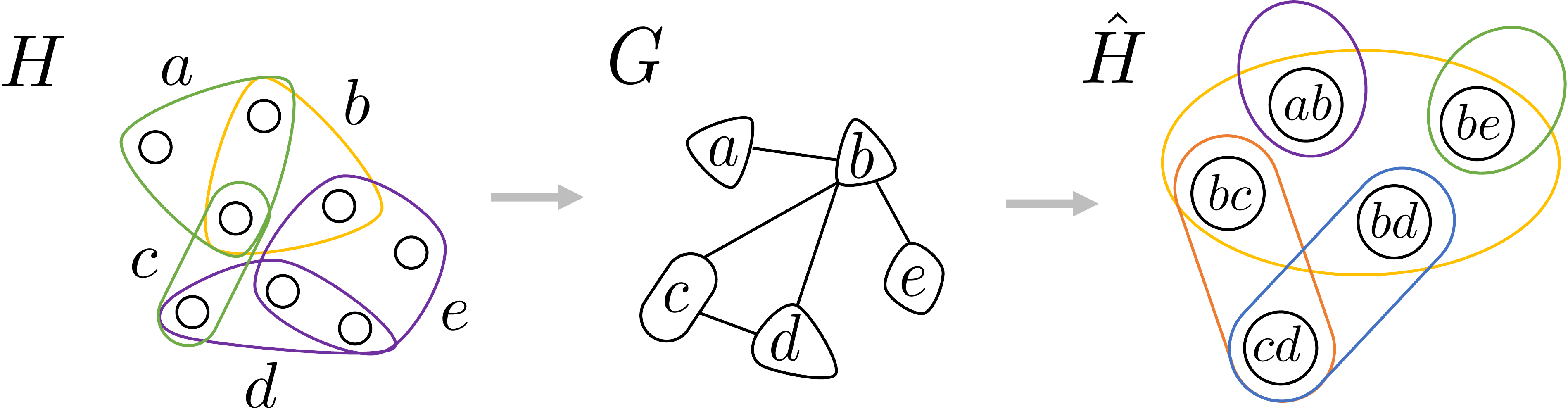}
	\vspace{-15pt}
	\caption{\minecc{} can be reduced to \textsc{Vertex Cover} by replacing each hyperedge with a node and adding an edge between nodes that represent overlapping hyperedges of different colors. \textsc{Vertex Cover} on a graph $G$ can be reduced to \minecc{} by introducing a node for each edge in $G$, and a hyperedge of a unique color for each node-neighborhood in $G$. 
	}
	\label{fig:vcreductions}
	\vspace{-15pt}
\end{figure}
	\vspace{-10pt}
\begin{theorem}
	\label{thm:vc2cec}
	Let $G = (V,E)$ be a graph with maximum degree $\Delta$. \textsc{Vertex Cover} on $G$ can be reduced in an approximation-preserving way to \minecc{} on an edge-colored hypergraph $H$ with rank $r = \Delta$.
\end{theorem}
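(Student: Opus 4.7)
The plan is to construct the reduction depicted in Figure~\ref{fig:vcreductions} and argue that the optimal \minecc{} cost on the resulting hypergraph exactly equals the minimum vertex cover size of $G$, which immediately implies approximation preservation. Given $G = (V, E)$, I would introduce one node $n_e$ in $H$ for each edge $e \in E$, and one hyperedge $h_v = \{n_e : v \in e\}$ with its own unique color $c_v$ for each vertex $v \in V$. Then $|V(H)| = |E|$, the number of colors is $|V|$, and the rank of $H$ equals the maximum degree $\Delta$, matching the claimed $r = \Delta$.

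The core argument is a correspondence between node colorings $Y$ of $H$ and vertex covers of $G$. Given $Y$, let $T_Y = \{v \in V : h_v \text{ is satisfied by } Y\}$. The key observation is that $T_Y$ is an independent set in $G$: if $(u,v) \in E$ with both $h_u$ and $h_v$ satisfied, then $n_{(u,v)}$ would have to be colored both $c_u$ and $c_v$, contradicting $c_u \neq c_v$. Hence $V \setminus T_Y$ is a vertex cover of size exactly equal to the \minecc{} cost of $Y$ (the number of unsatisfied hyperedges). For the other direction, given a vertex cover $C \subseteq V$, I define $Y$ by coloring $n_{(u,w)}$ with $c_u$ if $u \notin C$, with $c_w$ if $w \notin C$, and arbitrarily otherwise; every hyperedge $h_v$ with $v \notin C$ is then satisfied because every neighbor of $v$ must lie in $C$, so the \minecc{} cost is at most $|C|$.

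Combining the two directions gives $\mathrm{OPT}_{\minecc}(H) = \mathrm{OPT}_{\textsc{VC}}(G)$, and the transformation $Y \mapsto V \setminus T_Y$ turns any feasible \minecc{} solution into a vertex cover of the same size. Therefore any $\alpha$-approximation for \minecc{} on $H$ yields an $\alpha$-approximation for \textsc{Vertex Cover} on $G$ in polynomial time. I expect the main (and really only) subtle step to be the independent-set observation underlying the forward direction; the rest is bookkeeping about how unsatisfied hyperedges and uncovered vertices relate. Notably, the proof does not invoke any of the LP or rounding machinery from earlier sections, so it should be quite short compared to the technical core of the paper.
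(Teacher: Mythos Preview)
Your proposal is correct and follows essentially the same approach as the paper: the construction (one node per edge of $G$, one uniquely-colored hyperedge per vertex of $G$) is identical, and your independent-set argument is just a more explicit version of the paper's observation that overlapping hyperedges of different colors force one to be unsatisfied, which corresponds exactly to covering one endpoint of each edge. The paper's proof is terser and phrases things via the ``bad edge pair'' deletion view of \minecc{}, but the content is the same.
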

\begin{theorem}
	\label{thm:cec2vc} 
	An instance $H = (V,E, C,\ell)$ of \minecc{} can be reduced in an approximation-preserving way to an instance of \textsc{Vertex Cover} on a graph $G$ with $|E|$ nodes.
\end{theorem}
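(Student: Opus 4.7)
The plan is to construct the vertex cover instance directly from Figure~\ref{fig:vcreductions}: build a graph $G = (V_G, E_G)$ with $V_G = E$, assign each node $e \in V_G$ the weight $w_e$ of the corresponding hyperedge, and place an edge $\{e_1, e_2\}$ in $E_G$ exactly when $e_1 \cap e_2 \neq \emptyset$ and $\ell(e_1) \neq \ell(e_2)$. This immediately gives $|V_G| = |E|$ as required. The reduction is then established by exhibiting two cost-preserving mappings between feasible \minecc{} colorings of $H$ and feasible (weighted) vertex covers of $G$.

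First I would verify the forward direction: for any node coloring $Y$, the mistake set $\mathcal{M}_Y$ is a vertex cover of $G$ whose weight equals the \minecc{} cost of $Y$. Given any edge $\{e_1, e_2\} \in E_G$, pick $v \in e_1 \cap e_2$; since $\ell(e_1) \neq \ell(e_2)$, the single color $Y[v]$ can match at most one of them, forcing at least one of $e_1, e_2$ into $\mathcal{M}_Y$. Hence $\mathcal{M}_Y$ covers every edge of $G$, and its weight $\sum_{e \in \mathcal{M}_Y} w_e$ is exactly the \minecc{} cost of $Y$.

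Next I would verify the reverse direction: given a vertex cover $S \subseteq V_G$, I construct a coloring $Y$ whose mistake set is contained in $S$. Setting $E' = E \setminus S$, the vertex-cover property implies that any two hyperedges in $E'$ that share a node must agree in color (otherwise the corresponding edge of $G$ would be uncovered). Consequently, for each $v \in V$ there is at most one color used by hyperedges of $E'$ containing $v$; define $Y[v]$ to be that color when it exists and arbitrary otherwise. Every hyperedge in $E'$ is then satisfied under $Y$, so $\mathcal{M}_Y \subseteq S$ and the \minecc{} cost of $Y$ is at most $\sum_{e \in S} w_e$. Combining both directions shows that the optimal values coincide and that an $\alpha$-approximate vertex cover of $G$ yields an $\alpha$-approximate \minecc{} solution on $H$ (and conversely), which is precisely the definition of an approximation-preserving reduction.

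The main subtlety lies in the reverse direction: checking that the coloring $Y$ is well-defined hinges on the fact that $S$ covers every conflict edge of $G$, which is exactly what rules out two edges of $E'$ assigning incompatible colors to the same vertex. Once this is nailed down, the remainder is bookkeeping. A secondary thing to handle carefully is the weighted case, but this is painless since node weights of $G$ are inherited directly from the hyperedge weights of $H$, so the equality of objective values carries over verbatim.
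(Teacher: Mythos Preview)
Your proposal is correct and follows essentially the same construction as the paper: nodes of $G$ are hyperedges of $H$ with inherited weights, and edges of $G$ join overlapping hyperedges of different colors. Your write-up is in fact more thorough than the paper's own proof, which simply observes that \minecc{} is equivalent to deleting a minimum-weight set of hyperedges so that no ``bad edge pairs'' remain, and that this edge-deletion formulation is by construction the \textsc{Vertex Cover} problem on $G$; you spell out both directions of the correspondence explicitly, including the well-definedness check for the coloring derived from a cover.
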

Theorem~\ref{thm:vc2cec} implies new hardness results: assuming $r$ and $k$ are arbitrarily large, it is NP-hard to approximate hypergraph \minecc{} to within a factor better than 1.3606~\cite{dinur2005hardness}, and UGC-hard to approximate to within a factor that is a constant amount smaller than 2~\cite{khot2008vertex}. The relationship between maximum degree in $G$ and hypergraph rank in $H$ implies furthermore that it is UGC-hard to obtain a $2 - (2 + o_r(1))\frac{ \ln \ln r}{\ln r}$ approximation for rank-$r$ \minecc{} (assuming arbitrarily large $k$), using the \textsc{Vertex Cover} hardness results of~\citet{austrin2011inapproximability}. This result also implies that the maximum independent set problem is reducible in an approximation-preserving way to hypergraph \maxecc{}, indicating that it is NP-hard to obtain an approximation for hypergraph \maxecc{} that is sublinear in terms of the number of edges in the hypergraph~\cite{zuckerman2006linear}.

Theorem~\ref{thm:cec2vc} generalizes the reduction from \emph{graph} \minecc{} to \textsc{Vertex Cover} shown by~\citet{cai2018alternating}. This can be seen by viewing \minecc{} as an edge deletion problem. In an edge-colored hypergraph $H = (V,E,C, \ell)$, we say that $(e,f) \in E \times E$ is a \emph{bad edge pair} if $e$ and $f$ overlap and have different colors. The \minecc{} objective is then equivalent to deleting a minimum weight set of edges so that no bad edge pairs remain. A simple 2-approximation for \minecc{} can be designed by explicitly converting the edge-colored hypergraph into a graph and applying existing combinatorial \textsc{Vertex Cover} algorithms. Visiting each node $v \in V$ and then iterating through all pairs of hyperedges incident to $v$ in $H$ takes $O(\sum_{v \in V} d_v^2)$ time where $d_v$ is the degree of node $v$. The fastest algorithms for (unweighted \emph{and} weighted) \textsc{Vertex Cover} take linear-time in terms of the number of edges in the graph~\cite{pitt1985simple,bar1985local}. There can be up to $O(|E|^2)$ edges in the reduced graph $G$, so this approach has a runtime of $O(|E|^2 + \sum_{v \in V} d_v^2)$. 

\subsection{Linear-time 2-Approximation Algorithm}
\label{sec:linearvc}
It would be ideal to develop an algorithm for~\minecc{} whose runtime is not just linear in terms of the number of edges in a reduced graph, but is linear \emph{in terms of the hypergraph size}, i.e., $O(\sum_{e \in E} |e|)$. This may seem inherently challenging, given that existing \textsc{Vertex Cover} algorithms rely on explicitly visiting all edges in a graph. However, this can be accomplished using a careful \emph{implicit} implementation of a \textsc{Vertex Cover} algorithm. Pitt's algorithm~\cite{pitt1985simple} is a simple randomized 2-approximation for weighted \textsc{Vertex Cover} that iteratively visits edges in a graph $G = (V_G,E_G)$ in an arbitrary order. Each time it encounters an uncovered edge, it samples one of the two endpoints (in proportion to the node weight) to include in the cover. We will design a variant of Pitt's algorithm that we can apply directly to a \minecc{} instance $H$ without ever forming $G$.

\textbf{Additional notation.} 
Assume a fixed ordering of edges $E = \{e_1, e_2, \hdots , e_{|E|}\}$. For notational simplicity, let $w(i) = w_{e_i}$ and $\ell(i) = \ell(e_i)$ denote the weight and color of the $i$th edge, respectively. For each node $v \in V$, using a slight abuse of notation let $v(j)$ denote the index of the $j$th edge that is adjacent to $v$. These edge indices will be stored in a list $L_E(v) = [v(1), \;\; v(2), \;\; \cdots ,\;\; v(d_v)]$ where $d_v$ is the degree of $v$. For example, if node $v$ is in three hyperedges $\{e_2, e_5, e_9\}$, then $L_E(v) = [2, \; 5, \; 9]$. 
We assume that edges are ordered by color, so that in $L_E(v)$, all of the indices for edges of color $1$ come first, then edges with color $2$, etc (see Figure~\ref{fig:pittstep}). If the hypergraph is not stored this way, it can easily be re-arranged in $O(\sum_{e \in E} |e|)$ time to satisfy this property. Our main goal is to obtain a set $\mathcal{D}$ of edge indices to delete that implicitly corresponds to an approximate vertex cover in $G$. Once we have obtained such an edge set $\mathcal{D}$, we can iterate through all edges in $H$, and for each $e \in E - \mathcal{D}$ we can assign all nodes in $e$ to have color $\ell(e)$, which takes $O(\sum_{e \in E} |e|)$ time. It remains to show how we can efficiently obtain a set $\mathcal{D}$ that implicitly encodes a 2-approximate vertex cover in $G$. 

\textbf{Implicit implementation.} 
Iterating through edges in $G$ is equivalent to iterating through bad edge pairs in $H$. A bad edge pair is ``covered'' if one of the edges is added to $\mathcal{D}$. Pitt's algorithm visits edges in $G$ in an arbitrary order, so we can traverse bad edge pairs in $H$ in any way that is convenient. We choose to iterate through nodes, and for each node $v \in V$ we will consider all bad edge pairs that contain $v$ in their intersection. Explicitly visiting all pairs of edges incident to $v$ takes $O(d_v^2)$ time. However, deleting an edge will cover multiple bad edge pairs at once, so we will be able to ``skip'' many pairs to speed up the process.
To accomplish this, we maintain pointers to the front ($f)$ and back ($b$) of $L_E(v)= [v(1), \;\; v(2), \;\; \cdots ,\;\; v(d_v)]$ which we initialize to $f = 1$ and $b = d_v$. Unless $v$ is only adjacent to nodes of one color, in the first step we know $(e_{v(f)}, e_{v(b)})$ will be a bad edge pair that needs to be covered. Using Pitt's technique, we randomly sample one edge to delete based on its weight. If we delete $e_{v(f)}$, then we no longer need to consider any other bad edge pairs involving $e_{v(f)}$, so we update $f \leftarrow f+1$ and consider the next edge in the list. Otherwise, we delete edge $e_{v(b)}$ and decrement the back pointer: $b \leftarrow b - 1$. If at any point we encounter an edge that was added to $\mathcal{D}$ previously, we move past it by incrementing $f$ or decrementing $b$. We stop this procedure when $\ell(v(f)) = \ell(v(b))$. Even if $f \neq b$, our assumption that edges are ordered by color means that there are no more bad edge pairs containing $v$ to consider. Figure~\ref{fig:pittstep} is an illustration of this process. 
Since we update either $f$ or $b$ in each step, and each step involves $O(1)$ operations, covering all bad edge pairs containing $v$ takes $O(d_v)$ time. We refer to this algorithm as \textsf{PittColoring} (see Appendix~\ref{app:vc} for pseudocode), and end with a summarizing theorem.
\begin{theorem}
	\textsf{PittColoring} is a randomized 2-approximation algorithm for weighted \minecc{} that runs in $O(\sum_{v \in V} d_v) = O(\sum_{e \in E} |e|)$ time.
\end{theorem}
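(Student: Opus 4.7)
The plan is to prove correctness (2-approximation) and runtime separately, leveraging the reduction of Theorem~\ref{thm:cec2vc} together with the known 2-approximation guarantee of Pitt's algorithm for weighted \textsc{Vertex Cover}.

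For correctness, I would first verify that the set $\mathcal{D}$ of deleted edges returned by \textsf{PittColoring} is a valid vertex cover of the reduced graph $G$ from Theorem~\ref{thm:cec2vc}; that is, every bad edge pair $(e,e') \in E \times E$ has at least one of $e,e'$ in $\mathcal{D}$. Fix such a pair and any node $v \in e \cap e'$, and let $i < j$ be their respective positions in $L_E(v)$. Consider the final values of the front and back pointers $(f,b)$ after processing $v$. A short case analysis on where $i$ and $j$ land relative to $[f,b]$ suffices: if either index lies outside $[f,b]$, then the corresponding edge was passed over by a pointer and must lie in $\mathcal{D}$ (either added during processing of $v$, or already present from an earlier node's processing); the only remaining case has both $i,j \in [f,b]$, but then the sorted-by-color structure of $L_E(v)$ together with the stopping condition $\ell(v(f)) = \ell(v(b))$ forces $\ell(v(i)) = \ell(v(j))$, contradicting that $(e_{v(i)}, e_{v(j)})$ is a bad pair. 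Hence $\mathcal{D}$ covers $G$, and coloring each node of each surviving edge $e \in E \setminus \mathcal{D}$ with $\ell(e)$ produces a well-defined assignment $Y$ whose mistake set is contained in $\mathcal{D}$.

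For the approximation ratio, the plan is to observe that the execution of \textsf{PittColoring} is exactly an execution of Pitt's algorithm on $G$ under a particular (adaptive) traversal order of $E_G$. Each iteration of the inner loop either skips a position whose edge is already in $\mathcal{D}$ (corresponding to an already-covered edge of $G$) or encounters a fresh bad pair $(e_{v(f)}, e_{v(b)})$ and samples one endpoint for deletion using Pitt's weighted rule. Since Pitt's analysis is local and yields a $2$-approximation in expectation for \emph{any} order in which uncovered edges are processed, the expected weight of $\mathcal{D}$ is at most twice the optimal vertex cover value of $G$, which by Theorem~\ref{thm:cec2vc} is at most twice the optimal \minecc{} value on $H$. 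Combined with the containment of the mistake set in $\mathcal{D}$, this yields the 2-approximation.

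For runtime, bucket-sorting each $L_E(v)$ by color once takes $O(\sum_{e \in E}|e|)$ in total, and the final pass that assigns $Y[v] = \ell(e)$ for each surviving edge is likewise $O(\sum_{e \in E}|e|)$. The key efficiency claim concerns the per-node inner loop: every iteration either deletes or skips exactly one edge, and in both cases advances $f$ by one or decrements $b$ by one while performing $O(1)$ work (a membership query against $\mathcal{D}$, a weighted coin flip, a pointer update). Thus processing $v$ costs $O(d_v)$, and summing gives the stated $O(\sum_v d_v) = O(\sum_e |e|)$ bound. The main obstacle I anticipate is cleanly justifying that \textsf{PittColoring} is genuinely Pitt's algorithm on $G$ — in particular that an adaptive, randomness-driven traversal order does not break the standard dual-charging analysis, and that the sweep over $L_E(v)$ never terminates while an uncovered bad pair at $v$ remains; the sorted-by-color structure of $L_E(v)$ is indispensable for the latter, as it is exactly what lets the equality $\ell(v(f)) = \ell(v(b))$ certify that every non-deleted position in $[f,b]$ shares that color.
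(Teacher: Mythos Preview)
Your proposal is correct and follows essentially the same approach as the paper: both arguments establish that \textsf{PittColoring} is an implicit execution of Pitt's algorithm on the reduced \textsc{Vertex Cover} graph $G$ from Theorem~\ref{thm:cec2vc} (using the sorted-by-color structure of $L_E(v)$ to certify the stopping condition and the pointer-advancement argument for the $O(d_v)$ per-node cost). Your write-up is in fact more explicit than the paper's own discussion, particularly in spelling out the case analysis that $\mathcal{D}$ covers every bad pair and in flagging that Pitt's analysis must tolerate an adaptive traversal order.
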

\begin{figure}[t]
	\centering
	\includegraphics[width = \linewidth]{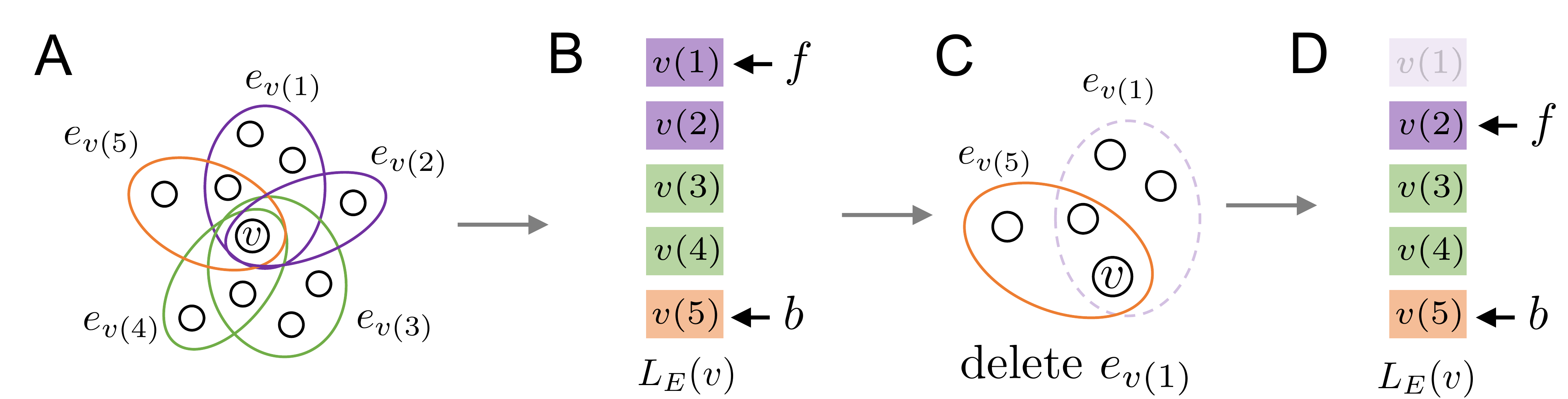}
	\vspace{-25pt}
	\caption{An illustration for covering bad edge pairs containing a node $v$. (A) $v$ is contained in 5 edges and 8 bad edge pairs. (B) Consider edges at opposite ends of $v$'s edge list $L_E(v)$, ordered by color. Edges $e_{v(1)}$ and $e_{v(5)}$ define a bad edge pair. (C) Given a bad edge pair, sample one to delete. (D) Once one $e_{v(1)}$ is deleted, we no longer consider bad edge pairs containing this edge, so we can advance the pointer $f$ to the next edge in $L_E(v)$.}
	\vspace{-5pt}
	\label{fig:pittstep}
\end{figure}
Our strategy for iterating through bad edge pairs can also be used in conjunction with other \textsc{Vertex Cover} algorithms. If edges are unweighted, every time we visit an uncovered bad edge pair in the list $L_E(v)$ we can instead delete \emph{both} edges and update both pointers $f$ and $b$. This leads to a new 2-approximation algorithm \textsf{MatchColoring} for the unweighted objective (see Appendix~\ref{app:vc}).

%
\section{Implementations and Experiments}
\label{sec:experiments}
Although our primary focus is to improve the theoretical foundations of edge-colored clustering, our algorithms are also easy to implement and very practical. 

\begin{table}[t]
	\caption{Approximation factors (ratio between algorithm output and LP lower bound) and runtimes obtained on five benchmark edge-colored hypergraphs. \textsf{MajorityVote} (\textsf{MV}) is deterministic. Our vertex cover algorithms \textsf{PittColoring} and \textsf{MatchColoring} both involve some amount of randomization, so we list the mean values and standard deviations over 50 runs.}
	\label{tab:approxmain}
	\centering
	\scalebox{0.95}{\begin{tabular}{l     l l l l  }
			\toprule
			& \multicolumn{4}{c}{\textbf{Ratio to LP lower bound}}  \\ 
			\cmidrule(lr){2-5} \
			\emph{Dataset} & {\small	\textsf{LP}}& {\small	\textsf{MV}} & {\small	\textsf{PittColoring}} & {\small	\textsf{MatchColoring}} \\
			\midrule
			{\small \texttt{Brain}} & 1.0 & 1.01 & 1.07 {\small $\pm 0.01$} & 1.08 {\small $\pm 0.01$} \\
			{\small \texttt{Cooking}} & 1.0 & 1.21 & 1.23 {\small $\pm 0.01$} & 1.23 {\small $\pm 0.0$}\\
			{\small \texttt{DAWN}} & 1.0 & 1.09 & 1.57 {\small $\pm 0.04$} & 1.58 {\small $\pm 0.03$} \\
			{\small \texttt{MAG-10}} & 1.0 & 1.18 & 1.39 {\small $\pm 0.01$} & 1.49 {\small $\pm 0.0$} \\
			{\small \texttt{Walmart}} & 1.0 & 1.2 & 1.13 {\small $\pm 0.0$} & 1.18 {\small $\pm 0.0$} \\
			\toprule
			& \multicolumn{4}{c}{\textbf{Runtime (in seconds)}}  \\ 
			\cmidrule(lr){2-5} \
			\emph{Dataset} & {\small	\textsf{LP}}& {\small	\textsf{MV}} & {\small	\textsf{PittColoring}} & {\small	\textsf{MatchColoring}} \\
			\midrule
		{\small 	\texttt{Brain}}&  0.52 & 0.001 & 0.006 {\small $\pm 0.028$} & 0.002 {\small $\pm 0.001$} \\
		{\small 	\texttt{Cooking}}&  127 & 0.002 & 0.01 {\small $\pm 0.003$} & 0.008 {\small $\pm 0.007$} \\
		{\small 	\texttt{DAWN}}&  4.23 & 0.003 & 0.01 {\small $\pm 0.004$} & 0.005 {\small $\pm 0.003$} \\
		{\small 	\texttt{MAG-10}}&  17.0 & 0.012 & 0.04 {\small $\pm 0.006$} & 0.04 {\small $\pm 0.016$}\\
		{\small 	\texttt{Walmart}}&  321 & 0.07 & 0.053 {\small $\pm 0.007$} & 0.05 {\small $\pm 0.009$} \\
			\bottomrule
	\end{tabular}}
	\vspace{-10pt}
\end{table} 

\textbf{Results on benchmark datasets.} Table~\ref{tab:approxmain} reports algorithmic results on the benchmark edge-colored hypergraphs of~\citet{amburg2020clustering}. On all datasets but \texttt{Walmart}, the \minecc{} LP relaxation produces integral solutions, i.e., we find the optimal \minecc{} solution without rounding. Even for \texttt{Walmart}, the LP solution is nearly integral and a very simple rounding scheme (namely, for each $v \in V$, assign it color $i^* = \argmin_i x_v^i$), produces a solution that is within a factor 1.00003 of optimal. For this reason, while our new LP rounding techniques improve the best theoretical results, is it not meaningful to compare them empirically against previous rounding techniques. It is however very meaningful to compare our vertex cover based algorithms against alternative approaches. Table~\ref{tab:approxmain} shows that these algorithms are orders of magnitude faster than solving and rounding the LP relaxation, and still obtain good approximate solutions. They also have the same asymptotic runtime as \textsf{MajorityVote}, a fast previous combinatorial algorithm that comes with a much worse $r$-approximation guarantee. Statistics for these datasets, additional implementation details, and more experimental results are provided in the appendix.

\begin{table}[t]
	\caption{Results (averaged over 50 runs) on the \texttt{Trivago} hypergraph. For \textsf{PittColoring}, the \textit{Apx} column gives the theoretical {expected} approximation guarantee. For other algorithms, \textit{Apx} is an a posteriori guarantee, improving on the method's theoretical guarantee, obtained using a lower bound computed by the algorithm.}
	\label{tab:trivago}
	\centering
	\scalebox{0.95}{\begin{tabular}{l     l l l l l }
			\toprule
			\textit{Algorithm} & \textit{Mistakes} & \textit{Sat} & \textit{Apx} & \textit{Acc} & \textit{Run} \\
			\midrule
		{\small	\textsf{PittColoring}} & 74624 & 0.70  & $2$ & 0.72  & 0.13  \\
		{\small	\textsf{MatchColoring}} & 77606 & 0.69  & 1.74  & 0.70  & 0.12  \\
		{\small	\textsf{MajorityVote}} & 67941 & 0.73  & 36.43  & 0.77  & 0.11  \\
		{\small	\textsf{Hybrid}} & 65192 & 0.74  & 1.46  & 0.78  & 0.23  \\
			{\small\textsf{LP}} & 58031 & 0.77 & 1.001 & 0.80&1549\\
			\bottomrule
	\end{tabular}}
\end{table} 

\textbf{Results for new \texttt{Trivago} hypergraph.} Our empirical contributions include a new benchmark dataset for edge-colored clustering, and a very practical hybrid algorithm that combines the strengths of \textsf{MatchColoring} and \textsf{MajorityVote}. The dataset is a hypergraph with $207974$ nodes, $247362$ edges, rank $r = 85$, and $k = 55$ colors. Nodes correspond to vacation rentals on the booking website \texttt{Trivago}, and each edge corresponds to the set of rentals that a user clicks on during a single user browsing session. Edge colors correspond to countries where the browsing session happens (e.g., \emph{Trivago.com} is the USA platform and \emph{Trivago.es} is for Spain). Nodes also come with ground truth colors that define the country the vacation rental is located in. Because users often (even if not always) search for vacation rentals within their own country, edge colors and ground truth node colors tend to match. This therefore serves as a useful new benchmark dataset for the ECC objective, that is much larger than previous benchmarks.

Table~\ref{tab:trivago} shows results for each algorithm on the \texttt{Trivago} hypergraph, including the number of mistakes made, the proportion of satisfied edges (\emph{Sat}), the method's approximation guarantee (\textit{Apx}), the number of nodes labeled correctly with respect to the ground truth (\textit{Acc}, i.e., accuracy), and the runtime (\textit{Run}, in seconds). \textsf{PittColoring}, \textsf{MatchColoring} and \textsf{MajorityVote} produce good results very quickly. A new algorithm \textsf{Hybrid} obtains even better results by first applying the edge deletion step of \textsf{MatchColoring} and then using the \textsf{MajorityVote} assignment for nodes that are isolated after edge deletions. \textsf{MajorityVote} is only guaranteed to return an 85-approximation since $r = 85$, while \textsf{MatchColoring} and \textsf{Hybrid} have a deterministic 2-approximation guarantee. We obtain improved a posteriori approximation guarantees for these three algorithms by comparing the number of mistakes they make against a lower bound on the optimal solution that can be computed based on the output of each algorithm. \textsf{PittColoring} has an \emph{expected} 2-approximation guarantee, but does not produce an explicit lower bound so it does not come with improved a posteriori guarantees. 

Solving and rounding the LP relaxation produces a very good output on the \texttt{Trivago} dataset, with an a posteriori approximation guarantee of $1.0014$, an edge satisfaction of $0.765$, and an accuracy of $0.80$. However, it takes around 25 minutes to solve the LP relaxation, which is roughly four orders of magnitude slower than our linear-time combinatorial algorithms. This result does indicate that LP-based methods for \minecc{} produce great results when it is possible to run them. However, our combinatorial algorithms will be able to scale to extremely large datasets where it is infeasible to rely on LP-based techniques. More details for all of our experimental results are provided in Appendices~\ref{app:vc} and~\ref{sec:appexp}.

\section{Conclusion and Discussion}
We have presented improved algorithms and hardness results for Edge-Colored Clustering. Our combinatorial algorithms have asymptotically optimal runtimes, and if $k$ and $r$ are arbitrarily large, their 2-approximation guarantee is the best possible assuming the unique games conjecture. Our LP algorithms are also tight with respect to the integrality gap. One open question is to determine whether alternative techniques could lead to improved approximations for fixed values of $r$ or $k$. Previous work has shown that the optimal approximation factor for \textsc{Node-MC} and \textsc{Hyper-MC} coincides with the integrality gap of their LP relaxation, assuming the unique games conjecture~\cite{ene2013local}. An open question is whether a similar result holds for the \minecc{} LP. In other words, is it UGC-hard to approximate \minecc{} below the LP integrality gap?
We also proved that the \minecc{} LP relaxation is tighter than the \textsc{Node-MC} relaxation. An open question in this direction is to better understand the relationship between the \minecc{} LP and relaxations obtained by considering more general problems, such as the Lov\'{a}sz relaxation for submodular multiway partition~\cite{chekuri2011submodular} or the basic LP for minimum constraint satisfaction problems~\cite{ene2013local}.

\newpage
\bibliography{chromaticclustering,multiway-cut-bib,vertex-cover-bib}
\bibliographystyle{icml2022}

\newpage
\appendix
\onecolumn

\section{Extended Related Work}
\label{app:related}

\paragraph{Chromatic correlation clustering.} 
The graph version of ECC is closely related to chromatic correlation clustering~\cite{Bonchi2015ccc,Anava:2015:ITP:2736277.2741629,klodt2021color,xiu2022chromatic}, 
an edge-colored generalization of correlation clustering~\cite{BansalBlumChawla2004}. Chromatic correlation clustering 
also takes an edge-colored graph as input and seeks to cluster nodes in a way that minimizes edge mistakes. ECC and chromatic correlation clustering both include penalties for (1) separating two nodes that share an edge, and (2) placing an edge of one color in a cluster of a different color. The key difference is that chromatic correlation clustering also includes a penalty for placing two non-adjacent nodes in the same cluster. 
In other words, chromatic correlation clustering interprets a non-edge as an indication that two nodes are dissimilar and should not be clustered together, whereas \minecc{} treats non-edges simply as missing information and does not include this type of penalty. 
As a result, a solution to chromatic correlation clustering may involve multiple different clusters of the same color, rather than one cluster per color. Unlike \minecc{}, chromatic correlation clustering is known to be NP-hard even in the case of a single color. Various constant-factor approximation algorithms have been designed, culminating in a 2.5-approximation~\cite{xiu2022chromatic}, but these do not apply to the general weighted case. Another difference is that there are no results for chromatic correlation clustering in hypergraphs.

\paragraph{Multiway cut and partition problems.}
The reduction from \minecc{} to a special case of \textsc{Node-MC} situates \minecc{} within a broad class of multiway cut and multiway partition problems~\cite{ene2013local,garg2004multiway,Dahlhaus94thecomplexity,calinescu2000,chekuri2011approximation,chekuri2011submodular,chekuri2016simple}. For \textsc{Node-MC}~\cite{garg2004multiway}, one is given a node-weighted graph $G = (V,E)$ with $k$ special terminal nodes, and the goal is to remove a minimum weight set of nodes to disconnect all terminals. \textsc{Node-MC} is approximation equivalent to the hypergraph multiway cut problem (\textsc{Hyper-MC})~\cite{chekuri2011submodular}, where one is give a hypergraph $H = (V,E)$ and $k$ terminal nodes, and the goal is to remove the minimum weight set of edges to separate terminals. These problems generalize the standard edge-weighted multiway cut problem in graphs (\textsc{Graph-MC})~\cite{Dahlhaus94thecomplexity}, where the goal is to separate $k$ terminal nodes in a graph by removing a minimum weight set of edges. \textsc{Node-MC} and \textsc{Hyper-MC} are also special cases of the more general submodular multiway partition problem (\textsc{Submodular-MP}), which has a best known approximation factor of $2(1-1/k)$~\cite{ene2013local}.
This matches the best approximation guarantee for \textsc{Node-MC} but requires solving and rounding a generalized Lov\'{a}sz relaxation~\cite{chekuri2011submodular}. It is worth noting that when $k = 2$, \minecc{} can be reduced to the minimum $s$-$t$ cut problem, i.e., the 2-terminal version of \textsc{Graph-MC}, but this does not generalize to $k > 2$. ~\citet{amburg2020clustering} showed a way to approximate \minecc{} by approximating a related instance of \textsc{Graph-MC}, but the objectives differ by a factor $(r+1)/2$. 
\section{Proofs for Linear Programming Results}

\textbf{Proof of Lemma~\ref{lem:intgap} (LP integrality gap)}

\textit{Proof.}
	We will construct a hypergraph $H$ that has exactly one edge $e_c$ for each color $c \in \{1,2, \hdots, k\}$, and a total of ${k \choose 2}$ nodes, each of which participates in exactly two edges. More precisely, we index each node by a pair of color indices $(i,j) \in C \times C$ with $i \neq j$. We place node $(i,j)$ in the hyperedge $e_i$ and the hyperedge $e_j$. Each edge contains exactly $r = k-1$ nodes. The optimal ECC solution makes a mistake at all but one edge. To see why, if we do not make a mistake at $e_c$, this means every node in $e_c$ is given label $c$. For each color $i \neq c$, the node with index $(i,c)$ is in $e_c$ and $e_i$, which means we will make a mistake at $e_i$ since this node was given label $c$. Meanwhile, the optimal solution to the LP relaxation is $\frac{k}{2}$: for node $v$ corresponding to color pair $(i,j)$, we set $x_{v}^i = x_{v}^j = \frac12$ and $x_v^c = 1$ for every $c \notin \{i,j\}$. Thus, $x_e = \frac12$ for each hyperedge $e \in E$, for an overall LP objective of $\frac{k}{2}$. \hfill $\square$

\textbf{Proof of Theorem~\ref{thm:rthm} (Hypergraph approximation result)}

\textit{Proof.}
	Let $e$ be an arbitrary edge with color $c = \ell(e)$. We need to show that $\pr{\eim} \leq 2 \big(1 - \frac{1}{r+1}\big) x_e$. If $x_e \geq \frac23$, this is trivial since $\pr{\eim}  \leq 1 \leq \frac{3}{2} x_e \leq 2 \big(1 - \frac{1}{r+1}\big) x_e$ as long as $r \geq 3$. We cover two remaining cases.
	
	\textbf{Case 1: $x_e < \frac12$.} For every $\rho \in (\frac12, \frac23)$, color $c$ wants all nodes in $e$, and Observation~\ref{xw} implies that $z_1> 1/2$. Therefore, the algorithm can only make a mistake at $e$ if $\rho$ falls between $z_1$ and $\frac23$. This never happens if $z_1 \geq \frac23$, so assume that $z_1 < \frac23$. Because $\rho < \frac23$, for every $v \in e$ it is possible for two different colors to want $v$, but never three colors. Since color $c$ is guaranteed to want every $v \in e$, the worse case scenario is when $\rho > z_1$ and each $v \in e$ is wanted by a different color that is unique to that node. In this case, the probability of making a mistake at $e$ is at most $\frac{r}{r+1}$. Putting all of the pieces together we have
	\begin{align*}
		\pr{\eim} &=\pr {\rho > z_1} \prc{\eim}{\rho > z_1}  \\
		&= \frac{\frac23 - z_1}{\frac23 - \frac12} \cdot \frac{r}{r+1} = \frac{6r}{r+1} \left( 1 - z_1 - \frac13\right) \\
		&\leq \frac{6r}{r+1} \left( x_e - \frac23x_e\right) = 2\left(1 - \frac{1}{r+1}\right)x_e.
	\end{align*}
	
	\textbf{Case 2: $x_e \in \left[ \frac12, \frac23\right)$.}
	If $\rho \leq x_e$, we are guaranteed to make a mistake at $e$, but this happens with bounded probability. If $\rho > x_e$, we can argue as in Case 1 that the probability of making a mistake at $e$ is at most $\frac{r}{r+1}$. We therefore have:
	\begin{align*}
		\pr{e \in M} &=\pr {\rho \leq x_e}\prc{\eim}{\rho \leq x_e} \\
		& \;\;\; + \pr {\rho > x_e} \prc{\eim}{\rho > x_e}  \\
		&= \frac{x_e - \frac12}{\frac23 - \frac12} \cdot 1  + \frac{\frac23 - x_e}{\frac23 - \frac12} \cdot \frac{r}{r+1}  \\
		&= 6 \left( x_e - \frac12 + \frac{2r}{3r+3} - \frac{x_er}{r+1} \right) \\
		&= 6 \left(\frac{x_e}{r+1}  + \frac{r-3}{6r+6}  \right) \\
		&\leq 6 \left(\frac{3x_e}{3r+3}  + \frac{(r-3)x_e}{3r+3}  \right) = 2 \left(1 - \frac{1}{r+1}\right)x_e.
	\end{align*}
\hfill $\square$

\textbf{Proof of Lemma~\ref{lem:xw} (Color Threshold Lemma)}

\textit{Proof.}
	Let $m$ be an arbitrary odd integer less than $k$. When $\rho > z_m$, the definition of $z_m$ tells us that there are $m$ colors (which are  distinct from each other and not equal to $c = \ell(e)$) that want at least one of the nodes in $e = (u,v)$. By the pigeonhole principle, at least $h = \ceil{\frac m2} = \frac{m+1}{2}$ of these colors want the \emph{same} node. Without loss of generality, let $u$ be this node and $\{c_1, c_2, \hdots, c_{h}\}$ be these ${h}$ colors, arranged so that
	\begin{align*}
		x_u^{c_1} \leq x_u^{c_2} \leq \cdots \leq x_u^{c_{h}} \leq z_m.
	\end{align*}
	Without loss of generality we can assume these $h$ colors are the colors (not including $c$) that want node $u$ the most, in the sense that $x_u^j \geq x_u^{c_h}$ for every color $j \notin \{c, c_1, c_2, \hdots, c_h\}$.
	Meanwhile, for every $\rho \leq z_m$, there can be at most $m - h = \frac{m-1}{2}$ \emph{other} colors not in $\{c, c_1, c_2, \hdots, c_h\}$ that want node $v$. We can show then that for $j \in \{1, 2, \hdots , h\}$, 
	\begin{equation}
		\label{eq:xucj}
		x_u^{c_j} \leq z_{j+m-h}.
	\end{equation}
	If this were not true and we instead assume $x_u^{c_j} > z_{j+m-h}$, this would mean that for any $\rho \in (z_{j + m - h}, x_u^{c_j})$, there are $j + m-h$ distinct colors (not equal to $c$) that want at least one of the two nodes in $e = (u,v)$. However, since $\rho < x_u^{c_j}$, we know that $\{c_1, c_2, \hdots, c_{j-1}\}$ are the only colors (not counting $c$) that want node $u$. Furthermore, since $\rho < z_m$, there are at most $m - h$ \emph{distinct} colors (again not counting $c$) that want node $v$. Thus, $\rho < x_u^{c_j}$ would imply there are at most $j+m-h -1$ colors not equal to $c$ that want either $u$ or $v$. This is a contradiction, so~\eqref{eq:xucj} must hold. Combining this inequality with the fact that $\sum_{i = 1}^k x_u^i = k-1$, we obtain
	\begin{align*}
		h  \leq x_u^c + \sum_{j = 1}^h x_u^{c_j} \leq x_e + \sum_{j= 1}^h z_{j + m - h} = x_e + z_{1 + m - h} + z_{2 + m - h} + \cdots + z_m.
	\end{align*}
	Since we assumed that $m$ is odd, we know $h = 1+m-h$ and $m = 2h - 1$, so setting $t = h$ yields the inequality in the statement of the lemma. \hfill $\square$

\subsection*{Proofs for Lemmas on Auxiliary Linear Programs}
Lemmas~\ref{lem:lp12} and~\ref{lem:lp34} are concerned with the LPs in Figures~\ref{fig:lpA} and~\ref{fig:lpB}. For notational ease in proving these results, let $M$ be the event $\eim$, i.e., the event of making a mistake at edge $e$ when rounding the LP relaxation. These Lemmas allow us to bound $\frac{\pr{M}}{x_e}$ by solving a small linear program, first for the case $x_e \in (\frac18,\frac12)$, and then for the case $x_e \in \left[\frac12, \frac34\right)$. Both results will repeatedly apply the following facts:
\begin{align}
	\label{eq:ii1}
	&\prc{M}{\rho > x_e \text{ and } \rho \in (z_i, z_{i+1})} = \frac{i}{i+1},\\
	\label{eq:83}
	&\pr{\rho \in (a, b)} = \frac{8}{3}(b - a), \text{ for $(a,b) \subseteq \Big(\frac12, \frac78\Big)$}.
\end{align}
\begin{figure}
	\begin{equation}
			\begin{array}{lll}
				\maximize & {\displaystyle \frac{q}{q+1}\frac78\chi - \sum_{j = 1}^q \frac{1}{j(j+1)} \omega_j }\\\\
				\text{subject to} 
				&\textbf{ (A\{i\}) } \omega_{i} - \omega_{i+1} \leq 0 \text{ for $i = 1, \hdots, 5$ } \\
				&\textbf{ (A6) } \chi - \omega_1 \leq 1 \\
				&\textbf{ (A7) } 2\chi - \omega_2 - \omega_3 \leq 1 \\
				&\textbf{ (A8) }3 \chi - 3\omega_5\leq 1 \\
				&\textbf{ (A9) } -\chi \leq -2\\
				& \textbf{ (A10) } \omega_q - \frac78 \chi \leq 0. 
			\end{array}
	\end{equation}	
	\caption{This linear program provides an upper bound for $\pr{\eim}/x_e$ when $x_e < \frac12\leq z_1 \leq z_{q} \leq \frac78$ for integers $q \leq 6$. We explicitly name each constraint as it will be convenient to reference individual constraints in later parts of the proof.}	
	\label{fig:lpA}
\end{figure}

\textbf{Proof of Lemma~\ref{lem:lp12}}

\textit{Proof.}
	When $x_e < \frac12$, we know that for any $\rho \in (\frac12, \frac78)$, the color $c = \ell(e)$ will want both nodes in $e = (u,v)$. If we use Lemma~\ref{lem:xw} with $t = 4$ and the monotonicity of color thresholds from~\eqref{eq:wmonotone}, we can see that
	\begin{align*}
		4 \leq x_e +  z_4 + z_5 + z_6 + z_7 \leq x_e + 4z_7 \implies 1- \frac{x_e}{4} \leq z_7 \implies z_7 > \frac78.
	\end{align*}
	This implies that for a random $\rho \in (\frac12, \frac78)$, at most 6 colors other than $c$ will want a node from $e = (u,v)$. Thus, if $z_{p-1} \leq \frac12\leq z_p \leq z_{q} \leq \frac78 \leq z_{q+1}$, then $q \leq 6$. Similarly, we can see that $1 \leq x_e + z_1 \implies z_1 > \frac12$, so we must have $p = 1$. Next, we use Eq.~\eqref{eq:ii1} and Eq.~\eqref{eq:83} to provide a convenient expression for $\pr{M}$:
	\begin{align*}
		\pr{M} & = 
		\sum_{j = 1}^{q-1} \prc{M}{\rho \in (z_j, z_{j+1})} \pr{\rho \in (z_j, z_{j+1})} 
		+  \prc{M}{\rho \in (z_q, 7/8)} \pr{\rho \in (z_q,7/8)} \\
		&= \frac83\left(\frac{q}{q+1}\left(\frac78 - z_q\right)  +\sum_{j = 1}^{q-1} \frac{j}{j+1}(z_{j+1} - z_j)    \right)=  \frac83\left(\frac{q}{q+1}\frac78  - \sum_{j = 1}^q \frac{1}{j(j+1)} z_j   \right).
	\end{align*}
	Our goal is to upper bound $\frac{\pr{M}}{x_e}$. 
	To do so we have the following inequalities and case-specific assumptions at our disposal: (1) the monotonicity of color thresholds: $z_i \leq z_{i+1}$ for $i \in \{0,1, \hdots , k\}$, (2) the relationship between $x_e$ and color thresholds given in Lemma~\ref{lem:xw}, and (3) our assumptions that $x_e < \frac12$ and $ z_{q} \leq \frac78 \leq z_{q+1}$.
	Since at most 6 colors other than $c$ can want a node in $e = (u,v)$, we can extract a set of 10 convenient inequalities from these three categories that we know are true for $x_e$ and the first 6 color threshold values: 
	\begin{align*}
		&\textbf{Monotonicity Constraints}\\
		&\hspace{.5cm}\textbf{ (A1) } \frac{z_1}{x_e} \leq \frac{z_2}{x_e}, \textbf{ (A2) } \frac{z_2}{x_e} \leq \frac{z_3}{x_e}, \textbf{ (A3) }\frac{z_3}{x_e} \leq \frac{z_4}{x_e}, \textbf{ (A4) } \frac{z_4}{x_e} \leq \frac{z_5}{x_e}, \textbf{ (A5) } \frac{z_5}{x_e} \leq \frac{z_6}{x_e} \\ 
		&\textbf{Edge-Node Relationship Constraints (Lemma~\ref{lem:xw})}\\
		&\hspace{.5cm}\textbf{ (A6) } \frac{1}{x_e} \leq 1 + \frac{z_1}{x_e}, \textbf{ (A7) } \frac{2}{x_e} \leq 1 + \frac{z_2}{x_e} + \frac{z_3}{x_e}, \textbf{ (A8) } \frac{3}{x_e} \leq 1 + \frac{3z_5}{x_e}  \\ 
		&\textbf{Boundary Assumption Constraints}\\
		&\hspace{.5cm}\textbf{ (A9) } 1\leq \frac12 \frac{1}{x_e},  \textbf{ (A10) } \frac{z_q}{x_e} \leq \frac78 \frac{1}{x_e}.  
	\end{align*}
	If the maximum value of $\frac{\pr{M}}{x_e}$ over all values of $\{x_e, z_1, z_2, z_3, z_4, z_5, z_6\}$ that respect these inequalities is at most $P$, then this guarantees $\pr{M} \leq Px_e$. Finally, note that $\frac{\pr{M}}{x_e}$ and the inequalities above can be given as linear expressions in terms of $\frac{1}{x_e}$ and $\big\{\frac{z_i}{x_e} \colon i \in \{1,2,3,4,5,6\}\big\}$. We can therefore maximize $\frac38\frac{\pr{M}}{x_e}$ subject to these inequalities by setting $\chi = \frac{1}{x_e}$ and $\omega_i = \frac{z_i}{x_e}$ and solving the linear program in Figure~\ref{fig:lpA}. Note that we choose to maximize $\frac38\frac{\pr{M}}{x_e}$, rather than simply maximizing $\frac{\pr{M}}{x_e}$, only because this will simplify some of our analysis later.
\hfill $\square$

\begin{figure}
	\begin{equation}
			\begin{array}{lll}
				\text{max} & {\displaystyle \frac1p + \left(\frac{q}{q+1}\frac78 -\frac12\right) \chi - \sum_{j = p}^q \frac{1}{j(j+1)} \omega_j   }\\\\
				\text{s.t.} 
				& \textbf{ (B\{i\}) } \omega_{i} - \omega_{i+1} \leq 0  \text{ for $i = 1, \hdots, 9$ } \\
				&\textbf{ (B10) } \chi - \omega_1 \leq 1 \\
				&\textbf{ (B11) } 2\chi - \omega_2 - \omega_3 \leq 1 \\
				&\textbf{ (B12) }3 \chi - \omega_3 - \omega_4 - \omega_5\leq 1 \\
				& \textbf{ (B13) }4 \chi - 4\omega_7 \leq 1\\
				&\textbf{ (B14) } \omega_{p-1} \leq 1 \\
				& \textbf{ (B15) } -\omega_p \leq -1 \\
				& \textbf{ (B16) } \omega_q - \frac78 \chi\leq 0.
			\end{array}
	\end{equation}	
	\caption{This linear program provides an upper bound for $\pr{\eim}/x_e$ when $x_e \in [\frac12, \frac34)$ and $z_{p-1} \leq x_e\leq z_p \leq z_{q} \leq \frac78 \leq z_{q+1}$ for integers $p \leq 5$ and $q \leq 10$ satisfying  $p \leq q$. We explicitly name each constraint as it will be convenient to reference individual constraints in later parts of the proof. We could add additional constraints based on the assumptions in Lemma~\ref{lem:lp34} and the relationship between variables given in Lemma~\ref{lem:xw}, but it suffices to consider a subset of these constraints to bound $\pr{\eim}/x_e$, and this also simplifies the analysis.}	
	\label{fig:lpB}
\end{figure}

\textbf{Proof of Lemma~\ref{lem:lp34}}

\textit{Proof.}
	Combining Lemma~\ref{lem:xw} (with $t = 6$), the inequalities in~\eqref{eq:wmonotone}, and the bound $x_e < \frac34$, we have
	\begin{align*}
		6 \leq x_e +  \sum_{j = 6}^{11} z_j \leq x_e + 6z_{11} \implies 1- \frac{x_e}{6} \leq z_{11} \implies z_{11} > \frac78.
	\end{align*}
	Thus, for every $\rho \in (\frac12, \frac78)$ we know $q \leq 10$ and there are at most 10 colors other than $c = \ell(e)$ that want a node in $e$. Using similar steps we can show that $z_5 > \frac34 > x_e$, so $p \leq 5$. We again use Eq.~\eqref{eq:ii1} and Eq.~\eqref{eq:83} to derive the following useful expression for $\pr{M}$:
	\begin{align*}
		\pr{M} & = \prc{M}{\rho \in \Big(\frac12, x_e\Big)} \pr{\rho \in \Big(\frac12, x_e\Big)} + \prc{M}{\rho \in (x_e,z_p)} \pr{\rho \in (x_e,z_p)}  \\
		& \hspace{.5cm} + \left( \sum_{j = p}^{q-1} \prc{M}{\rho \in (z_j, z_{j+1})} \pr{\rho \in (z_j, z_{j+1})} \right) +  \prc{M}{\rho \in \Big(z_q, \frac78\Big)} \pr{\rho \in \Big(z_q,\frac78\Big)} \\
		&= \frac83\left(\left(x_e - \frac12\right) + \frac{p-1}{p}\left(z_p - x_e\right) + \left( \sum_{j = p}^{q-1} \frac{j}{j+1}(z_{j+1} - z_j) \right) + \frac{q}{q+1}\left(\frac78 - z_q\right)   \right)\\
		\implies & \frac{\pr{M}}{x_e} \leq \frac83\left(\frac1p + \left(\frac{q}{q+1}\frac78 -\frac12\right) \frac{1}{x_e} - \sum_{j = p}^q \frac{1}{j(j+1)} \frac{z_j}{x_e}   \right).
	\end{align*}
	We apply the monotonicity inequalities~\eqref{eq:wmonotone} for color thresholds, Lemma~\ref{lem:xw}, and our assumptions in the statement of the lemma to obtain a set of inequalities that can be used to bound $\frac{\pr{M}}{x_e}$.
	\begin{align*}
		&\textbf{ (B\{i\}) }\frac{z_i}{x_e} \leq \frac{z_{i+1}}{x_e} \text{ for $i = 1, 2, \hdots 9$}, \\
		&\textbf{ (B10) } \frac{1}{x_e} \leq 1 + \frac{z_1}{x_e}, \textbf{ (B11) } \frac{2}{x_e} \leq 1 + \frac{z_2}{x_e} + \frac{z_3}{x_e}, \textbf{ (B12) } \frac{3}{x_e} \leq 1 + \sum_{j = 3}^5 \frac{z_j}{x_e}, \\
		&\textbf{ (B13) } \frac{4}{x_e} \leq 1 + 4\frac{z_7}{x_e}, \textbf{ (B14) } \frac{ z_{p-1} }{x_e} \leq 1, \textbf{ (B15) } 1 \leq \frac{z_{p}}{x_e}, \textbf{ (B16) } \frac{z_q}{x_e} \leq \frac78 \frac{1}{x_e}.
	\end{align*}
	We can maximize $\frac38\frac{\pr{M}}{x_e}$ subject to these inequalities by solving the linear program in Figure~\ref{fig:lpB}. \hfill $\square$

\paragraph{Proofs not contained in this section of the appendix}
The proof of Theorem~\ref{thm:lps} is given in Appendix~\ref{app:multiway}, which provides additional details on the relationship between \minecc{} and related multiway cut objectives. Details for proving Lemma~\ref{lem:hardx}, and in particular dual LP solutions for 46 auxiliary linear programs needed to prove this result, are contained in Appendix~\ref{app:dual}.

\section{Optimal Dual Variables and LP bounds for $r = 2$}
\label{app:dual}
\begin{table}
	\caption{Optimal dual variables (one for each constraint in the primal LP) for the linear program in Figure~\ref{fig:lpA}, for each valid integer $q$. Multiplying dual variables by the left and right hand sides of each constraint and adding the left and right hand sides together proves the upper bound (last column) on the optimal solution to the primal LP. The bound is always $\frac12$ or smaller.}
	\label{tab:lp12}
	\centering
	\begin{tabular}{l  cccccccccc  c}
		& {\footnotesize \rotatebox[origin=l]{70}{$\omega_{1} - \omega_{2} \leq 0$}} &{\footnotesize \rotatebox[origin=l]{70}{$\omega_{2} - \omega_{3} \leq 0$}}  &{\footnotesize \rotatebox[origin=l]{70}{$\omega_{3} - \omega_{4} \leq 0$}} &{\footnotesize \rotatebox[origin=l]{70}{$\omega_{4} - \omega_{5} \leq 0$}} &{\footnotesize \rotatebox[origin=l]{70}{$\omega_{5} - \omega_{6} \leq 0$}} &{\footnotesize \rotatebox[origin=l]{70}{$\chi - \omega_1 \leq 1$}}&{\footnotesize 
			\rotatebox[origin=l]{70}{$2\chi - \omega_2 - \omega_3 \leq 1$}}&{\footnotesize \rotatebox[origin=l]{70}{$3 \chi - 3\omega_5\leq 1$}}&{\footnotesize 
			\rotatebox[origin=l]{70}{$-\chi \leq -2$}}&{\footnotesize 
			\rotatebox[origin=l]{70}{$\omega_q - \frac78 \chi \leq 0$}} & \\
	\end{tabular}
	\begin{tabular}{l | cccccccccc  |c}
		\toprule
		$q$ & (\textbf{A1}) & (\textbf{A2}) & (\textbf{A3}) & (\textbf{A4}) & (\textbf{A5}) & (\textbf{A6}) & (\textbf{A7}) & (\textbf{A8}) & (\textbf{A9}) & (\textbf{A10}) & Bound \\
		\midrule
		1 & 0 & 0 & 0 & 0 & 0 & $\frac{1}{2}$ & 0 & 0 & $\frac{1}{16}$ & 0& $\frac{3}{8}$\\
		\midrule
		2 & $\frac16$ & 0 & 0 & 0 & 0 & $\frac23$ & 0 & 0 & $\frac{1}{12}$ & 0& $\frac{1}{2}$ \\
		\midrule
		3 & 0 & 0 & 0 & 0 & 0 & $\frac{1}{2}$ & $\frac16$ & 0 & $\frac{5}{48}$ & $\frac{1}{12}$ & $\frac{11}{24}$\\
		\midrule
		4 & 0 & 0 & $\frac{1}{12}$ & 0 & 0 & $\frac{1}{2}$ & $\frac16$ & 0 & $\frac{5}{48}$ & $\frac{1}{30}$& $\frac{11}{24}$ \\
		\midrule
		5 & 0 & 0 & $\frac{1}{12}$ & $\frac{1}{30}$ & 0 & $\frac{1}{2}$ & $\frac16$ & 0 & $\frac{5}{48}$ & 0& $\frac{11}{24}$ \\
		\midrule
		6 & 0 & 0 & $\frac{1}{12}$ & $\frac{1}{30}$& $\frac{1}{42}$ & $\frac{1}{2}$ & $\frac16$ & $\frac{1}{126}$ & $\frac{3}{28}$ & 0& $\frac{29}{63}$  \\
		\bottomrule
	\end{tabular}
\end{table}

In Section~\ref{sec:caseanalysis} of the main text we proved that when $q = 1$, the solution to the linear program in Figure~\ref{fig:lpA} is bounded above by $\frac12$. We accomplished this by multiplying constraints in this LP by dual variables corresponding to these constraints. In Table~\ref{tab:lp12}, we present dual variables for all cases $q \in \{1,2,3,4,5,6\}$. For the sake of completeness, here we provide additional details for how to prove an upper bound on the LP solution for all of these values of $q$, without having to explicitly write down and sum a linear combination of constraints. The upper bound can be shown more succinctly by checking a few matrix-vector products, which is the approach we take here. We also provide dual variables for the LP in Figure~\ref{fig:lpB} for all valid choices of $p$ and $q$, which can be used to prove the desired upper bound on this LP in the same fashion.


\subsection{Overview of LP bounding strategy}
The linear programs in Figures~\ref{fig:lpA} and~\ref{fig:lpB} can both be written in the following format:
\begin{equation}
	\label{eq:primallp}
	\begin{array}{rl}
		\max & \vc^T \vx \\
		\text{s.t. } & \mA \vx \leq \vb.
	\end{array}
\end{equation}
The dual of this LP is given by
\begin{equation}
	\label{eq:duallp}
	\begin{array}{rl}
		\min & \vb^T \vy \\
		\text{s.t. } & \mA^T \vy = \vc \\
		& \vy \geq 0.
	\end{array}
\end{equation}
We want to prove that $\vc^T \vx^* \leq \frac12$ where $\vx^*$ is the optimal solution to the primal LP~\eqref{eq:primallp}. By weak duality theory, it is sufficient to find a vector $\vy$ of dual variables satisfying the constraints in the dual LP~\eqref{eq:duallp}, such that $\vb^T \vy \leq \frac12$. Note that our goal is actually to prove that this can be done for multiple slight variations of the objective function vector $\vc$ and constraint matrix $\mA$. For all cases we have the same right hand side vector $\vb$. If $\{\vc_1, \vc_2, \hdots, \vc_j\}$ denotes the set of objective function vectors, $\{\mA_1, \mA_2, \hdots, \mA_j\}$ is the set of corresponding constraint matrices, and $\{\vy_1, \vy_2, \hdots, \vy_j\}$ is an appropriately chosen set of dual vectors, then for $i = 1,2, \hdots , j$, we need to perform one matrix-vector product and one vector inner product to confirm that 
\begin{align*}
	\mA_j^T \vy_j &= \vc_j \\
	\vb^T \vy_j &\leq \frac12.
\end{align*}

\subsection{Matrix computations for bounding LP in Figure~\ref{fig:lpA}}
For the linear program in Figure~\ref{fig:lpA}, the variable vector is $\vx^T = \begin{bmatrix} \omega_1 & \omega_2 & \omega_3 & \omega_4 & \omega_5 & \omega_6 & \chi \end{bmatrix}$ and the right hand size vector is
$
\vb^T = \begin{bmatrix}
	0 & 0 & 0 & 0 & 0 & 1 & 1 & 1 &-2 & 0
\end{bmatrix}.
$
The dual variables we will use are given in Table~\ref{tab:lp12}. For each $q \in \{1,2, \hdots 6\}$, let $\vy_q$ denote the vector of dual variables (which corresponds to the $q$th row in Table~\ref{tab:lp12}), and let $\mY = \begin{bmatrix} \vy_1 & \vy_2 & \cdots & \vy_6 \end{bmatrix}$, so
\renewcommand{\arraystretch}{1.25}
\begin{equation*}
	\mY = \begin{bmatrix}
		0 & 0 & 0 & 0 & 0 & \frac{1}{2} & 0 & 0 & \frac{1}{16} & 0 \\ \frac{1}{6} & 0 & 0 & 0 & 0 & \frac{2}{3} & 0 & 0 & \frac{1}{12} & 0 \\ 0 & 0 & 0 & 0 & 0 & \frac{1}{2} & \frac{1}{6} & 0 & \frac{5}{48} & \frac{1}{12} \\ 
		0 & 0 & \frac{1}{12} & 0 & 0 & \frac{1}{2} & \frac{1}{6} & 0 & \frac{5}{48} & \frac{1}{30} \\ 
		0 & 0 & \frac{1}{12} & \frac{1}{30} & 0 & \frac{1}{2} & \frac{1}{6} & 0 & \frac{5}{48} & 0 \\ 0 & 0 & \frac{1}{12} & \frac{1}{30} & \frac{1}{42} & \frac{1}{2} & \frac{1}{6} & \frac{1}{126} & \frac{3}{28} & 0 
	\end{bmatrix}.
\end{equation*}
In order to check that $\vb^T \vy_q \leq \frac12$ for each choice of $q$ we just need to compute $\mY \vb$, which is made easier by the fact that most entries of $\vb$ are zero. 
\begin{align*}
	\mY \vb =  \begin{bmatrix}
		\frac{1}{2} & 0 & 0 & \frac{1}{16}  \\ 
		\frac{2}{3} & 0 & 0 & \frac{1}{12}  \\ 
		\frac{1}{2} & \frac{1}{6} & 0 & \frac{5}{48} \\ 
		\frac{1}{2} & \frac{1}{6} & 0 & \frac{5}{48} \\ 
		\frac{1}{2} & \frac{1}{6} & 0 & \frac{5}{48} \\ 
		\frac{1}{2} & \frac{1}{6} & \frac{1}{126} & \frac{3}{28} 
	\end{bmatrix}
	\begin{bmatrix}
		1 \\ 1 \\ 1 \\  -2
	\end{bmatrix}
	= \begin{bmatrix}
		\frac38 \\ \frac12 \\ \frac{11}{24} \\ \frac{11}{24} \\ \frac{11}{24} \\ \frac{29}{63}
	\end{bmatrix}.
\end{align*}
This confirms that as long as these dual variables are feasible for the dual linear program, the primal LP is always bounded above by $\frac12$. So, we just need to confirm that $\mA_q^T \vy_q = \vc_q$ for each $q$. 

The constraint matrix $\mA_q$ is nearly the same for all choices of $q$. Only the last constraint $(\omega_q - \frac78\chi \leq 0$) is dependent on $q$. The first nine rows of the constraint matrix are always given by
\begin{align*}
	\mA_q(1\colon9,\colon) &= \begin{bmatrix}
		1 &  -1 &  0 &  0 &  0 &  0 &  0 \\ 
		0 &  1 &  -1 &  0 &  0 &  0 &  0 \\ 
		0 &  0 &  1 &  -1 &  0 &  0 &  0 \\ 
		0 &  0 &  0 &  1 &  -1 &  0 &  0 \\ 
		0 &  0 &  0 &  0 &  1 &  -1 &  0 \\ 
		-1 &  0 &  0 &  0 &  0 &  0 &  1 \\ 
		0 &  -1 &  -1 &  0 &  0 &  0 &  2 \\ 
		0 &  0 &  0 &  0 &  -3 &  0 &  3 \\ 
		0 &  0 &  0 &  0 &  0 &  0 &  -1  
	\end{bmatrix}.
\end{align*}
The last row has two entries: $\mA_q(10,7) = -\frac78$ and $\mA_q(10,q) = 1$. The objective function is
\begin{align*}
	\vc_q^T \vx = \frac{7q}{8(q+1)} \chi - \sum_{j = 1}^q \frac{1}{j(j+1)}\omega_j,
\end{align*}
where $\vc_q$ is the objective function vector for $q \in \{1,2,\hdots, 6\}$. The matrix $\mC = \begin{bmatrix} \vc_1 & \vc_2 & \cdots & \vc_6 \end{bmatrix}$ of objective function vectors is given by:
\begin{equation*}
	\mC = \begin{bmatrix}
		-\frac{1}{2} &-\frac{1}{2} &-\frac{1}{2} &-\frac{1}{2} &-\frac{1}{2} &-\frac{1}{2} \\
		0 & -\frac{1}{6} &-\frac{1}{6} &-\frac{1}{6} &-\frac{1}{6} &-\frac{1}{6} \\
		0 & 0 & -\frac{1}{12} &-\frac{1}{12} &-\frac{1}{12} &-\frac{1}{12} \\
		0 & 0 & 0 & -\frac{1}{20} &-\frac{1}{20} &-\frac{1}{20} \\
		0 & 0 & 0 & 0 & -\frac{1}{30} &-\frac{1}{30} \\
		0 & 0 & 0 & 0 & 0 & -\frac{1}{42} \\
		\frac{7}{16} & \frac{7}{12} & \frac{21}{32} & \frac{7}{10} & \frac{35}{48} & \frac{3}{4} &  \\
	\end{bmatrix}
\end{equation*}
Given the matrices and vectors listed above, a few simple computations confirm that $\mA_q^T \vy_q = \vc_q$ indeed holds for every $q \in \{1,2, \hdots 6\}$.

\subsection{Dual variables for LP in Figure~\ref{fig:lpB}}
Tables~\ref{tab:lp34-1b} and~\ref{tab:lp34-2b} give optimal dual variables for the linear program in Figure~\ref{fig:lpB} for all possible values of $p$ and $q$. We also list the optimal solution to each linear program for each of the 40 cases, which is always less than or equal to $\frac12$. We highlight again that our proof does not rely on proving that this value is optimal---it suffices to use the dual variables to get a linear combination of inequalities such that summing the left hand sides produces the LP objective function, and summing the right hand sides yields the LP upper bound given in the last column of each row. We can use the strategy outlined in the previous subsection to check these dual variables and prove the upper bound by computing a matrix-vector product and a vector inner product for each case. We omit detailed calculations as they are straightforward but would take up a significant amount of space.

\begin{table*}
	\caption{Optimal dual variables, one for each of the 16 constraints, for the linear program in Figure~\ref{fig:lpB}, for each valid pair $(p,q)$ where $p \in \{1,2\}$. We omit the column for constraint number 6, $\omega_{6} - \omega_{7} \leq 0$, since the dual variable for this constraint is always zero for these cases. The last column is the bound on the LP objective (it is in fact the optimal LP solution value) that we can prove for each case using the dual variables. It is always $\frac12$ or less. }
	\label{tab:lp34-1b}
	\centering
	\begin{tabular}{r  | ccccccccccccccc |c }
		\toprule
		\rotatebox[origin=l]{90}{Constraints}& 
		{\footnotesize \rotatebox[origin=l]{90}{$\omega_{1} - \omega_{2} \leq 0$}} &
		{\footnotesize \rotatebox[origin=l]{90}{$\omega_{2} - \omega_{3} \leq 0$}} &
		{\footnotesize \rotatebox[origin=l]{90}{$\omega_{3} - \omega_{4} \leq 0$}} &
		{\footnotesize \rotatebox[origin=l]{90}{$\omega_{4} - \omega_{5} \leq 0$}} &
		{\footnotesize \rotatebox[origin=l]{90}{$\omega_{5} - \omega_{6} \leq 0$}} &
		{\footnotesize \rotatebox[origin=l]{90}{$\omega_{7} - \omega_{8} \leq 0$}} &
		{\footnotesize \rotatebox[origin=l]{90}{$\omega_{8} - \omega_{9} \leq 0$}} &
		{\footnotesize \rotatebox[origin=l]{90}{$\omega_{9} - \omega_{10} \leq 0$}} &
		{\footnotesize \rotatebox[origin=l]{90}{$\chi - \omega_1 \leq 1$}}&
		{\footnotesize \rotatebox[origin=l]{90}{$2\chi - \omega_2 - \omega_3 \leq 1$}}&
		{\footnotesize \rotatebox[origin=l]{90}{$3 \chi - \omega_3 - \omega_4 - \omega_5\leq 1$}}&
		{\footnotesize \rotatebox[origin=l]{90}{$4 \chi - 4\omega_7 \leq 1$}}&
		{\footnotesize \rotatebox[origin=l]{90}{$\omega_{p-1} \leq 1$}}&
		{\footnotesize \rotatebox[origin=l]{90}{$-\omega_{p} \leq -1$}}&
		{\footnotesize \rotatebox[origin=l]{90}{$\omega_q - \frac78 \chi \leq 0$}}& 
		{ \rotatebox[origin=l]{90}{LP bound}} \\
		\midrule
		$p$,$q$ & {\small$\textbf{1}$ } & {\small$\textbf{2}$ } & {\small$\textbf{3}$ } & {\small$\textbf{4}$ } & {\small$\textbf{5}$ } & {\small$\textbf{7}$ } & {\small$\textbf{8}$ } & {\small$\textbf{9}$ } & {\small$\textbf{10}$ } & {\small$\textbf{11}$ } & {\small$\textbf{12}$ } & {\small$\textbf{13}$ } & {\small$\textbf{14}$ } & {\small$\textbf{15}$ } & {\small$\textbf{16}$ } & \\ \midrule
		1, 1  & 0 & 0 & 0 & 0 & 0 & 0 & 0 & 0 & 0 & 0 & 0 & 0 & 0  & $\frac{4}{7} $ & $\frac{1}{14} $ & $\frac{3}{7} $\\ \midrule
		2   & $\frac{1}{6} $& 0 & 0 & 0 & 0 & 0 & 0 & 0  & $\frac{1}{12} $& 0 & 0 & 0 & 0  & $\frac{7}{12} $& 0  & $\frac{1}{2} $\\ \midrule
		3   & $\frac{3}{32} $ & $\frac{1}{192} $& 0 & 0 & 0 & 0 & 0 & 0 & 0  & $\frac{5}{64} $& 0 & 0 & 0  & $\frac{19}{32} $& 0  & $\frac{31}{64} $\\ \midrule
		4   & $\frac{1}{10} $ & $\frac{1}{30} $ & $\frac{1}{20} $& 0 & 0 & 0 & 0 & 0 & 0  & $\frac{1}{10} $& 0 & 0 & 0  & $\frac{3}{5} $& 0  & $\frac{1}{2} $\\ \midrule
		5   & $\frac{47}{450} $& 0  & $\frac{13}{900} $& 0 & 0 & 0 & 0 & 0 & 0  & $\frac{14}{225} $ & $\frac{8}{225} $& 0 & 0  & $\frac{136}{225} $ & $\frac{1}{450} $ & $\frac{37}{75} $\\ \midrule
		6   & $\frac{3}{28} $& 0  & $\frac{5}{252} $ & $\frac{17}{1260} $ & $\frac{1}{42} $& 0 & 0 & 0 & 0  & $\frac{5}{84} $ & $\frac{11}{252} $& 0 & 0  & $\frac{17}{28} $& 0  & $\frac{125}{252} $\\ \midrule
		7   & $\frac{7}{64} $& 0  & $\frac{37}{2016} $ & $\frac{257}{20160} $ & $\frac{1}{42} $& 0 & 0 & 0 & 0  & $\frac{11}{192} $ & $\frac{179}{4032} $ & $\frac{1}{224} $& 0  & $\frac{39}{64} $& 0  & $\frac{2003}{4032} $\\ \midrule
		8   & $\frac{1}{9} $& 0  & $\frac{13}{756} $ & $\frac{23}{1890} $ & $\frac{1}{42} $ & $\frac{1}{72} $& 0 & 0 & 0  & $\frac{1}{18} $ & $\frac{17}{378} $ & $\frac{1}{126} $& 0  & $\frac{11}{18} $& 0  & $\frac{94}{189} $\\ \midrule
		9   & $\frac{9}{80} $& 0  & $\frac{41}{2520} $ & $\frac{59}{5040} $ & $\frac{1}{42} $ & $\frac{1}{40} $ & $\frac{1}{90} $& 0 & 0  & $\frac{13}{240} $ & $\frac{229}{5040} $ & $\frac{3}{280} $& 0  & $\frac{49}{80} $& 0  & $\frac{2509}{5040} $\\ \midrule
		10   & $\frac{5}{44} $& 0  & $\frac{43}{2772} $ & $\frac{157}{13860} $ & $\frac{1}{42} $ & $\frac{3}{88} $ & $\frac{2}{99} $ & $\frac{1}{110} $& 0  & $\frac{7}{132} $ & $\frac{127}{2772} $ & $\frac{1}{77} $& 0  & $\frac{27}{44} $& 0  & $\frac{1381}{2772} $\\ \midrule
		2, 2  & 0 & 0 & 0 & 0 & 0 & 0 & 0 & 0  & $\frac{1}{12} $& 0 & 0 & 0  & $\frac{1}{12} $ & $\frac{1}{6} $& 0  & $\frac{1}{2} $\\ \midrule
		3  & 0  & $\frac{1}{192} $& 0 & 0 & 0 & 0 & 0 & 0 & 0  & $\frac{5}{64} $& 0 & 0 & 0  & $\frac{3}{32} $& 0  & $\frac{31}{64} $\\ \midrule
		4  & 0  & $\frac{1}{30} $ & $\frac{1}{20} $& 0 & 0 & 0 & 0 & 0 & 0  & $\frac{1}{10} $& 0 & 0 & 0  & $\frac{1}{10} $& 0  & $\frac{1}{2} $\\ \midrule
		5  & 0 & 0  & $\frac{13}{900} $& 0 & 0 & 0 & 0 & 0 & 0  & $\frac{14}{225} $ & $\frac{8}{225} $& 0 & 0  & $\frac{47}{450} $ & $\frac{1}{450} $ & $\frac{37}{75} $\\ \midrule
		6  & 0 & 0  & $\frac{5}{252} $ & $\frac{17}{1260} $ & $\frac{1}{42} $& 0 & 0 & 0 & 0  & $\frac{5}{84} $ & $\frac{11}{252} $& 0 & 0  & $\frac{3}{28} $& 0  & $\frac{125}{252} $\\ \midrule
		7  & 0 & 0  & $\frac{37}{2016} $ & $\frac{257}{20160} $ & $\frac{1}{42} $& 0 & 0 & 0 & 0  & $\frac{11}{192} $ & $\frac{179}{4032} $ & $\frac{1}{224} $& 0  & $\frac{7}{64} $& 0  & $\frac{2003}{4032} $\\ \midrule
		8  & 0 & 0  & $\frac{13}{756} $ & $\frac{23}{1890} $ & $\frac{1}{42} $ & $\frac{1}{72} $& 0 & 0 & 0  & $\frac{1}{18} $ & $\frac{17}{378} $ & $\frac{1}{126} $& 0  & $\frac{1}{9} $& 0  & $\frac{94}{189} $\\ \midrule
		9  & 0 & 0  & $\frac{41}{2520} $ & $\frac{59}{5040} $ & $\frac{1}{42} $ & $\frac{1}{40} $ & $\frac{1}{90} $& 0 & 0  & $\frac{13}{240} $ & $\frac{229}{5040} $ & $\frac{3}{280} $& 0  & $\frac{9}{80} $& 0  & $\frac{2509}{5040} $\\ \midrule
		10  & 0 & 0  & $\frac{43}{2772} $ & $\frac{157}{13860} $ & $\frac{1}{42} $ & $\frac{3}{88} $ & $\frac{2}{99} $ & $\frac{1}{110} $& 0  & $\frac{7}{132} $ & $\frac{127}{2772} $ & $\frac{1}{77} $& 0  & $\frac{5}{44} $& 0  & $\frac{1381}{2772} $ \\
		\bottomrule
	\end{tabular}
\end{table*}

\begin{table*}
	\caption{Optimal dual variables, one for each of the 16 constraints, for the linear program in Figure~\ref{fig:lpB}, for each pair $(p,q)$ when $p \in \{3,4,5\}$. We omit columns for constraint 1 ($\omega_{1} - \omega_{2} \leq 0$) and constraint 10 ($\chi - \omega_1 \leq 1$), as dual variables for these are always zero for these cases. }
	\label{tab:lp34-2b}
	\centering
	\begin{tabular}{ r | cccccccccccccc |c }
		\toprule
		\rotatebox[origin=l]{90}{Constraints}& 
		{\footnotesize \rotatebox[origin=l]{90}{$\omega_{2} - \omega_{3} \leq 0$}} &
		{\footnotesize \rotatebox[origin=l]{90}{$\omega_{3} - \omega_{4} \leq 0$}} &
		{\footnotesize \rotatebox[origin=l]{90}{$\omega_{4} - \omega_{5} \leq 0$}} &
		{\footnotesize \rotatebox[origin=l]{90}{$\omega_{5} - \omega_{6} \leq 0$}} &
		{\footnotesize \rotatebox[origin=l]{90}{$\omega_{6} - \omega_{7} \leq 0$}} &
		{\footnotesize \rotatebox[origin=l]{90}{$\omega_{7} - \omega_{8} \leq 0$}} &
		{\footnotesize \rotatebox[origin=l]{90}{$\omega_{8} - \omega_{9} \leq 0$}} &
		{\footnotesize \rotatebox[origin=l]{90}{$\omega_{9} - \omega_{10} \leq 0$}} &
		{\footnotesize \rotatebox[origin=l]{90}{$2\chi - \omega_2 - \omega_3 \leq 1$}}&
		{\footnotesize \rotatebox[origin=l]{90}{$3 \chi - \omega_3 - \omega_4 - \omega_5\leq 1$}}&
		{\footnotesize \rotatebox[origin=l]{90}{$4 \chi - 4\omega_7 \leq 1$}}&
		{\footnotesize \rotatebox[origin=l]{90}{$\omega_{p-1} \leq 1$}}&
		{\footnotesize \rotatebox[origin=l]{90}{$-\omega_{p} \leq -1$}}&
		{\footnotesize \rotatebox[origin=l]{90}{$\omega_q - \frac78 \chi \leq 0$}} &
		{ \rotatebox[origin=l]{90}{LP bound}}  \\
		\midrule
		$p$, $q$ & {\small$\textbf{2}$ } & {\small$\textbf{3}$ } & {\small$\textbf{4}$ } & {\small$\textbf{5}$ } & {\small$\textbf{6}$ } & {\small$\textbf{7}$ } & {\small$\textbf{8}$ } & {\small$\textbf{9}$ } & {\small$\textbf{11}$ } & {\small$\textbf{12}$ } & {\small$\textbf{13}$ } & {\small$\textbf{14}$ } & {\small$\textbf{15}$ } & {\small$\textbf{16}$ } &\\ \midrule
		3, 3  & 0 & 0 & 0 & 0 & 0 & 0 & 0 & 0  & $\frac{5}{64} $& 0 & 0  & $\frac{5}{64} $ & $\frac{1}{192} $& 0  & $\frac{31}{64} $\\ \midrule
		4  & 0  & $\frac{1}{20} $& 0 & 0 & 0 & 0 & 0 & 0  & $\frac{1}{10} $& 0 & 0  & $\frac{1}{10} $ & $\frac{1}{30} $& 0  & $\frac{1}{2} $\\ \midrule
		5  & 0  & $\frac{13}{900} $& 0 & 0 & 0 & 0 & 0 & 0  & $\frac{14}{225} $ & $\frac{8}{225} $& 0  & $\frac{14}{225} $& 0  & $\frac{1}{450} $ & $\frac{37}{75} $\\ \midrule
		6  & 0  & $\frac{5}{252} $ & $\frac{17}{1260} $ & $\frac{1}{42} $& 0 & 0 & 0 & 0  & $\frac{5}{84} $ & $\frac{11}{252} $& 0  & $\frac{5}{84} $& 0 & 0  & $\frac{125}{252} $\\ \midrule
		7  & 0  & $\frac{37}{2016} $ & $\frac{257}{20160} $ & $\frac{1}{42} $& 0 & 0 & 0 & 0  & $\frac{11}{192} $ & $\frac{179}{4032} $ & $\frac{1}{224} $ & $\frac{11}{192} $& 0 & 0  & $\frac{2003}{4032} $\\ \midrule
		8  & 0  & $\frac{13}{756} $ & $\frac{23}{1890} $ & $\frac{1}{42} $& 0  & $\frac{1}{72} $& 0 & 0  & $\frac{1}{18} $ & $\frac{17}{378} $ & $\frac{1}{126} $ & $\frac{1}{18} $& 0 & 0  & $\frac{94}{189} $\\ \midrule
		9  & 0  & $\frac{41}{2520} $ & $\frac{59}{5040} $ & $\frac{1}{42} $& 0  & $\frac{1}{40} $ & $\frac{1}{90} $& 0  & $\frac{13}{240} $ & $\frac{229}{5040} $ & $\frac{3}{280} $ & $\frac{13}{240} $& 0 & 0  & $\frac{2509}{5040} $\\ \midrule
		10  & 0  & $\frac{43}{2772} $ & $\frac{157}{13860} $ & $\frac{1}{42} $& 0  & $\frac{3}{88} $ & $\frac{2}{99} $ & $\frac{1}{110} $ & $\frac{7}{132} $ & $\frac{127}{2772} $ & $\frac{1}{77} $ & $\frac{7}{132} $& 0 & 0  & $\frac{1381}{2772} $\\ \midrule
		4, 4   & $\frac{1}{10} $& 0 & 0 & 0 & 0 & 0 & 0 & 0  & $\frac{1}{10} $& 0 & 0  & $\frac{1}{5} $ & $\frac{1}{20} $& 0  & $\frac{1}{2} $\\ \midrule
		5   & $\frac{3}{64} $& 0 & 0 & 0 & 0 & 0 & 0 & 0  & $\frac{3}{64} $ & $\frac{1}{20} $& 0  & $\frac{23}{160} $& 0  & $\frac{1}{60} $ & $\frac{157}{320} $\\ \midrule
		6   & $\frac{5}{112} $& 0  & $\frac{1}{280} $ & $\frac{1}{42} $& 0 & 0 & 0 & 0  & $\frac{5}{112} $ & $\frac{3}{56} $& 0  & $\frac{1}{7} $& 0 & 0  & $\frac{55}{112} $\\ \midrule
		7   & $\frac{39}{896} $& 0  & $\frac{1}{280} $ & $\frac{1}{42} $& 0 & 0 & 0 & 0  & $\frac{39}{896} $ & $\frac{3}{56} $ & $\frac{1}{224} $ & $\frac{9}{64} $& 0 & 0  & $\frac{63}{128} $\\ \midrule
		8   & $\frac{43}{1008} $& 0  & $\frac{1}{280} $ & $\frac{1}{42} $& 0  & $\frac{1}{72} $& 0 & 0  & $\frac{43}{1008} $ & $\frac{3}{56} $ & $\frac{1}{126} $ & $\frac{5}{36} $& 0 & 0  & $\frac{71}{144} $\\ \midrule
		9   & $\frac{47}{1120} $& 0  & $\frac{1}{280} $ & $\frac{1}{42} $& 0  & $\frac{1}{40} $ & $\frac{1}{90} $& 0  & $\frac{47}{1120} $ & $\frac{3}{56} $ & $\frac{3}{280} $ & $\frac{11}{80} $& 0 & 0  & $\frac{79}{160} $\\ \midrule
		10   & $\frac{51}{1232} $& 0  & $\frac{1}{280} $ & $\frac{1}{42} $& 0  & $\frac{3}{88} $ & $\frac{2}{99} $ & $\frac{1}{110} $ & $\frac{51}{1232} $ & $\frac{3}{56} $ & $\frac{1}{77} $ & $\frac{3}{22} $& 0 & 0  & $\frac{87}{176} $\\ \midrule
		5, 5  & 0  & $\frac{8}{85} $& 0 & 0 & 0 & 0 & 0 & 0 & 0  & $\frac{8}{85} $& 0  & $\frac{16}{85} $& 0  & $\frac{31}{510} $ & $\frac{41}{85} $\\ \midrule
		6  & 0  & $\frac{8}{85} $& 0  & $\frac{31}{510} $& 0 & 0 & 0 & 0 & 0  & $\frac{8}{85} $& 0  & $\frac{16}{85} $& 0  & $\frac{22}{595} $ & $\frac{41}{85} $\\ \midrule
		7  & 0  & $\frac{8}{85} $& 0  & $\frac{31}{510} $ & $\frac{22}{595} $& 0 & 0 & 0 & 0  & $\frac{8}{85} $& 0  & $\frac{16}{85} $& 0  & $\frac{13}{680} $ & $\frac{41}{85} $\\ \midrule
		8  & 0  & $\frac{8}{85} $& 0  & $\frac{31}{510} $ & $\frac{22}{595} $ & $\frac{13}{680} $& 0 & 0 & 0  & $\frac{8}{85} $& 0  & $\frac{16}{85} $& 0  & $\frac{4}{765} $ & $\frac{41}{85} $\\ \midrule
		9  & 0  & $\frac{3}{32} $& 0  & $\frac{29}{480} $ & $\frac{41}{1120} $ & $\frac{1}{40} $ & $\frac{1}{90} $& 0 & 0  & $\frac{3}{32} $ & $\frac{1}{640} $ & $\frac{3}{16} $& 0 & 0  & $\frac{309}{640} $\\ \midrule
		10  & 0  & $\frac{41}{440} $& 0  & $\frac{79}{1320} $ & $\frac{111}{3080} $ & $\frac{3}{88} $ & $\frac{2}{99} $ & $\frac{1}{110} $& 0  & $\frac{41}{440} $ & $\frac{7}{1760} $ & $\frac{41}{220} $& 0 & 0  & $\frac{851}{1760} $ \\
		\bottomrule
	\end{tabular}
\end{table*}

\subsection{Challenges in Avoiding Case Analysis}
\label{app:badcases}
It is natural to wonder whether we can rule out or simultaneously handle a large number of cases for Lemmas~\ref{lem:lp12} and~\ref{lem:lp34}, rather than check a different set of dual variables for 46 small linear programs. However, there are several challenges in escaping from this lengthy case analysis. First of all, the 46 linear programs we have considered all have feasible solutions, so we cannot rule any out on the basis that they they correspond to impossible cases. The optimal solution for many cases is exactly $\frac{1}{2}$, so the inequalities we apply to bound $\frac{\pr{M}}{x_e}$ for these cases must be tight. For many of the other cases, the optimal solution is very close to $\frac12$. We attempted to use looser bounds to prove a $\frac{1}{2}$ bound for multiple cases at once, but were unsuccessful. Each case seems to require a slightly different argument in order to prove the needed bound. Adding more constraints to the linear programs also did not lead to a simpler analysis.

A second natural strategy is to try to first identify and prove which values of $x_e$, $p$, and $q$ correspond to the worst cases, and then simply confirm that the probability bound holds for these cases. However, the worst case values for $x_e$, $p$, and $q$ are not obvious, and are in fact somewhat counterintuitive. At first glance it may appear that the worse case is when $q$ is as large as possible and $p$ is as small as possible. This maximizes the expected number of colors that want a node in $e$ for a randomly chosen $\rho \in (\frac12, \frac78)$, leading to a higher probability of making a mistake at $e$. However it is important to recall that the goal is to bound $\frac{\pr{M}}{x_e}$, and not just $\pr{M}$, so this line of reasoning does not apply. As it turns out, the worst case scenario (i.e., largest optimal LP solution) for $x_e \in [\frac12, \frac34)$ is when $q \in \{2,4\}$, but we do not have a proof (other than by checking all cases) for why this case leads to the largest value of $\frac{\pr{M}}{x_e}$.

Finally, another natural question is whether we can avoid a lengthy case analysis by choosing an interval other than $I = (\frac12, \frac78)$ when applying Algorithm~\ref{alg:gen}. It is especially tempting to use an interval $(a,b)$ where $b < \frac78$, since this would potentially decrease the maximum number of colors that want a node in $e = (u,v)$. This is desirable because if fewer colors want a node in $e$, then we have fewer color threshold variables to worry about and fewer cases to consider. The following result (Lemma~\ref{lem:badcases} in the main text) shows why this approach will fail. 
\begin{lemma}
	\label{lem:badcases-restated}
	Let $Y$ denote the node color map returned by running Algorithm~\ref{alg:gen} with $I = (a,b) \subseteq (0,1)$ on an edge-colored graph.
	\begin{itemize}
		\item If $a > \frac12$, there exists a feasible LP solution such that $\pr{\eim} > \frac{4}{3} x_e$. 
		\item If $a \leq \frac12$ and $b < \frac78$, there exists a feasible LP solution such that $\pr{\eim} > \frac43 x_e$. 
	\end{itemize}
\end{lemma}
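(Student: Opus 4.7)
The plan is to prove each of the two claims by exhibiting an explicit feasible LP solution on a small edge-colored graph (so $r=2$) for which the randomized output of Algorithm~\ref{alg:gen} makes a mistake at a distinguished edge with probability strictly greater than $(4/3)x_e$. Throughout, I will work under the same convention used in the proofs of Lemmas~\ref{lem:lp12} and~\ref{lem:lp34}: $\prc{\eim}{\rho \leq x_e} = 1$. This is the adversarial reading of the ``arbitrary color'' rule under which the $4/3$ upper bound is actually proved in the paper, so it is the correct regime in which to demonstrate tightness.

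For Part~1 ($a > 1/2$), I would start from the $k = 3$ integrality gap graph of Lemma~\ref{lem:intgap} and perturb it slightly. Set $\delta = (a - 1/2)/2 > 0$, and for each node $(i,j)$ with $i < j$ in $\{1,2,3\}$ assign $x_{(i,j)}^i = 1/2 - \delta$, $x_{(i,j)}^j = 1/2 + \delta$, and $x_{(i,j)}^\ell = 1$ for the remaining color $\ell$. Every node's color-sum equals $k - 1 = 2$, so the LP is feasible, and for the color-$1$ edge $e_1$ I would observe $x_{e_1} = 1/2 - \delta$. Because $a > 1/2 + \delta$ by construction, every $\rho \in I$ falls in the regime where colors $\{1,2\}$ want node $(1,2)$ and colors $\{1,3\}$ want node $(1,3)$; under a uniformly random $\pi$ on $\{1,2,3\}$ the probability that $Y[(1,2)] = Y[(1,3)] = 1$ is $1/3$, so $\pr{e_1 \in \mathcal{M}_Y} = 2/3$. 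The ratio then equals $(2/3)/(1/2 - \delta) = 8/(9 - 6a)$, which is strictly greater than $4/3$ for every $a > 1/2$.

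For Part~2 ($a \leq 1/2$ and $b < 7/8$), I would use a single $5$-color graph consisting of one edge $e = (u,v)$ of color $1$, with $x_u^1 = x_u^2 = x_u^3 = 2/3$, $x_u^4 = x_u^5 = 1$, $x_v^1 = x_v^4 = x_v^5 = 2/3$, and $x_v^2 = x_v^3 = 1$. Both node-sums equal $4 = k - 1$, the LP is feasible, and $x_e = 2/3$. For $\rho > 2/3$, colors $\{2,3\}$ want $u$ and colors $\{4,5\}$ want $v$ while color $1$ wants both endpoints, so avoiding a mistake requires color $1$ to sit on top of all four competitors in $\pi$; this has probability $1/5$ under a uniform permutation of five colors, giving conditional mistake probability $4/5$. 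For $\rho \leq 2/3 = x_e$, the convention above gives conditional mistake probability $1$.

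Integrating over $\rho \in I$, I would obtain ratio $3/2 > 4/3$ trivially in the case $b \leq 2/3$, and in the case $b > 2/3$ a direct computation gives
\begin{equation*}
\frac{\pr{\eim}}{x_e} \;=\; \frac{2 + 12 b - 15 a}{10\,(b - a)},
\end{equation*}
which exceeds $4/3$ if and only if $5a + 4b < 6$; this is immediate from $5a \leq 5/2$ and $4b < 7/2$. The main work I anticipate is bookkeeping rather than conceptual difficulty: carefully verifying the LP constraints (node-sums, box bounds, and $x_e \geq x_w^c$ at both endpoints) for both instances, handling the boundary split $b \leq 2/3$ versus $b > 2/3$ in Part~2 separately so that the algebraic ratio calculation in the latter case applies, and tracking the conditional mistake probabilities through the max-of-permutation argument.
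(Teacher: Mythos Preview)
Your proposal is correct and follows essentially the same approach as the paper. For Part~1 you dress the construction up as a perturbation of the $k=3$ integrality-gap triangle, but restricted to the distinguished edge $e_1$ it is exactly the paper's single-edge instance (three colors, color $c$ wanting both endpoints and one distinct rival color wanting each endpoint, yielding $\pr{\eim}=2/3$); for Part~2 your five-color single-edge instance with all relevant node-color distances equal to $2/3$, the $4/5$ conditional mistake probability above $2/3$, and the resulting inequality $5a+4b<6$ match the paper's argument verbatim up to algebraic rearrangement.
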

\begin{proof}
	If $a > \frac12$, then there exists some $\varepsilon >  0$ such that $a = \frac{1+\varepsilon}{2}$. Consider an edge $e = (u,v)$ with color $c = \ell(e)$ whose LP variables satisfy $x_e = x_u^c = x_v^c = \frac{1-\varepsilon}{2}$ and where $x_u^i = x_v^j = \frac{1+\varepsilon}{2} = a$ for two colors $i \neq j$ that are both distinct from $c$. This is feasible as long all as $x_u^t = 1$ for $t \notin \{c, i\}$ and $x_v^t = 1$ for $t \notin \{c,j\}$. For every $\rho \in (a,b)$, color $i$ will want node $u$, color $j$ will want node $v$, and color $c$ will want both of them. Therefore, the probability of making a mistake at $e$ is exactly
	\begin{align*}
		\pr{\eim} = \frac{2}{3} = \frac{2}{3} \cdot \frac{1}{x_e} x_e = \frac{2}{3} \cdot \frac{2}{1 - \varepsilon} x_e > \frac43 x_e.
	\end{align*}
	
	If instead $a \leq \frac12$, consider the feasible solution where $x_e = x_u^c = x_v^c = x_u^i = x_u^g = x_v^j = x_v^h = \frac23$ where $\{c,g,h,i,j\}$ are all distinct colors. If $b < \frac{2}{3}$ we are guaranteed to make a mistake at $x_e$, so assume that $\frac23 \leq b < \frac78$. Then we can calculate that
	\begin{align*}
		\frac{\pr{\eim}}{x_e} = \frac{1}{b-a} \left( \frac{2}{3} - a + \frac{4}{5} \left( b - \frac{2}{3} \right) \right)\cdot \frac32 = \frac{2 - 3a}{10(b-a)} + \frac{6}{5} > \frac{2 - 3 \cdot \frac12}{10(\frac78-\frac12)} + \frac{6}{5} = \frac43. 
	\end{align*}
\end{proof}
The first case in this lemma indicates that we should not use an interval $(a,b)$ where $a > \frac12$. The second case rules out the hope of choosing some $b < \frac78$ in order to make the analysis easier. The interval $I = (\frac12, \frac78)$ is chosen to avoid these two issues and be as simple to analyze as possible. 

Lemma~\ref{lem:badcases} and the other challenges highlighted above do not imply that it is impossible to design a $\frac{4}{3}$-approximation algorithm with a simpler approximation guarantee proof. Indeed, a simpler proof of a tight approximation is a compelling direction for future research. Nevertheless, this discussion highlights why it is challenging to escape from using a length case analysis when proving that $\pr{\eim} \leq \frac43 x_e$ for Algorithm~\ref{alg:gen}.
\section{Relation to Multiway Cut Objectives}
\label{app:multiway}
In this section we prove that the canonical \minecc{} LP relaxation is always at least as tight as the \textsc{Node-MC} LP relaxation, and that it can be strictly tighter in some cases. Therefore, although these approaches lead to a matching worst-case guarantee when $r$ is arbitrarily large, applying the canonical relaxation leads to better results in many cases. We will also highlight why our approximation guarantees for small values of $r$ are much better than any guarantees we obtain by reducing to \textsc{Hyper-MC}.


\subsection{Reductions from \minecc{} to \textsc{Node-MC} and \textsc{Hyper-MC}}
\label{sec:reduced}
As shown previously~\cite{amburg2020clustering}, an instance $H = (V,E, C, \ell)$ of \minecc{} can be reduced to an instance of \textsc{Node-MC} on a graph $G$ via the following steps
\begin{itemize}
	\item For every node $v \in V$, add a corresponding node $v$ to $G$ with weight $\infty$.
	\item For each color $i \in \{1,2,\hdots , k\}$, place a terminal node $t_i$ in $G$.
	\item For each hyperedge $e \in E$ with weight $w_e$, add a node $v_e$ to $G$ with weight $w_e$ and place an edge $(v, v_e)$ for each $v \in e$.
\end{itemize}
The \minecc{} objective on $H$ is then approximation equivalent to \textsc{Node-MC} on $G$. \textsc{Node-MC} is in turn approximation equivalent to \textsc{Hyper-MC}~\cite{chekuri2011submodular}.
Although not previously shown explicitly, there is a particularly simple reduction from \minecc{} to \textsc{Hyper-MC}: introduce a terminal node $t_i$ for each color $i$, and then for a hyperedge $e \in E$ of color $i$, add terminal node $t_i$ to that hyperedge and then ignore the hyperedge color.  Figure~\ref{fig:reductions} illustrates the procedure of reducing an instance of \minecc{} to \textsc{Node-MC} and \textsc{Hyper-MC}. We note the following simple equivalence result.
\begin{observation}
	\minecc{} with $k$ colors is approximation-equivalent to a special type of \textsc{Hyper-MC} problem on $k$ terminal nodes where every hyperedge contains a terminal node.
\end{observation}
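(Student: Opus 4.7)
The plan is to verify approximation equivalence in both directions, using the reduction already sketched in the paragraph just before the observation. Since the forward construction (\minecc{} $\to$ special \textsc{Hyper-MC}) is stated, the main task is to show that (i) this construction preserves the objective value \emph{exactly}, not just up to constants, and (ii) the reverse reduction works with the same exact-cost property. Once both hold for every feasible solution (not just optima), an $\alpha$-approximation on one side immediately yields an $\alpha$-approximation on the other.

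First I would formalize the forward direction. Given $H=(V,E,C,\ell)$ with $|C|=k$, build $H'=(V',E')$ with $V' = V \cup \{t_1,\dots,t_k\}$, and for each $e \in E$ of color $\ell(e)=i$ and weight $w_e$, create hyperedge $e' = e \cup \{t_i\}$ in $E'$ of weight $w_e$. By construction $H'$ has exactly $k$ terminals and each hyperedge contains a terminal. Given any coloring $Y: V \to C$, define the partition $P_i = \{v: Y[v]=i\} \cup \{t_i\}$; this is a valid multiway cut because distinct terminals land in distinct parts. The edge $e'$ is cut iff some $v \in e$ lies in a part $P_j$ with $j \neq i$, i.e.\ iff $Y[v] \neq \ell(e)$ for some $v \in e$, i.e.\ iff $e \in \mathcal{M}_Y$. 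Thus the \textsc{Hyper-MC} cost of $(P_1,\dots,P_k)$ equals the \minecc{} cost of $Y$.

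For the reverse direction, start with any special \textsc{Hyper-MC} instance $H' = (V',E')$ with terminals $T = \{t_1,\dots,t_k\}$ such that each $e' \in E'$ intersects $T$. Define $V = V' \setminus T$ and $C = [k]$. For each $e' \in E'$, pick any terminal $t_i \in e' \cap T$, set $\ell(e) = i$, and let $e = e' \setminus T$ with weight $w_{e'}$. (If $e' \cap T$ contains more than one terminal, $e'$ is necessarily cut in every valid multiway cut, contributing a fixed constant $W^*$ to every solution; we can equivalently drop such edges after adding $W^*$ to both objectives, which preserves approximation ratios only additively — so I would instead argue the cleaner case where each edge contains exactly one terminal, which is the image of the forward reduction, and note that the two-sided equivalence is what the observation asserts.) Given a multiway cut partition with $t_i \in P_i$, set $Y[v] = i$ when $v \in P_i$; the same calculation as before shows the two objectives coincide edge by edge.

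The main ``obstacle'' is really just a bookkeeping subtlety: whether the ``special'' class allows hyperedges with multiple terminals. The cleanest statement is to define the special class as precisely those instances that arise from the forward reduction (one terminal per hyperedge), in which case the two reductions are inverses up to relabeling and the cost equality is immediate. If a broader definition is used, the extra always-cut hyperedges contribute an additive constant on both sides, which preserves approximation ratios in the standard multiplicative sense only when that constant is zero; this is the only point requiring care, and I would resolve it by stating the observation for the natural image class of the forward reduction.
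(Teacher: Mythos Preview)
Your proposal is correct and matches the paper's approach: the paper only describes the forward reduction (add terminal $t_{\ell(e)}$ to each hyperedge $e$) and then states the observation as a ``simple equivalence result'' without a separate proof, so your writeup is actually more detailed than what appears there. One minor point: your worry about hyperedges containing multiple terminals is slightly overstated---such an edge is cut in every feasible multiway partition, so its weight $W^*$ is an unavoidable additive constant, and an $\alpha$-approximation on the reduced \minecc{} instance still lifts to an $\alpha$-approximation on the original \textsc{Hyper-MC} instance because $W^* + \alpha\cdot\mathrm{OPT}' \le \alpha(W^* + \mathrm{OPT}')$ whenever $\alpha \ge 1$. Your resolution of restricting to the one-terminal-per-hyperedge image class is nonetheless exactly the reading the paper intends.
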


\begin{figure}[t]
	\centering
	\includegraphics[width = .75\linewidth]{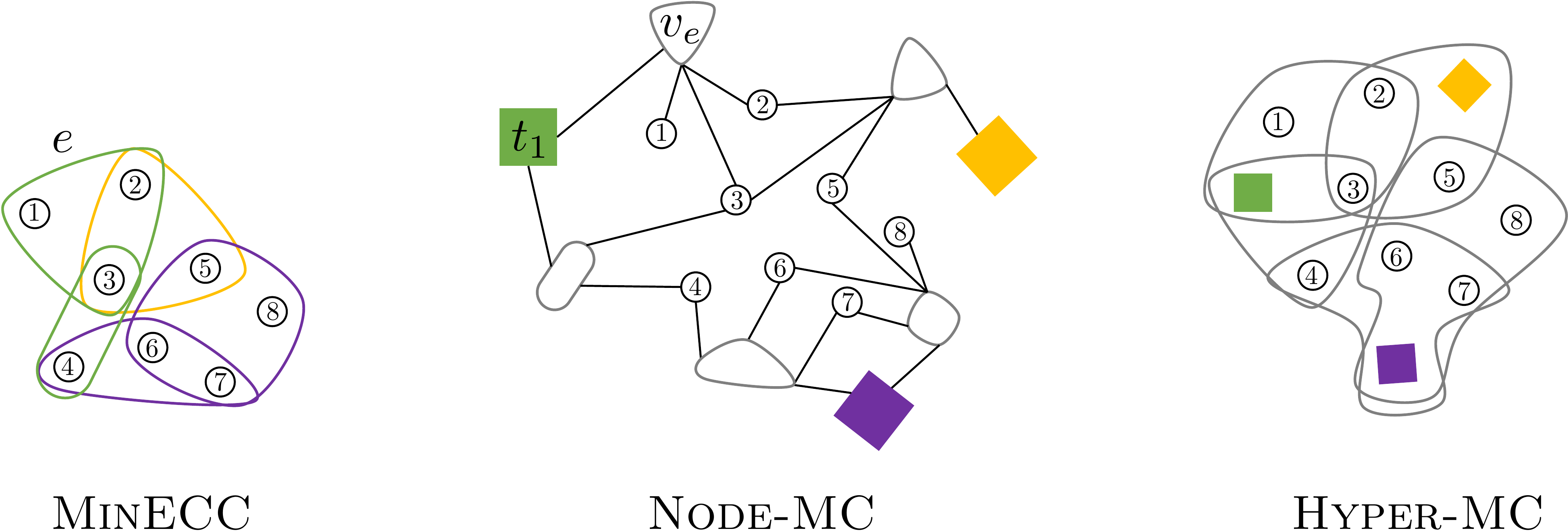}
	\caption{\minecc{} can be reduced to \textsc{Node-MC} and \textsc{Hyper-MC} in an approximation preserving way. Squares represent terminal nodes, one for each color. In the instance of \textsc{Node-MC}, original nodes (numbered circles) are given weight $\infty$, so we can only delete hyperedge nodes when separating terminal nodes. In the illustration, green is color 1, and we label one green edge $e = \{1,2,3\}$, which is converted to a new node $v_e$ that is attached to terminal $t_1$ in the \textsc{Node-MC} instance.}
	\label{fig:reductions}
\end{figure}

\paragraph{Approximation guarantees for \minecc{} via reductions}
There are a few known approximation guarantees for \textsc{Hyper-MC} in terms of the maximum hyperedge size $r$, but these do not imply any useful results for \minecc{}. When $r = 2$, \textsc{Hyper-MC} is the well-studied graph multiway cut objective~\cite{Dahlhaus94thecomplexity,calinescu2000}, but this has no direct bearing on~\minecc{} since an instance of \minecc{} with maximum hyperedge size $r \geq 2$ corresponds to an instance of \textsc{Hyper-MC} with maximum hyperedge size $r+1 \geq 3$. \citet{chekuri2011submodular} gave an $H_r$ approximation for \textsc{Hyper-MC} where $H_i$ is the $i$th harmonic number. However, this only implies a $1.833$-approximation for \minecc{} when $r = 3$, and is worse than a $2$-approximation when $r \geq 4$. The approximation guarantees we obtain for small values of $r$ by rounding the \minecc{} LP relaxation are therefore significantly better and than results obtained by reducing to generalized multiway cut objectives. In contrast, when $r$ is arbitrarily large, our approximation guarantee of $2(1- \frac1k)$ for rounding the \minecc{} LP relaxation matches the guarantee obtained by first reducing to \textsc{Node-MC} and rounding the LP relaxation for this objective~\cite{garg2004multiway}. In this appendix we explore the relationship between these linear programs in more depth.

\subsection{The \minecc{} LP is Tighter than the \textsc{Node-MC} LP}
Let $G = (V,E)$ be a graph with terminal nodes $T = \{t_1, t_2, \hdots, t_k\}$ and a weight $w_v \geq 0$ for each node $v \in V$. 
The distance-based LP relaxation for the~\textsc{Node-MC} objective on $G$ is shown in Figure~\ref{fig:nwmclp}, where $\mathcal{P}$ represents the set of all paths between pairs of terminal nodes. The formulation in~\eqref{eq:nwmc} uses a single variable $d_v$ for each node $v$, with the constraint that the distance between every pair of terminal nodes is at least 1. This requires an exponential number of path constraints. The bottom of Figure~\ref{fig:nwmclp} shows an alternative way to write the LP using more variables but only polynomially many constraints and variables. The two LPs are known to be equivalent~\cite{garg2004multiway}. 
The variable $d_u$ provides an indication for how strongly we wish to delete node $u$, and the 
variable $y_u^i$ is interpreted as the distance between node $u$ and cluster~$i$.

\begin{figure}
	\textbf{Node-MC LP}: \text{path constraints version} 
	\begin{equation}
		\boxed{
			\label{eq:nwmc}
			\begin{array}{lll}
				\min & \displaystyle{\sum_{v \in V-T}} w_v d_v & \\ 
				\text{s.t.} & \sum_{v \in p} d_v \geq 1 & \forall  p \in \mathcal{P} \\
				& d_v = 0 & \forall v \in T \\
				& d_v \geq 0 & \forall v \in V-T 
			\end{array}
		}
	\end{equation}
	\textbf{Node-MC LP}: \text{polynomial constraints version} 
	\begin{equation}
		\label{eq:nwmc2}
		\boxed{
			\begin{array}{lll}
				\min & \displaystyle{\sum_{v \in V-T}} w_v d_v & \\ 
				\text{s.t.} & y_v^i \leq y_u^i + d_v & \forall (u,v) \in E, \forall i \in [k]\\\
				& y_{t_i}^i = 0 & \forall i \in [k]\\\
				& y_{t_i}^j \geq 1 & \forall i, j \in [k], j \neq i\\ 
				& d_v \geq 0 & \forall v \in V-T 
			\end{array}
		}
	\end{equation}
	\caption{Two equivalent ways to write the distance-based LP relaxation for node-weighted multiway cut~\cite{garg2004multiway}. $\mathcal{P}$ is the set of all paths between terminal nodes.}
	\label{fig:nwmclp}
\end{figure}

Consider now an instance of \minecc{} encoded by an edge-colored hypergraph $H = (V,E, C, \ell)$ that we reduce to a node-weighted graph $G = (\hat{V}, \hat{E})$ using the strategy in Section~\ref{sec:reduced}. The node set $\hat{V} = T \cup V \cup V_E$ is made up of terminal nodes $T = \{t_i \colon i = 1, 2, \hdots , k\}$, the original node set $V$ from hypergraph $H$, and the node set $V_E = \{v_e \colon e \in E\}$. In Figure~\ref{fig:reductions}, these are represented by square nodes, numbered circular nodes, and irregular shaped hyperedge-nodes, respectively. The edge set $\hat{E} = E_H \cup E_T$ has two parts, defined by
\begin{align}
	\label{eq:edgeset}
	E_H &= \{(v, v_e) \colon \text{$v \in e$ in $H$} \} \\
	E_T &= \{(t_i, v_e) \colon \text{$\ell(e) = i$ in $H$} \}.
\end{align}
The node weight for each $u \in V \cup T$ is given by $w_u = \infty$, and the node weight for $v_e \in V_E$ is $w_e$. Focusing specifically on the formulation shown in~\eqref{eq:nwmc2}, we see that the \textsc{Node-MC} LP shares some similarities with the canonical \minecc{} relaxation. For example, both linear programs involve one variable for each node-color pair $(u,i) \in V \times C$. However, in terms of the hypergraph $H = (V,E, C, \ell)$, the \textsc{Node-MC} relaxation involves $O(k|V| + k|E|)$ variables and $O(k\sum_{e \in E} |e|)$ constraints overall, while the \minecc{} relaxation has $O(k|V| + |E|)$ variables and $O(k|V| + \sum_{e \in E} |e|)$ constraints. Although the two linear programs both have an integrality gap of $2(1 - \frac{1}{k})$, the following theorem proves that the \minecc{} relaxation will always be at least as tight as the lower bound obtained via the \textsc{Node-MC} relaxation.
\begin{theorem}
	\label{thm:lps-app}
	(Theorem~\ref{thm:lps} in the main text)
	The value of the \textsc{Node-MC} relaxation on $G = (\hat{V}, \hat{E})$ is at most the value of the \minecc{} relaxation on $H = (V,E,C,\ell)$. 
\end{theorem}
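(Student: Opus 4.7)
The plan is to exhibit a feasibility- and objective-preserving map that sends any feasible solution of the \minecc{} LP on $H$ to a feasible solution of the polynomial-constraint \textsc{Node-MC} LP on the reduced graph $G = (\hat{V}, \hat{E})$ with equal objective value. This immediately yields the claimed inequality on optimal LP values, and strict tightness is then witnessed by a small explicit instance.

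Given a feasible $\{x_v^i, x_e\}$ for the \minecc{} LP, I would define a \textsc{Node-MC} solution by setting $d_{v_e} := x_e$ for each $v_e \in V_E$ and $d_v := 0$ for every $v \in V$ (these $d$-values are forced to zero since $w_v = \infty$); $y_v^i := x_v^i$ for $v \in V$, $i \in C$; $y_{t_i}^i := 0$ and $y_{t_i}^j := 1$ for $j \neq i$; and at each $v_e$ with color $c := \ell(e)$, $y_{v_e}^c := x_e$ together with $y_{v_e}^j := 1$ for $j \neq c$. Since $\sum_{v \in \hat{V} \setminus T} w_v d_v = \sum_e w_e x_e$, the two LP objectives agree on this assignment, so the theorem reduces to checking feasibility of $(d, y)$.

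Next I would verify the triangle-type inequalities $y_v^i \leq y_u^i + d_v$ on each edge of $\hat{E}$, which split across the edge types in $E_T$ and $E_H$ from~\eqref{eq:edgeset}. Along $(t_i, v_e) \in E_T$ (with $i = \ell(e)$), the constraints reduce to $y_{v_e}^c \leq x_e$ and $y_{v_e}^j \leq 1 + x_e$, both immediate. Along $(v, v_e) \in E_H$ with $v \in e$, using $d_v = 0$, the constraint $y_v^j \leq y_{v_e}^j$ reduces to $x_v^c \leq x_e$ (true since $x_e = \max_{u \in e} x_u^c$) when $j = c$, and to $x_v^j \leq 1$ when $j \neq c$. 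The only case requiring real content is $y_{v_e}^j \leq y_v^j + d_{v_e}$ with $j \neq c$, which becomes $1 \leq x_v^j + x_e$. This is the main obstacle, and I would handle it via the pigeonhole-type reasoning behind Observation~\ref{xw}: the equality $\sum_i x_v^i = k-1$ combined with $x_v^i \leq 1$ for each $i \notin \{c,j\}$ gives $x_v^c + x_v^j \geq 1$, and then $x_v^c \leq x_e$ finishes the bound.

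For the strict-inequality claim I would use a star instance on $k+1$ nodes: $V = \{u, v_1, \dots, v_k\}$ with unit-weight edges $e_i = \{u, v_i\}$ of pairwise distinct colors $\ell(e_i) = i$. The \minecc{} LP has value $k-1$ since $\sum_i x_u^i = k-1$ forces $\sum_i x_{e_i} \geq k-1$, achievable by setting $x_{v_i}^i = 0$. On the \textsc{Node-MC} side, each $v_i$ is a dead-end attached only to $v_{e_i}$, so the only terminal-to-terminal paths have the form $t_i \,\text{--}\, v_{e_i} \,\text{--}\, u \,\text{--}\, v_{e_j} \,\text{--}\, t_j$, yielding constraints $d_{v_{e_i}} + d_{v_{e_j}} \geq 1$ for all $i \neq j$; this is the fractional vertex cover LP on $K_k$ and has optimum $k/2$. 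For all $k \geq 3$ we get $k/2 < k-1$, so the \minecc{} LP is strictly tighter on this family of instances.
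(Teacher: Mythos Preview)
Your proposal is correct and takes essentially the same approach as the paper: the same variable assignment $(d_{v_e}=x_e,\ d_v=0,\ y_v^i=x_v^i,\ y_{v_e}^c=x_e,\ y_{v_e}^j=1)$, the same constraint-by-constraint check across $E_T$ and $E_H$, and the same key inequality $1\le x_v^j+x_e$ derived from $\sum_i x_v^i=k-1$. Your strict-gap example is the natural $k$-color generalization of the paper's $k=3$ star; the computation $k/2$ versus $k-1$ is correct.
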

\begin{proof}
	Let $\textbf{X} = \{x_v^i, x_e \colon v \in V, e \in E, i \in [k] \}$ denote an arbitrary set of variables for the \minecc{} LP relaxation for $H$. Our goal is to use these to construct a set of feasible variables for the \textsc{Node-MC} LP relaxation on graph $G = (\hat{V}, \hat{E})$ with the same objective score. 
	This means we have to define $y_v^i$ and $d_v$ for each $v \in \hat{V}$ and $i \in [k]$. 
	Since there are multiple different types of nodes in $\hat{V}$, in order to simplify notation we will let $y_e^i = y_{v_e}^i$ and $d_e = d_{v_e}$ when we are considering a node $v_e \in V_E$ that is associated with a hyperedge $e \in E$. 
	Define the following set of variables for the \textsc{Node-MC} LP:
	\begin{itemize}
		\item Let $d_v = 0$ for $v \in V \cup T$.
		\item For $v_e \in V_E$, define
		$	y_{e}^j = 
		\begin{cases}
			x_e & \text{ if $\ell(e) = j$ in $H$} \\
			1 & \text{ otherwise.}
		\end{cases}$
		\item For $v_e \in V_E$, define $d_{e} = y_{e}^c$ where $c = \ell(e)$ in $H$.
		\item For $v \in V$, define $y_v^i = x_v^i$ for every $i \in [k]$.
		\item For $t_i \in T$ where $i \in [k]$, define $y_{t_i}^i = 0$, and $y_{t_i}^j = 1$ whenever $i \neq j$.
	\end{itemize}
	The objective for the \textsc{Node-MC} LP relaxation is then $\sum_{v_e \in V_e} w_e d_{e} = \sum_{e \in E} w_e x_e$, which matches the \minecc{} relaxation value. It only remains to check that the following {distance} constraints hold for all different types of edges $(u,v) \in \hat{E}$ and every color $j \in [k]$:
	\begin{align}
		\label{first}
		y_v^j &\leq y_u^j + d_v\\
		\label{secon}
		y_u^j &\leq y_v^j + d_u.
	\end{align}
	For $(v, v_e) \in E_H$, if $\ell(e) = c$ then $y_v^c = x_v^c \leq x_e = y_e^c = y_e^c + d_v$, since $d_v = 0$ for $v \in V$. Therefore, constraint~\eqref{first} holds in this case. Also, when $\ell(e) = c$, constraint~\eqref{secon} holds because $y_{e}^c = x_e = d_e \leq y_v^c + d_e$. 
	If $(v, v_e) \in E_H$ but we consider a color $i \neq c = \ell(e)$, then constraint~\eqref{first} is satisfied because $y_v^i = x_v^i \leq 1 = y_e^i = y_e^i + d_v$.
	Constraint~\eqref{secon} is satisfied as well since $y_e^i = 1 \leq x_v^i + x_v^c \leq x_v^i + x_e = y_v^i + d_e$. Here we have used the fact that $\sum_{i = 1}^k x_v^i = 1$ implies $1 \leq x_v^i + x_v^c$.
	Finally, for an edge $(v_e, t_c) \in E_T$ where $c = \ell(e)$, constraints~\eqref{first} and~\eqref{secon} become $y_e^j \leq y_{t_c}^j + d_e$ and  $y_{t_c}^j \leq y_e^j$.
	If $j \neq c$, then both of these hold because $y_{t_c}^j = y_e^j = 1$. Meanwhile, if $j = c$, then they follow from $y_e^j = d_e$ and $y_{t_i}^j = 0$.
\end{proof}

\paragraph{Strictly tighter instance.}
Theorem~\ref{thm:lps-app} shows that the \minecc{} relaxation is at least as tight of a lower bound as using the \textsc{Node-MC} relaxation. Although they have the same integrality gap, we can additionally show that the \minecc{} relaxation can be strictly tighter. Consider color set $C = \{1,2,3\}$ and let $H = (V,E)$ be a star graph with center node $v_0$ and three leaf nodes $\{v_1, v_2, v_3\}$, where edge $e_i = (v_0,v_i)$ has color $i$ for $i = 1,2,3$. The optimal \minecc{} solution will satisfy one of the edges and make mistakes at the two other edges. The \minecc{} LP relaxation will also have an optimal value of 2 by setting $x_{v_i}^i = 0$ for $i = 1,2,3$, and choosing either $x_{v_0}^i = \frac23$ for every $i = 1,2,3$, or $x_{v_0}^i = 1$ for exactly one $i \in \{1,2,3\}$. Meanwhile, if we reduce to an instance of \textsc{Node-MC}, each terminal node participates in one edge $(v_{e_i}, t_i)$ for $i \in \{1,2,3\}$. If we specifically consider the path constraints formulation of the \textsc{Node-MC} LP in~\eqref{eq:nwmc}, we can see that it is feasible to set $d_{e_i} = \frac12$, leading to an optimal LP solution of $\frac32$. This matches the worst case integrality gap.

\section{Additional Details for Vertex Cover Results}
\label{app:vc}
\begin{algorithm}
	\caption{$\textsf{PittVertexCover}(G)$}
	\label{alg:pitts}
	\begin{algorithmic}
		\STATE \textbf{Input:} $G = (V_G,E_G)$ with node weight $w_v \geq 0$ for each $v \in V_G$
		\STATE \textbf{Output:} Vertex cover $\mathcal{C} \subseteq V$.
		\STATE $\mathcal{C} \leftarrow \emptyset$  {\tt // Initialize empty vertex cover} 
		\FOR{$(u,v) \in E_G$}
		\IF{$u \notin \mathcal{C}$ and $v \notin \mathcal{C}$}
		\STATE Generate uniform random $\rho \in (0,1)$
		\IF{$\rho < \frac{w_v}{w_u + w_v}$}
		\STATE $\mathcal{C} \leftarrow \mathcal{C} \cup \{u\}$
		\ELSE 
		\STATE $\mathcal{C} \leftarrow \mathcal{C} \cup \{v\}$
		\ENDIF
		\ENDIF
		\ENDFOR
		\STATE Return $\mathcal{C}$
	\end{algorithmic}
\end{algorithm}
\begin{algorithm}
	\caption{$\textsf{PittColoring}(H)$}
	\label{alg:implicitpitts}
	\begin{algorithmic}
		\STATE{\bfseries Input:} Edge-colored hypergraph $H = (V,E,C, \ell)$ with weight $w(e) \geq 0$ for each $e \in E$
		\STATE {\bfseries Output:} Set $\mathcal{D} \subseteq E$ so that $H = (V, E-\mathcal{D},C,\ell)$ has no bad edge pairs
		\STATE $\mathcal{D} \leftarrow \emptyset$   
		\FOR{$v \in V$}
		\STATE $L_E(v) = [v(1) \;\; v(2) \;\; \cdots \;\; v(d_v)]$ {\tt // Indices for $v$'s edges, ordered by color}
		\STATE $f = 1$, $b = d_v$ \hspace{2.8cm}{\tt // Pointers to front/back of index array}
		\STATE {\tt // Move past all covered bad edge pairs containing $v$}
		\WHILE{$e_{v(b)} \in \mathcal{D}$ and $b > f$}
		\STATE $b \leftarrow b-1$
		\ENDWHILE
		\WHILE{$e_{v(f)} \in \mathcal{D}$ and $b > f$}
		\STATE $f \leftarrow f+1$
		\ENDWHILE
		\STATE {\tt // Handle uncovered bad edge pairs containing $v$}
		\WHILE{$\ell({v(f)}) \neq \ell({v(b)})$}
		\STATE Generate $\rho \in (0,1)$
		\IF{$\rho < \frac{w({v(f)})}{w({v(f)}) + w({v(b)})}$}
		\STATE $\mathcal{D} \leftarrow \mathcal{D} \cup \{e_{v(b)}\}$
		\WHILE{$e_{v(b)} \in \mathcal{D}$ and $b > f$}
		\STATE {\tt // Ignore any deleted edges}
		\STATE $b \leftarrow b-1$
		\ENDWHILE
		\ELSE 
		\STATE $\mathcal{D} \leftarrow \mathcal{D} \cup \{e_{v(f)}\}$
		\WHILE{$e_{v(f)} \in \mathcal{D}$ and $b > f$}
		\STATE {\tt // Ignore any deleted edges}
		\STATE $f \leftarrow f+1$
		\ENDWHILE
		\ENDIF
		\ENDWHILE
		\ENDFOR
		\STATE Return $\mathcal{D}$
	\end{algorithmic}
\end{algorithm}

\subsection{Reduction proofs}
\textbf{Proof of Theorem~\ref{thm:vc2cec}}

\textit{Proof.}
	See Figure~\ref{fig:vcreductions}. To construct the edge-colored hypergraph $H$, introduce a node $v_{ij}$ for each $(i,j) \in E$, and for each $u \in V$ define a hyperedge $e_u = \{v_{uj} \colon j \text{ where $(u,j) \in E)$}\}$. Let each hyperedge be associated with its own unique color. 
	Each node in $G$ corresponds to a hyperedge in $H$ and deleting a hyperedge in $H$ is equivalent to covering a node in $G$. Because all hyperedges have different colors, for every pair of overlapping hyperedges $(e_u, e_v)$ we know that either $e_u$ or $e_v$ must be deleted. This is equivalent to covering node $u$ or node $v$ when $(u,v) \in E$. By construction, the degree distribution in $G$ is exactly the hyperedge size distribution in $H$.
\hfill $\square$

\textbf{Proof of Theorem~\ref{thm:cec2vc}}

\textit{Proof.}
	See Figure~\ref{fig:vcreductions}.
	To construct the graph $G$, first introduce a node $v_e$ for every hyperedge $e \in E$. If two hyperedges $e \in E$ and $f \in E$ define a bad hyperedge pair 
	($|e\cap f| > 0$ and $\ell(e) \neq \ell(f)$), then we add an edge $(v_e, v_f)$ in the graph $G$. If $e \in E$ has weight $w_e$, we give node $v_e$ weight $w_e$ in $G$. Deleting a minimum weight set of hyperedges in $H$ to ensure remaining hyperedges of different colors do not overlap is equivalent to deleting a minimum weight set of nodes in $G$ so that the remaining nodes form an independent set (i.e., the \textsc{Vertex Cover} problem).
\hfill $\square$

\subsection{Details and pseudocode for combinatorial algorithms}
Algorithm~\ref{alg:pitts} is pseudocode for Pitt's algorithm~\cite{pitt1985simple} for weighted \textsc{Vertex Cover}. Algorithm~\ref{alg:implicitpitts} is pseudocode for \textsf{PittColoring}, which is our 2-approximation for \minecc{} that implicitly applies Pitt's algorithm to the \textsc{Vertex Cover} instance $G$ that is approximation-equivalent to solving the \minecc{} objective on an edge colored hypergraph $H = (V,E,C, \ell)$. Importantly, this algorithm never explicitly forms $G$, but is still able to identify a set of edges in $H$ to delete which correspond to a vertex cover in $G$. Full details for how to accomplish this is included in the main text, along with justification for the $O(\sum_{e \in E} |e|)$ runtime. The fact that this is a 2-approximation for \minecc{} (even in the weighted case) follows directly from the fact that this is implicitly providing a valid $2$-approximate vertex cover in the (unformed) reduced graph $G$ via Pitt's algorithm.

\paragraph{\textsf{MatchColoring}.}
Our algorithm \textsf{MatchColoring} is a 2-approximation for the \emph{unweighted} version of \minecc{}. This algorithm implicitly applies the common strategy of approximating an unweighted \textsc{Vertex Cover} problem by finding a maximal matching and then adding both endpoints of each edge in the maximal matching to form a cover. For our problem, this corresponds to finding a maximal edge-disjoint set of bad edge pairs, and deleting all edges in those bad edge pairs. This can also be implemented in $O(\sum_e |e|)$ time by carefully iterating through bad edge pairs in the same way that \textsf{PittColoring} does. This approach has the additional advantage of providing a deterministic 2-approximate solution for unweighted \minecc{} (\textsf{PittColoring} has an \emph{expected} 2-approximation). Additionally, the maximal matching (i.e., maximal edge-disjoint set of bad edge pairs) that it computes provides a lower bound on the \minecc{} objective, which can be used in practice to check a posteriori approximation guarantees for other methods that do not come with approximation guarantees of their own.

\paragraph{Improved \textsf{Hybrid} algorithm.}
Our \textsf{Hybrid} algorithm combines the strengths of \textsf{MatchColoring} and \textsf{MajorityVote}. \textsf{MatchColoring} finds a maximal edge-disjoint set of bad hyperedge pairs, which provides a useful lower bound on the optimal \minecc{} objective. It deletes all edges in this maximal set, which is exactly within a factor 2 of this lower bound. In practice, this often deletes more edges than is strictly necessary, and leads to a large set of nodes that are contained in no remaining edge. From a theoretical perspective, these can be assigned any color and the algorithm will still be a 2-approximation. \textsf{Hybrid} works simply by assigning these nodes to have the color determined by the \textsf{MajorityVote} method (i.e, the edge color that the node participates in the most). This often ends up satisfying many edges that were deleted by \textsf{MatchColoring} even though they did not actually need to be deleted in order to yield a hypergraph with no bad edge pairs. Because the first step of \textsf{Hybrid} is to run \textsf{MatchColoring} and assign colors based on this algorithm, in theory is still enjoys the theoretical 2-approximation guarantee, and it can still make use of the explicit lower bound computed by this \textsf{MatchColoring}.

\section{Experimental Results}
\label{sec:appexp}

\begin{table}[t]
	\caption{Statistics for five benchmark edge-colored hypergraphs from~\citet{amburg2020clustering}.}
	\label{tab:size-run}
	\centering
		\begin{tabular}{l     l l l l }
			\toprule
			\emph{Dataset} & $\lvert V \rvert$ & $\lvert E \rvert$ & $r$ & $k$ \\
			\midrule
\texttt{Brain} & 638 & 21180 & 2 & 2 \\
\texttt{Cooking} & 6714 & 39774 & 65 & 20  \\
\texttt{DAWN} & 2109 & 87104 & 22 & 10  \\
\texttt{MAG-10} & 80198 & 51889 & 25 & 10  \\
\texttt{Walmart-Trips} & 88837 & 65898 & 25 & 44 \\
			\bottomrule
	\end{tabular}
\end{table} 

\begin{table*}[t]
	\caption{Approximation factors (ratio between algorithm output and LP lower bound), edge satisfaction (percentage of edges satisfied by the clustering), and runtimes obtained by various algorithms on five benchmark edge-colored hypergraphs from~\citet{amburg2020clustering}.
		\textsf{MajorityVote} (MV) is deterministic. Our vertex cover algorithms are randomized, so we list the mean values and standard deviations over 50 runs. \textsf{Pitt+} and \textsf{Match+} correspond to running \textsf{PittColoring} and \textsf{MatchColoring} 100 times and taking the best clustering found. 
	}
	\label{tab:approx}
	\centering
		\begin{tabular}{l     l l l l l l }
			\toprule
			& \multicolumn{6}{c}{\textbf{Ratio to LP lower bound}}  \\ 
			\cmidrule(lr){2-7} \
			\emph{Dataset} & \textsf{LP} & \textsf{MV} & \textsf{PittColoring} & \textsf{MatchColoring} &\textsf{Pitt+} & \textsf{Match+} \\
			\midrule
			\texttt{Brain}  & 1.0 & 1.01 & 1.07 {\small $\pm 0.01$} & 1.08 {\small $\pm 0.01$}& 1.06 {\small $\pm 0.01$}& 1.07 {\small $\pm 0.01$} \\
			\texttt{Cooking}  & 1.0 & 1.21 & 1.23 {\small $\pm 0.01$} & 1.23 {\small $\pm 0.0$}& 1.22 {\small $\pm 0.0$}& 1.22 {\small $\pm 0.01$} \\
			\texttt{DAWN}  & 1.0 & 1.09 & 1.57 {\small $\pm 0.04$} & 1.58 {\small $\pm 0.03$}& 1.54 {\small $\pm 0.03$}& 1.54 {\small $\pm 0.03$} \\
			\texttt{MAG-10}  & 1.0 & 1.18 & 1.39 {\small $\pm 0.01$} & 1.49 {\small $\pm 0.0$}& 1.37 {\small $\pm 0.0$}& 1.48 {\small $\pm 0.0$} \\
			\texttt{Walmart-Trips}  & 1.0 & 1.2 & 1.13 {\small $\pm 0.0$} & 1.18 {\small $\pm 0.0$}& 1.13 {\small $\pm 0.0$}& 1.17 {\small $\pm 0.0$} \\
			\toprule
			& \multicolumn{6}{c}{\textbf{Edge Satisfaction}}  \\ 
			\cmidrule(lr){2-7} \
			\emph{Dataset} & \textsf{LP} & \textsf{MV} & \textsf{PittColoring} & \textsf{MatchColoring} &\textsf{Pitt+} & \textsf{Match+} \\
			\midrule
		\texttt{Brain}   & 0.64 & 0.64 & 0.62 {\small $\pm 0.0$} & 0.62 {\small $\pm 0.0$}& 0.62 {\small $\pm 0.0$}& 0.62 {\small $\pm 0.0$}\\
		\texttt{Cooking}   & 0.2 & 0.03 & 0.01 {\small $\pm 0.0$} & 0.01 {\small $\pm 0.0$}& 0.02 {\small $\pm 0.0$}& 0.02 {\small $\pm 0.0$}\\
		\texttt{DAWN}   & 0.53 & 0.48 & 0.26 {\small $\pm 0.02$} & 0.25 {\small $\pm 0.02$}& 0.27 {\small $\pm 0.01$}& 0.27 {\small $\pm 0.01$}\\
		\texttt{MAG-10}   & 0.62 & 0.55 & 0.47 {\small $\pm 0.0$} & 0.44 {\small $\pm 0.0$}& 0.48 {\small $\pm 0.0$}& 0.44 {\small $\pm 0.0$}\\
		\texttt{Walmart-Trips}   & 0.24 & 0.09 & 0.14 {\small $\pm 0.0$} & 0.11 {\small $\pm 0.0$}& 0.14 {\small $\pm 0.0$}& 0.11 {\small $\pm 0.0$}\\
		\toprule
			& \multicolumn{6}{c}{\textbf{Runtime}}  \\ 
			\cmidrule(lr){2-7} \
			\emph{Dataset} & \textsf{LP} & \textsf{MV} & \textsf{PittColoring} & \textsf{MatchColoring} &\textsf{Pitt+} & \textsf{Match+} \\
			\midrule
			\texttt{Brain}  &  0.52 & 0.001 & 0.006 {\small $\pm 0.028$} & 0.002 {\small $\pm 0.001$}& 0.12 {\small $\pm 0.006$}& 0.028 {\small $\pm 0.004$} \\
			\texttt{Cooking}  &  127.01 & 0.002 & 0.01 {\small $\pm 0.003$} & 0.008 {\small $\pm 0.007$}& 0.525 {\small $\pm 0.012$}& 0.228 {\small $\pm 0.011$} \\
			\texttt{DAWN}  &  4.23 & 0.003 & 0.01 {\small $\pm 0.004$} & 0.005 {\small $\pm 0.003$}& 0.779 {\small $\pm 0.038$}& 0.221 {\small $\pm 0.007$} \\
			\texttt{MAG-10}  &  17.0 & 0.012 & 0.04 {\small $\pm 0.006$} & 0.04 {\small $\pm 0.016$}& 0.675 {\small $\pm 0.037$}& 0.414 {\small $\pm 0.06$} \\
			\texttt{Walmart-Trips}  &  321.09 & 0.07 & 0.053 {\small $\pm 0.007$} & 0.05 {\small $\pm 0.009$}& 1.545 {\small $\pm 0.083$}& 1.044 {\small $\pm 0.067$} \\
			\bottomrule
	\end{tabular}
\end{table*} 

Previous work has already shown that the LP relaxation can often be solved on real-world graphs and hypergraphs with tens of thousands of nodes and many large hyperedges within minutes~\cite{amburg2020clustering,amburg2022diverse}. The LP relaxation often even finds the optimal solution even using the simplest rounding techniques. Our improved LP approximation results therefore mainly serve as a way bring the best theoretical results closer to what has been observed in practice. On the other hand, our combinatorial algorithms provide a very practical new way to obtain approximate solutions with strong guarantees on a much larger scale than was ever possible previously. We implement our combinatorial algorithms in Julia. All experiments are run on a Mac laptop with 16GB of RAM. Source code and all data is publicly available on the Github repo \url{https://github.com/nveldt/ImprovedECC}.

\paragraph{Experiments on previous benchmark hypergraphs}
Table~\ref{tab:size-run} presents statistics for five benchmark edge-colored clustering hypergraphs first considered by~\citet{amburg2020clustering}. In Table~\ref{tab:approx}, we show the performance of various algorithms in approximating ECC on these instances. The LP relaxation produces integral or near integral results on these datasets, so it is enough to apply a very simple rounding scheme (for each $v \in V$, assign it color $i^* = \argmin_i x_v^i$) to obtain an optimal or near optimal solution to \minecc{}. \textsf{PittColoring} and \textsf{MatchColoring} exhibit a very straightforward tradeoff when compared with LP: they are far more scalable, at the expense of worse approximation guarantees. They nevertheless produce clusterings that are within a small factor of the LP lower bound (Approximation Factor in Table~\ref{tab:approx}), and only take a fraction of a second.

The comparison between our combinatorial methods and \textsf{MajorityVote} is arguably more interesting. Although \textsf{MajorityVote} only has an $r$-approximation guarantee in theory, it often produces better clusterings than our combinatorial methods. However, on the \texttt{Walmart} dataset, \textsf{PittColoring} and \textsf{MatchColoring} produce better results. This may be due to the fact that this dataset has a much larger maximum hyperedge size, though it is not entirely clear. All of these combinatorial methods have the same asymptotic runtime of $O(\sum_e |e|)$, and very similar running times in practice. We have also reported results for running \textsf{PittColoring} and \textsf{MatchColoring} 100 times and taking the best result (\textsf{Pitt+} and  \textsf{Match+} in Table~\ref{tab:approx}). This is still very fast, and produces slightly better approximation factors. For each run, we randomize the order in which nodes are visited, which changes the order in which bad edge pairs are visited and therefore changes which hyperedges are deleted. \textsf{PittColoring} has additional randomization in how it chooses which edge to delete in a bad edge pair.

\paragraph{Experiments on the Trivago Hypergraph}
The \texttt{Trivago} hypergraph is derived from the 2019 ACM RecSys Challenge dataset (\url{https://recsys.acm.org/recsys19/challenge/}). It is closely related to the previous Trivago hypergraph without edge labels available at \url{https://www.cs.cornell.edu/~arb/data/trivago-clicks/}. The key difference is that in processing the data, we have kept track of the user location platform when parsing the website browsing data, which allows us to obtain edge country labels. We discard nodes that have labels that do not correspond to any of the hyperedge labels. 

Table~\ref{tab:trivago} in the main text displays the number of mistakes, edge satisfaction, a posteriori approximation ratio (for all but \textsf{PittColoring}, which does not compute an explicit lower bound), accuracy, and runtime of each combinatorial method, all averaged over 50 different trials. There is little to no variation between different runs. The a posteriori approximation guarantee for \textsf{MatchColoring} and \textsf{Hybrid} is obtained by taking the number of mistakes made by the method and dividing by the lower bound computed by \textsf{MatchColoring}. \textsf{MajorityVote} does not compute as good of a lower bound. In theory, this provides an $r$-approximation, because this method can be seen as the optimal solution to an alternative edge-colored clustering objective where the penalty at each hyperedge is the \emph{number} of nodes in the hyperedge with a color that is different from the hyperedge's color. This is always within a factor $r$ of the optimal \minecc{} objective. Using this fact, it is not hard to prove that the optimal \minecc{} objective will be lower bounded by
\begin{equation}
	\label{lower}
\frac{\sum_{e \in E} \sum_{v \in e} \mathbbm{1} \left( Y_\text{MV}[v] \neq \ell(e) \right)}{r},
\end{equation}
where $Y_{MV}$ is the \textsf{MajorityVote} clustering. Dividing the \minecc{} objective for \textsf{MajorityVote} by this lower bound on the optimal \minecc{} solution produces the approximation in Table~\ref{tab:trivago}. Although this is better than the theoretical 85-approximation, it is still very poor. 



\end{document}